\newtheorem{theorem}{Theorem}[section]
\newtheorem{lemma}[theorem]{Lemma}
\newtheorem{proposition}[theorem]{Proposition}
\newtheorem{corollary}[theorem]{Corollary}
\newtheorem{claim}[theorem]{Claim}
\newtheorem{definition}[theorem]{Definition}
\newcommand{\SC}{O(\log N + (\log r)\cdot\log(1/\epsilon)+(\log r)\cdot \log\log r)} 
\newcommand{\dmax}{d_{\mathrm{max}}}
\newcommand{\dpr}{\text{\textcircled{p}}_{\mathcal{H}}}
\newcommand{\R}{\mathbb{R}}
\newcommand{\Space}{\mathrm{Space}}
\newcommand{\class}[1]{\mathbf{#1}}
\newcommand{\RL}{\class{RL}}
\renewcommand{\L}{\class{L}}
\newcommand{\poly}{\mathrm{poly}}
\newcommand{\polylog}{\mathrm{polylog}}
\def\textprob#1{\textmd{\textsc{#1}}}
\newcommand{\USTConn}{\textprob{Undirected S-T Connectivity}}
\newcommand*\samethanks[1][\value{footnote}]{\footnotemark[#1]}
\newcounter{myalgctr}
\newenvironment{myalg}{
   \bigskip\noindent
   \refstepcounter{myalgctr}
   \newline%
   }{\par\bigskip}  
\numberwithin{myalgctr}{section}
\newtheoremstyle{named}{}{}{\itshape}{}{\bfseries}{.}{.5em}{\thmnote{#3}}
\theoremstyle{named}
\newtheorem{namedtheorem}{Theorem}[section]
\title{Deterministic Approximation of Random Walks in Small Space}
\date{\today}
\author{Jack Murtagh\thanks{Supported by NSF grant CCF-1763299.} \\School of Engineering \& Applied Sciences\\
Harvard University\\ 
Cambridge, MA USA\\
\texttt{jmurtagh@g.harvard.edu}\\
\url{http://scholar.harvard.edu/jmurtagh} \and Omer Reingold\thanks{Supported by NSF grant CCF-1763311.} \\ Computer Science Department\\ Stanford University\\ Stanford, CA USA \\ \texttt{reingold@stanford.edu}  \and Aaron Sidford \\ Management Science \& Engineering \\ Stanford University \\ Stanford, CA USA \\ \texttt{sidford@stanford.edu} \\ \url{http://www.aaronsidford.com} \and 
Salil Vadhan\samethanks[1]\\School of Engineering \& Applied Sciences\\
Harvard University\\ 
Cambridge, MA USA\\
\texttt{salil\_vadhan@harvard.edu}\\
\url{http://salil.seas.harvard.edu/}}
\begin{document}

\maketitle
\begin{abstract}
We give a deterministic, nearly logarithmic-space algorithm that given an undirected graph $G$, a positive integer $r$, and a set $S$ of vertices, approximates the conductance of $S$ in the $r$-step random walk on $G$ to within a factor of $1+\epsilon$, where $\epsilon>0$ is an arbitrarily small constant. More generally, our algorithm computes an $\epsilon$-spectral approximation to the normalized Laplacian of the $r$-step walk.

Our algorithm combines the derandomized square graph operation \cite{RozenmanVa05}, which we recently used for solving Laplacian systems in nearly logarithmic space \cite{derandbeyond2017}, with ideas from \cite{cheng2015}, which gave an algorithm that is time-efficient (while ours is space-efficient) and randomized (while ours is deterministic) for the case of even $r$ (while ours works for all $r$). Along the way, we provide some new results that generalize technical machinery and yield improvements over previous work. First, we obtain a nearly linear-time randomized algorithm for computing a spectral approximation to the  normalized Laplacian for odd $r$. Second, we define and analyze a generalization of the derandomized square for irregular graphs and for sparsifying the product of two distinct graphs. As part of this generalization, we also give a strongly explicit construction of expander graphs of every size.
\end{abstract}

\newpage
\section{Introduction}
\label{sect:intro}
Random walks provide the most dramatic example of the power of randomized algorithms for solving computational problems in the space-bounded setting, as they only require logarithmic space (to store the current state or vertex). In particular, since undirected graphs have polynomial cover time, random walks give a randomized logspace ($\RL$) algorithm for $\USTConn$ \cite{AleliunasKaLiLoRa79}. Reingold \cite{Reingold08} showed that this algorithm can be derandomized, and hence that $\USTConn$ is in deterministic logspace ($\L$). However, Reingold's algorithm does not match the full power of random walks on undirected graphs; in particular it does not allow us to approximate properties of the random walk at lengths below the mixing time. 

In this work, we provide a nearly logarithmic-space algorithm for approximating properties of arbitrary-length random walks on an undirected graph, in particular the \emph{conductance} of any set of vertices:

\begin{definition}
Let $G=(V,E)$ be an undirected graph, $r$ a positive integer, and $S\subseteq V$ a set of vertices.  The {\em conductance} of $S$ under the $r$-step random walk on $G$ is defined as 
\[
\Phi_r(S)=\Pr[V_r\not \in S | V_0\in S],
\]
where $V_0,V_1,\ldots,V_r$ is a random walk on $G$ started at the stationary distribution  $\Pr[V_0=v]=\deg(v)/2|E|$.
\end{definition}

\begin{theorem}
\label{thm:conductance}
There is a deterministic algorithm that given an undirected multigraph $G$ on $n$ vertices, a positive integer $r$, a set of vertices $S$, and $\epsilon>0$, computes a number $\tilde{\Phi}$ such that
\[
(1-\epsilon)\cdot \Phi_r(S) \leq \tilde{\Phi} \leq (1+\epsilon)\cdot \Phi_r(S)
\]
and runs in space $\SC$, where $N$ is the bit length of the input graph $G$. 
\end{theorem}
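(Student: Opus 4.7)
The plan is to reduce approximating $\Phi_r(S)$ to computing a spectral approximation of the normalized Laplacian $\mathcal L_r$ of the $r$-step walk on $G$, and then to build that approximation by iterated derandomized squaring. Let $A$ be the adjacency matrix, $D$ the diagonal degree matrix, $W = D^{-1}A$ the random walk matrix, and $M = D^{-1/2}AD^{-1/2}$ the symmetric random walk matrix. Using $W^r \mathbf{1} = \mathbf{1}$ together with the identity $DW^r = D^{1/2} M^r D^{1/2}$ (which holds because $G$ is undirected), a short manipulation of the definition gives
\[
\Phi_r(S) = \frac{y^T (I - M^r) y}{y^T y}, \qquad y = D^{1/2}\mathbf{1}_S,
\]
so $\Phi_r(S) = y^T \mathcal L_r y / y^T y$ where $\mathcal L_r = I - M^r$ is exactly the normalized Laplacian of the $r$-step walk. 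Hence any symmetric matrix $\tilde{\mathcal L}_r$ that $\epsilon$-spectrally approximates $\mathcal L_r$ yields a $(1\pm\epsilon)$ multiplicative approximation of $\Phi_r(S)$ by evaluating the corresponding quadratic form, and this evaluation is a single weighted sum computable in $O(\log N)$ space given query access to the entries of $\tilde{\mathcal L}_r$.

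Next I would construct $\tilde{\mathcal L}_r$ by iterating the derandomized square. Write $r$ in binary as $r = \sum_i 2^{k_i}$, which has $O(\log r)$ nonzero terms. For each $2^{k_i}$ I would approximate the $2^{k_i}$-step walk by $k_i$ successive derandomized squares $G \ds H$, each using a fresh expander $H$ of suitable spectral quality. I would then combine these pieces into an approximation of $G^r$ via the generalization of the derandomized square to products of two distinct (possibly irregular) graphs promised in the abstract. The odd case---corresponding to including $2^0 = 1$ in the binary expansion---is handled uniformly by this product operation, which is the main point of departure from the even-only construction of Cheng et al. Error control is level-by-level: by arranging that each of the $O(\log r)$ operations contributes spectral error at most $\epsilon/O(\log r)$, the total error telescopes to $\epsilon$.

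The space analysis mirrors the recursive derandomized composition used in our earlier work on Laplacian systems. To achieve per-level error $\epsilon/\log r$, the auxiliary expander must have degree $\poly(\log r / \epsilon)$, so each level costs $O(\log\log r + \log(1/\epsilon))$ bits of state; with $O(\log r)$ levels composed recursively and an $O(\log N)$-bit pointer into the input graph, the total space comes out to $\SC$. The main obstacle, and where the novelty lies, is proving a clean spectral-approximation bound for the generalized derandomized square on two distinct \emph{irregular} graphs, so that the error composes additively rather than multiplicatively across the $O(\log r)$ levels; this extension of the Rozenman--Vadhan analysis must dovetail with the strongly explicit expanders of every size that the paper constructs, so that the expander at each level can be queried in only $O(\log\log r + \log(1/\epsilon))$ bits.
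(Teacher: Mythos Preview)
Your reduction of $\Phi_r(S)$ to the quadratic form $y^T(I-M^r)y/(y^Ty)$ with $y=D^{1/2}\mathbf{1}_S$ is correct and is exactly what the paper does in its proof of Theorem~\ref{thm:conductance} (Section~\ref{sect:cors}): one then plugs $y$ into the approximate Laplacian produced by Theorem~\ref{thm:main}.

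The construction of the approximant, however, has a genuine gap. You propose to approximate each $M^{2^{k_i}}$ separately by repeated derandomized squaring and then combine these pieces via derandomized products. The paper explicitly flags this ``standard approach'' as problematic (item~2 in the list of challenges in Section~\ref{sect:intro}): multiplying two spectral approximations does not in general yield a spectral approximation of the product. Theorem~\ref{thm:dprapprox} only tells you that $G_0\dpr G_1$ approximates $I-\tfrac12(M_0M_1+M_1M_0)$; to go from there to an approximation of $I-M^{a+b}$ when $I-M_0\approx_\epsilon I-M^a$ and $I-M_1\approx_\epsilon I-M^b$ you would need a product-preservation statement for two approximate factors, which is false in general. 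The paper's Lemma~\ref{lem:plusapprox} proves preservation only when one factor is \emph{exact} (and commutes with the target, which is why it must be a power of $M$ itself). Accordingly the paper does not combine independent power-of-two approximations; it runs a Horner-style recursion (Algorithm~\ref{alg:main}), maintaining a single running approximant and at each step either derandomized-squaring it or taking its derandomized product with the fixed graph $G_0$ (or $G$ at the very end).

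A second issue you do not address is positive semidefiniteness. Both Lemma~\ref{lem:sq} (squaring preserves approximation) and Lemma~\ref{lem:plusapprox} (plus-one preserves approximation) require the matrix being approximated to be PSD, but $M^k$ need not be PSD when $k$ is odd. This is why the paper first builds an auxiliary graph $G_0$ with $M_0$ PSD and $I-M_0\approx I-M^2$ (Lemma~\ref{lem:buildpsdG}), carries out the entire recursion in powers of $M_0$, and only at the last step multiplies once by $G$ if $r$ is odd. Without this device your level-by-level error telescoping would break down at the first odd intermediate exponent. So while your identification of ``the main obstacle'' as the derandomized-product analysis for irregular graphs is accurate, that analysis alone (Theorem~\ref{thm:dprapprox}) is not sufficient; the additional structural lemmas controlling how approximation interacts with squaring and with multiplication by an exact factor are equally essential.
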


Previously, approximating conductance could be done in $O(\log^{3/2}(N/\epsilon)+\log\log r)$ space, which follows from Saks' and Zhou's proof that $\RL$ is in $\L^{3/2}$ \cite{SaksZh99}. 

Two interesting parameter regimes where we improve the Saks-Zhou bound are when $r=1/\epsilon=2^{O(\sqrt{\log N})}$, in which case our algorithm runs in space $O(\log N)$, or when $\epsilon=1/\mathrm{polylog}(N)$ and $r\leq \mathrm{poly}(N)$, in which case our algorithm runs in space $\tilde{O}(\log N)$. When $r$ exceeds the $\mathrm{poly}(N)\cdot\log(1/\epsilon)$ time for random walks on undirected graphs to mix to within distance $\epsilon$ of the stationary distribution, the conductance can be approximated in space $O(\log (N/\epsilon)+\log\log r)$ by using Reingold's algorithm to find the connected components of $G$, and the bipartitions of the components that are bipartite and calculating the stationary probability of $S$ restricted to each of these pieces, which is proportional to the sum of degrees of vertices in $S$.  

We prove Theorem~\ref{thm:conductance} by providing a stronger result that with the same amount of space it is possible to compute an \emph{$\epsilon$-spectral approximation} to the \emph{normalized Laplacian} of the $r$-step random walk on $G$. 
\begin{definition}
Let $G$ be an undirected graph with adjacency matrix $A$, diagonal degree matrix $D$, and transition matrix $T=AD^{-1}$. The \emph{transition matrix for the $r$-step random walk} on $G$ is $T^r$. The \emph{normalized Laplacian} of the $r$-step random walk is the symmetric matrix $I-M^r$ for $M=D^{-1/2}AD^{-1/2}$. 
\end{definition}
Note that the normalized Laplacian can also be expressed as
$I-M^r = D^{-1/2}(I-T^r)D^{1/2}$, so it does indeed capture the behavior of $r$-step random walks on $G$.\footnote{When $G$ is irregular, the matrix $I-T^r$ is not necessarily symmetric. It is a \emph{directed Laplacian} as defined in \cite{CKPPSV16, CKPPRSV16}. See Definition \ref{def:dirlap}.} 

\begin{theorem}[Main result]
\label{thm:main}
There is a deterministic algorithm that given an undirected multigraph $G$ on $n$ vertices with normalized Laplacian $I-M$, a nonnegative integer $r$, and $\epsilon>0$, constructs an undirected multigraph $\tilde{G}$ whose normalized Laplacian $\tilde{L}$ is an $\epsilon$-spectral approximation of $L=I-M^r$. That is, for all vectors $v\in\mathbb{R}^n$
\[
(1-\epsilon)\cdot v^{T}Lv \leq v^{T}\tilde{L}v \leq (1+\epsilon)\cdot v^{T}Lv.
\]
The algorithm runs in space $\SC$, where $N$ is the bit length of the input graph $G$. 
\end{theorem}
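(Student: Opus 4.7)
The plan is to construct $\tilde{G}$ by mimicking repeated squaring: write $r$ in binary as $r=\sum_{i\in S} 2^i$, compute approximations to $M^{2^i}$ by iterating a derandomized squaring operation $i$ times starting from $M$, and combine the needed powers via a derandomized \emph{product} of two distinct graphs. Concretely, at each level $i$ I maintain an undirected multigraph $G_i$ whose normalized random walk matrix $M_i$ is a spectral approximation to $M^{2^i}$; to go from level $i$ to level $i+1$, I apply the (generalized) derandomized square, which takes a graph of degree $d_i$ together with a $d'$-regular expander on $[d_i]$ and outputs a graph of degree $d_i d'$ approximating the true square. To combine powers, I use the new generalization of the derandomized operation that sparsifies the product of two distinct graphs. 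Since the derandomized square and product are local, graph-theoretic operations (one replaces each length-two walk by a pseudorandom subset indexed by the auxiliary expander), each can be implemented in $O(\log)$ additional workspace on top of the input.

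For the spectral analysis, the key ingredient is the generalization (proven in the body of the paper) that for an irregular graph $G$ with normalized walk matrix $M$ and for an expander $H$ of second eigenvalue $\lambda$, the derandomized square $G\ds H$ has normalized walk matrix $M'$ satisfying an operator inequality of the form
\[
(1-O(\lambda))\cdot(I-M^2) \preceq I-M' \preceq (1+O(\lambda))\cdot(I-M^2),
\]
together with the analogous statement for the product of two distinct graphs. Spectral approximation composes multiplicatively, so if each of the $O(\log r)$ derandomized operations introduces error at most $\eta = \Theta(\epsilon/\log r)$, the final graph $\tilde{G}$ has $I-\tilde{M}$ a $(1\pm\epsilon)$-spectral approximation of $I-M^r$, which is exactly the conclusion of Theorem~\ref{thm:main}.

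The space bookkeeping goes as follows. To achieve per-level error $\eta$, I need auxiliary expanders of second eigenvalue $O(\eta)=O(\epsilon/\log r)$, which by the strongly explicit expander family constructed in this paper have degree $d'=\poly(\log r/\epsilon)$ and admit neighbor queries in space $O(\log d')=O(\log(1/\epsilon)+\log\log r)$. A vertex of $\tilde{G}$ is specified by a vertex of $G$ together with a sequence of $O(\log r)$ expander labels, each of bit length $O(\log(1/\epsilon)+\log\log r)$, and neighbor queries in $\tilde{G}$ are implemented recursively by unrolling the sequence of derandomized square and product operations. Using the recursive-composition trick from the authors' prior work on Laplacian systems to keep only one level's state on the stack at a time, the total workspace is $\SC$.

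The main obstacle is the spectral analysis when $G$ is irregular. For regular $G$, $M$ is symmetric and the derandomized-square analysis of Rozenman–Vadhan goes through almost verbatim, but for irregular $G$ the matrix $I-T^r$ is non-symmetric (it is a directed Laplacian), and the natural symmetric object $I-M^r$ lives in a different basis. I therefore expect the bulk of the work to be (i) stating and proving the generalized derandomized square/product inequality above for irregular graphs in the symmetric normalized basis, and (ii) verifying that the multiplicative composition of these inequalities along the binary expansion of $r$ still yields a genuine undirected multigraph whose degree sequence matches the one required for $\tilde{L}$ to be its normalized Laplacian; the latter is what forces the careful definition of the derandomized product for two distinct graphs mentioned in the abstract.
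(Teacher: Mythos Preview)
Your high-level plan has the right shape, but it overlooks two issues that the paper identifies as the essential obstacles, and without addressing them the argument does not go through.

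\textbf{Combining approximations to distinct dyadic powers does not work.} You propose to compute, for each $i$ in the binary support of $r$, an approximation $\tilde{M}_i$ to $M^{2^i}$, and then multiply the $\tilde{M}_i$ together using the derandomized product. But Theorem~\ref{thm:dprapprox} only says that the derandomized product of $\tilde{G}_a$ and $\tilde{G}_b$ approximates $I-\tfrac12(\tilde{M}_a\tilde{M}_b+\tilde{M}_b\tilde{M}_a)$; it says nothing about $I-M^{a+b}$. To bridge that gap you would need a lemma of the form ``if $I-\tilde{M}_a\approx I-M^a$ and $I-\tilde{M}_b\approx I-M^b$ then $I-\tfrac12(\tilde{M}_a\tilde{M}_b+\tilde{M}_b\tilde{M}_a)\approx I-M^{a+b}$,'' and this is exactly what fails in general (the paper flags this explicitly in the introduction). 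The paper's Lemma~\ref{lem:plusapprox} only gives the needed conclusion when one of the two factors is \emph{exact} and commutes with the target power. That is why Algorithm~\ref{alg:main} does not multiply two approximations: it processes the bits of $r$ from most significant to least, maintaining a single running approximation and, at each ``plus one'' step, multiplying by the fixed graph $G_0$ (or $G$ at the very end), never by another approximation.

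\textbf{The PSD requirement.} Both Lemma~\ref{lem:sq} (squaring preserves approximation) and Lemma~\ref{lem:plusapprox} (plus-one preserves approximation) require the target matrix to be PSD. Since $M$ need not be PSD, one cannot simply start derandomized squaring from $G$; and the usual fix of adding self-loops is unavailable because it changes the random walk. The paper's remedy is Lemma~\ref{lem:buildpsdG}: first build a graph $G_0$ whose normalized walk matrix $M_0$ is PSD and satisfies $I-M_0\approx I-M^2$, then run the square/plus-one iteration entirely in terms of $M_0$ (so every intermediate target $M_0^k$ is PSD), and use Lemma~\ref{lem:allpowersapprox} to transfer the final approximation from $I-M_0^{r'}$ back to $I-M^{2r'}$. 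Your proposal does not mention this step, and without it the inductive spectral inequalities you invoke are not justified. The irregularity of $G$, which you single out as the main obstacle, is actually the easiest of the three challenges once the derandomized product is defined as in Definition~\ref{def:derandproduct}.
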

Theorem \ref{thm:conductance} follows from Theorem \ref{thm:main} by taking $v$ to be $D^{1/2}e_S$ where $e_S$ is the characteristic vector of the set $S$ and normalizing appropriately (See Section \ref{sect:cors}). 

Our main technique for proving Theorem \ref{thm:main} is the \emph{derandomized product}, a new generalization of the \emph{derandomized square}, which was introduced by Rozenman and Vadhan \cite{RozenmanVa05} to give an alternative proof that \\$\USTConn$ is in $\L$. Our main result follows from carefully applying the derandomized product and analyzing its properties with inequalities from the theory of spectral approximation. Specifically, our analysis is inspired by the work of Cheng, Cheng, Liu, Peng, and Teng \cite{cheng2015}, who studied the approximation of random walks by randomized algorithms running in nearly linear time. We emphasize that the work of \cite{cheng2015} gives a randomized algorithm with high space complexity (but low time complexity) for approximating properties of even length walks while we give a deterministic, space-efficient algorithm for approximating properties of walks of every length. Interestingly, while the graphs in our results are all undirected, some of our analyses use techniques for spectral approximation of \emph{directed} graphs introduced by Cohen, Kelner, Peebles, Peng, Rao, Sidford, and Vladu \cite{CKPPSV16,CKPPRSV16}. 

The derandomized square can be viewed as applying the pseudorandom generator of Impagliazzo, Nisan, and Wigderson \cite{ImpagliazzoNiWi94} to random walks on labelled graphs. It is somewhat surprising that repeated derandomized squaring does not blow up the error by a factor proportional to the length of the walk being derandomized. For arbitrary branching programs, the INW generator does incur error that is linear in the length of the program. Some special cases such as  regular \cite{BravermanRaRaYe10, BrodyVe10, De11} and permutation \cite{De11, Steinke12} branching programs of constant width have been shown to have a milder error growth as a function of the walk length. Our work adds to this list by showing that properties of random walks of length $k$ on undirected graphs can be estimated in terms of spectral approximation without error accumulating linearly in $k$.

In our previous work \cite{derandbeyond2017}, we showed that the Laplacian of the derandomized square of a \emph{regular} graph spectrally approximates the Laplacian of the true square, $I-M^2$, and this was used in a recursion from \cite{PS13} to give a nearly logarithmic-space algorithm for approximately solving Laplacian systems $Lx=b$. A natural idea to approximate the Laplacian of higher powers, $I-M^r$, is to repeatedly derandomized square. This raises three challenges, and we achieve our result by showing how to overcome each: 
\begin{enumerate}
\item It is not guaranteed from \cite{derandbeyond2017} that repeated derandomized squaring preserves spectral approximation. For this, we use ideas from \cite{cheng2015} to argue that it does.
\item When $r$ is not a power of 2, the standard approach would be to write $r=b_0+2\cdot b_1+\ldots+2^z\cdot b_z$ where $b_i$ is the $i$th bit of $r$ and multiply approximations to $M^{2^{i}}$ for all $i$ such that $b_i\neq 0$. The problem is that multiplying spectral approximations of matrices does not necessarily yield a spectral approximation of their product. Our solution is to generalize the derandomized square to produce sparse approximations to the product of \emph{distinct} graphs. In particular, given $I-M$ and an approximation $I-\tilde{M}$ to $I-M^k$, our derandomized product allows us to combine $M$ and $\tilde{M}$ to approximate $I-M^{k+1}$. Although our generalized graph product is defined for undirected graphs, its analysis uses machinery for spectral approximation of directed graphs, introduced in \cite{CKPPRSV16}.
\item We cannot assume that our graph is regular without loss of generality. In contrast, \cite{Reingold08,RozenmanVa05,derandbeyond2017} could do so, since adding self-loops does not affect connectivity or solutions to Laplacian systems of $G$, however, it does affect random walks. Our solution is to define and analyze the derandomized product for irregular graphs.
\end{enumerate}

A key element in the derandomized product is a strongly explicit (i.e. neighbor relations can be computed in time $\polylog (N)$ and space $O(\log N)$) construction of expander graphs whose sizes equal the degrees of the vertices in the graphs being multiplied. Many commonly used strongly explicit expander families only contain graph sizes that are certain subsets of $\mathbb{N}$ such as powers of 2 (Cayley graphs based on \cite{NaorNa93} and \cite{AlonGoHaPe92}), perfect squares \cite{Margulis73, GabberGa81}, and other size distributions \cite{ReingoldVaWi01} or are only mildly explicit in the sense of having running time or parallel work that is $\poly(N)$ \cite{LeePS15}.  It was folklore that one can convert any sufficiently dense family of expanders (such as the aforementioned constructions) into a family of expanders of every size, and there were offhand remarks in the literature to this effect (cf. \cite{alon2008elementary}, \cite{goldreich08}), albeit without an analysis of expansion or explicitness. In Section \ref{sect:expanders}, we give a self-contained description and proof of such a reduction, with an analysis in terms of both strong explicitness and spectral expansion. Subsequent to our work, Goldreich \cite{gnote} has provided an analysis of the construction previously described in \cite{goldreich08} in terms of vertex expansion and strong explicitness, and Alon \cite{alon_inprep} has given a stronger reduction that provides nearly Ramanujan expanders (i.e. ones with nearly optimal spectral expansion as a function of the degree) of every size $n$.

Many of our techniques are inspired by Cheng, Cheng, Liu, Peng, and Teng \cite{cheng2015}, who gave two algorithms for approximating random walks. One is a nearly linear time randomized algorithm for approximating random walks of \emph{even} length and another works for all walk lengths $r$ but has a running time that is quadratic in $r$, and so only yields a nearly linear time algorithm for $r$ that is polylogarithmic in the size of the graph. In addition, \cite{JKPS17} studied the problem of computing sparse spectral approximations of random walks but the running time in their work also has a quadratic dependence on $r$. We extend these results by giving a nearly linear time randomized algorithm for computing a spectral approximation to $I-M^r$ for \emph{all} $r$. This is discussed in Section \ref{sect:cors}.

\section{Preliminaries}
\label{sect:prelims}
\subsection{Spectral Graph Theory}
Given an undirected multigraph $G$ the \emph{Laplacian} of $G$ is the symmetric matrix $D-A$, where $D$ is the diagonal matrix of vertex degrees and $A$ is the adjacency matrix of $G$. The \emph{transition matrix} of the random walk on $G$ is $T=AD^{-1}$. $T_{ij}$ is the probability that a uniformly random edge from vertex $j$ leads to vertex $i$ (i.e. the number of edges between $j$ and $i$ divided by the degree of $j$). The \emph{normalized Laplacian} of $G$ is the symmetric matrix $I-M=D^{-1/2}(D-A)D^{-1/2}$. Note that when $G$ is regular, the matrix $M=D^{-1/2}AD^{-1/2}=AD^{-1}=T$. The \emph{transition matrix of the $r$-step random walk} on $G$ is $T^r$. For all probability distributions $\pi$, $T^r\pi$ is the distribution over vertices that results from picking a random vertex according to $\pi$ and then running a random walk on $G$ for $r$ steps. The transition matrix of the $r$-step random walk on $G$ is related to the normalized Laplacian in the following way:
\[
I-M^r = D^{-1/2}(I-T^r)D^{1/2}.
\]
For undirected multigraphs, the matrix $M=D^{-1/2}AD^{-1/2}$ has real eigenvalues between $-1$ and $1$ and so $I-M^r$ has eigenvalues in $[0,2]$ and thus is positive semidefinite (PSD). The \emph{spectral norm} of a real matrix $M$, denoted $\|M\|$, is the largest singular value of $M$. That is, the square root of the largest eigenvalue of $M^TM$. When $M$ is symmetric, $\|M\|$ equals the largest eigenvalue of $M$ in absolute value. For an undirected graph $G$ with adjacency matrix $A$, we write $k\cdot G$ to denote the graph with adjacency matrix $k\cdot A$, i.e. the multigraph $G$ with all edges duplicated to have multiplicity $k$. 

Given a symmetric matrix $L$, its \emph{Moore-Penrose Pseudoinverse}, denoted $L^{\dagger}$, is the unique matrix with the same eigenvectors as $L$ such that for each eigenvalue $\lambda$ of $L$, the corresponding eigenvalue of $L^{\dagger}$ is $1/\lambda$ if $\lambda\neq 0$ and 0 otherwise. When $L$ is a Laplacian, we write $L^{\dagger/2}$ to denote the unique symmetric PSD matrix square root of the pseudoinverse of $L$. 

To measure the approximation between graphs we use spectral approximation\footnote{In \cite{derandbeyond2017}, we use an alternative definition of spectral approximation where $\tilde{L}\approx_{\epsilon}L$ if for all $v\in\mathbb{R}^n$, $e^{-\epsilon}\cdot v^{T}Lv\leq v^T\tilde{L}v\leq e^{\epsilon}\cdot v^{T}Lv$. We find Definition \ref{def:spectralapprox} more convenient for this paper.}\cite{ST04}:
\begin{definition}
\label{def:spectralapprox}
Let $L, \tilde{L}\in\mathbb{R}^{n\times n}$ be symmetric PSD matrices. We say that \emph{$\tilde{L}$ is an \emph{$\epsilon$-approximation} of $L$} (written $\tilde{L}\approx_{\epsilon}L$) if for all vectors $v\in\mathbb{R}^n$
\[
(1-\epsilon)\cdot v^{T}Lv \leq v^{T}\tilde{L}v \leq (1+\epsilon)\cdot v^{T}Lv.
\]
\end{definition}
Note that Definition \ref{def:spectralapprox} is not symmetric in $L$ and $\tilde{L}$. Spectral approximation can also be written in terms of the Loewner partial ordering of PSD matrices:
\[
(1-\epsilon)\cdot L \preceq \tilde{L} \preceq (1+\epsilon)\cdot L
\]
where for two matrices $A,B$, we write $A\preceq B$ if $B-A$ is PSD. Spectral approximation has a number of useful properties listed in the following proposition. 

\begin{proposition}
\label{prop:psdfacts}
If $W,X, Y,Z\in \R^{n\times n}$ are PSD symmetric matrices then: 
\begin{enumerate}
\item If $X\approx_{\epsilon}Y$ for $\epsilon<1$ then $Y\approx_{\epsilon/(1-\epsilon)}X$
\item If $X\approx_{\epsilon_1}Y$ and $Y\approx_{\epsilon_2}Z$ then $X\approx_{\epsilon_1 + \epsilon_2+\epsilon_1\cdot\epsilon_2}Z$ ,
\item If $X\approx_{\epsilon}Y$ and $V$ is any $n\times n$ matrix then $V^{T}XV\approx_{\epsilon}V^{T}YV$,
\item If $X\approx_{\epsilon}Y$ then $X+Z\approx_{\epsilon} Y+Z$,
\item If $W\approx_{\epsilon_1} X$ and $Y\approx_{\epsilon_2} Z$ then $W+Y\approx_{\max\{\epsilon_1,\epsilon_2\}} X+Z$, and
\item If $X\approx_{\epsilon}Y$ then $c\cdot X\approx_{\epsilon}c\cdot Y$ for all nonnegative scalars $c$
\end{enumerate}
\end{proposition}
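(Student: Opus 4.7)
The plan is to verify each of the six properties by directly unpacking Definition \ref{def:spectralapprox} into inequalities on quadratic forms $v^T(\cdot) v$ and then manipulating those inequalities. Since the hypothesis always holds for every $v \in \mathbb{R}^n$ simultaneously, each conclusion can also be established vector-by-vector. I expect the only item that requires any real manipulation to be (1); items (3)--(6) are essentially one-line arguments, and (2) is a routine product of two-sided inequalities.

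For (1), starting from $(1-\epsilon) v^T Y v \leq v^T X v \leq (1+\epsilon) v^T Y v$, I would invert to get $v^T X v /(1+\epsilon) \leq v^T Y v \leq v^T X v /(1-\epsilon)$. The upper factor is exactly $1/(1-\epsilon) = 1 + \epsilon/(1-\epsilon)$, giving the desired upper bound. For the lower factor I need $1/(1+\epsilon) \geq 1 - \epsilon/(1-\epsilon) = (1-2\epsilon)/(1-\epsilon)$, which rearranges to $(1-\epsilon) \geq (1+\epsilon)(1-2\epsilon) = 1 - \epsilon - 2\epsilon^2$, i.e. $2\epsilon^2 \geq 0$. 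For (2), chain the two hypotheses: $(1-\epsilon_1)(1-\epsilon_2) v^T Z v \leq v^T X v \leq (1+\epsilon_1)(1+\epsilon_2) v^T Z v$, then bound $(1-\epsilon_1)(1-\epsilon_2) = 1 - \epsilon_1 - \epsilon_2 + \epsilon_1\epsilon_2 \geq 1 - (\epsilon_1 + \epsilon_2 + \epsilon_1\epsilon_2)$ and $(1+\epsilon_1)(1+\epsilon_2) = 1 + \epsilon_1 + \epsilon_2 + \epsilon_1\epsilon_2$ exactly.

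For (3), apply the definitional inequality for $X \approx_\epsilon Y$ to the vector $w = Vv$; since $w^T X w = v^T (V^T X V) v$ and likewise for $Y$, the conclusion is immediate. For (4), add $v^T Z v$ (a nonnegative scalar, though nonnegativity isn't even needed for the inequalities themselves) to each side of $(1-\epsilon) v^T Y v \leq v^T X v \leq (1+\epsilon) v^T Y v$, and observe that $(1-\epsilon) v^T(Y+Z)v = (1-\epsilon)v^T Y v + (1-\epsilon) v^T Z v \leq (1-\epsilon) v^T Y v + v^T Z v$ since $v^T Z v \geq 0$ (using PSD-ness of $Z$), with the symmetric statement on the upper side. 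For (5), set $\epsilon = \max\{\epsilon_1, \epsilon_2\}$ and add the two pairs of inequalities $(1-\epsilon) v^T X v \leq v^T W v \leq (1+\epsilon) v^T X v$ and $(1-\epsilon) v^T Z v \leq v^T Y v \leq (1+\epsilon) v^T Z v$ (which hold since $\epsilon \geq \epsilon_i$). For (6), multiply the inequalities defining $X \approx_\epsilon Y$ by the nonnegative scalar $c$. The only subtle step in the whole proposition is the algebraic check in (1), and even that reduces to $\epsilon^2 \geq 0$; there is no real obstacle.
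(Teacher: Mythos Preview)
Your proposal is correct; each item is handled by the natural direct manipulation of the quadratic-form inequalities in Definition~\ref{def:spectralapprox}, and the only nontrivial check (the algebra in item~(1)) is done properly. The paper itself states Proposition~\ref{prop:psdfacts} without proof, treating these as standard facts about spectral approximation, so there is nothing to compare against; your write-up is exactly the kind of routine verification one would supply if asked to fill in the details.
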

For regular undirected graphs, we use the measure introduced by \cite{Mihail89} for the rate at which a random walk converges to the uniform distribution.
\begin{definition}[\cite{Mihail89}] 
\label{def:expansion}
Let $G$ be a regular undirected graph with transition matrix $T$. Define
\[
\lambda(G)\coloneqq\max_{\substack{v\perp\vec{1}\\ v\neq 0}}\frac{\|Tv\|}{\|v\|}=\mathrm{2nd~largest ~absolute~value ~of ~the~ eigenvalues~ of ~}T\in[0,1].
\]
$1-\lambda(G)$ is called the \emph{spectral gap} of $G$.
\end{definition}
$\lambda(G)$ is known to be a measure of how well-connected a graph is. The smaller $\lambda(G)$, the faster a random walk on $G$ converges to the uniform distribution. Graphs $G$ with $\lambda(G)$ bounded away from 1 are called \emph{expanders}. Expanders can equivalently be characterized as graphs that spectrally approximate the complete graph. This is formalized in the next lemma.

\begin{lemma}
\label{lem:expanderapprox}
Let $H$ be a $c$-regular undirected multigraph on $n$ vertices with transition matrix $T$ and let $J\in\mathbb{R}^{n\times n}$ be a matrix with $1/n$ in every entry (i.e. $J$ is the transition matrix of the complete graph with a self loop on every vertex). Then $\lambda(H)\leq \lambda$ if and only if $I-T\approx_{\lambda} I-J$.
\end{lemma}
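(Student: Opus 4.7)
The plan is to reduce the spectral approximation condition to an inequality about Rayleigh quotients of $T$ on the subspace $\vec{1}^{\perp}$, and then identify this with $\lambda(H)$ by the variational characterization of eigenvalues for symmetric matrices.

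First I would set up the relevant spectral structure. Since $H$ is $c$-regular, $T = A/c$ is a symmetric matrix, hence has real eigenvalues and an orthonormal eigenbasis; moreover $T\vec{1} = \vec{1}$. Similarly $J = \tfrac{1}{n}\vec{1}\vec{1}^T$, so $J\vec{1} = \vec{1}$ and $Jw = 0$ for every $w \perp \vec{1}$. Consequently, both $I - T$ and $I - J$ annihilate $\vec{1}$. Given an arbitrary $v \in \mathbb{R}^n$, decompose $v = \alpha\vec{1} + w$ with $w \perp \vec{1}$. Using the observations above together with the fact that $\vec{1}^T(I - T)w = ((I-T)\vec{1})^T w = 0$, I would show
\[
v^T(I - J)v = \|w\|^2 \qquad\text{and}\qquad v^T(I - T)v = \|w\|^2 - w^T T w.
\]

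Plugging these expressions into the definition of $I - T \approx_{\lambda} I - J$, the condition becomes
\[
(1 - \lambda)\|w\|^2 \;\leq\; \|w\|^2 - w^T T w \;\leq\; (1 + \lambda)\|w\|^2 \qquad \text{for all } w \perp \vec{1},
\]
which rearranges to $|w^T T w| \leq \lambda \|w\|^2$ for every $w \perp \vec{1}$. The case $\alpha \vec{1}$ in the original decomposition contributes $0$ on both sides and is automatic.

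Finally, I would invoke symmetry of $T$. Because $T$ is symmetric and $\vec{1}^{\perp}$ is $T$-invariant, the restriction $T|_{\vec{1}^{\perp}}$ has real eigenvalues $\mu_2,\ldots,\mu_n$, and for symmetric operators the supremum of $|w^T T w|/\|w\|^2$ over the subspace equals $\max_i |\mu_i|$. By Definition \ref{def:expansion}, this maximum is exactly $\lambda(H)$ (using that for symmetric $T$, $\|Tw\|/\|w\|$ attains its maximum on eigenvectors and equals the largest eigenvalue in absolute value). Thus the Rayleigh condition of the previous paragraph holds for all $w \perp \vec{1}$ if and only if $\lambda(H) \leq \lambda$, establishing both directions simultaneously. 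There is no real obstacle here; the only point to handle carefully is ensuring that the decomposition $v = \alpha\vec{1} + w$ makes the $\vec{1}$-component invisible on both sides of the approximation, which is why symmetry of $T$ (so that $\vec{1}^{\perp}$ is $T$-invariant) is used implicitly.
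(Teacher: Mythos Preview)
Your proposal is correct and follows essentially the same approach as the paper: both arguments exploit that $I-T$ and $I-J$ act trivially on $\mathrm{span}(\vec{1})$ and reduce the spectral approximation condition to the eigenvalue bound $|\lambda_i|\leq\lambda$ on $\vec{1}^{\perp}$. The only cosmetic difference is that the paper conjugates by the full orthonormal eigenbasis of $T$ to make both matrices diagonal, whereas you work directly with the decomposition $v=\alpha\vec{1}+w$ and the Rayleigh quotient on $\vec{1}^{\perp}$; the mathematical content is the same.
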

A proof of Lemma \ref{lem:expanderapprox} can be found in Appendix \ref{app:expanderapprox}. In \cite{CKPPRSV16} Cohen, Kelner, Peebles, Peng, Rao, Sidford, and Vladu introduced a definition of spectral approximation for \emph{asymmetric matrices}. Although the results in our paper only concern undirected graphs, some of our proofs use machinery from the theory of directed spectral approximation.  

\begin{definition}[Directed Laplacian \cite{CKPPSV16,CKPPRSV16}]
\label{def:dirlap}
A matrix $L\in\mathbb{R}^{n\times n}$ is called a \emph{directed Laplacian} if $L_{ij}\leq 0$ for all $i\neq j$ and $L_{ii}=-\sum_{j\neq i}L_{ji}$ for all $i\in[n]$. The associated directed graph has $n$ vertices and an edge $(i,j)$ of weight $-L_{ji}$ for all $i\neq j\in[n]$ with $L_{ji}\neq 0$.
\end{definition}

\begin{definition}[Asymmetric Matrix Approximation \cite{CKPPRSV16}]
\label{def:dirapprox}
Let $\tilde{L}$ and $L$ be (possibly asymmetric) matrices such that $U=(L+L^T)/2$ is PSD. We say that $\tilde{L}$ is a \emph{directed} $\epsilon$-\emph{approximation} of $L$ if:
\begin{enumerate}
\item $\mathrm{ker}(U)\subseteq \mathrm{ker}(\tilde{L}-L)\cap\mathrm{ker}((\tilde{L}-L)^T)$, and
\item $\left\|U^{\dagger/2}(\tilde{L}-L)U^{\dagger/2}\right\|_2 \leq \epsilon$
\end{enumerate}
\end{definition}
Below we state some useful lemmas about directed spectral approximation. The first gives an equivalent formulation of Definition \ref{def:dirapprox}.

\begin{lemma}[\cite{CKPPRSV16} Lemma 3.5]
\label{lem:dirapprox}
Let $L\in\mathbb{R}^{n\times n}$ be a (possibly asymmetric) matrix and let $U=(L+L^T)/2$. A matrix $\tilde{L}$ is a directed $\epsilon$-approximation of $L$ if and only if for all vectors $x,y\in\mathbb{R}^n$
\[
x^T(\tilde{L}-L)y\leq \frac{\epsilon}{2}\cdot(x^TUx + y^TUy).
\]
\end{lemma}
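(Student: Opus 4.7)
The plan is to prove the two implications of the equivalence, linking them through the error matrix $E \coloneqq \tilde L - L$ and its ``normalized'' version $A \coloneqq U^{\dagger/2} E U^{\dagger/2}$. The main structural observation is that the kernel condition in Definition \ref{def:dirapprox} is precisely what is needed to factor $E = U^{1/2} A U^{1/2}$, at which point the bilinear form $x^T E y$ can be analyzed in terms of $\|A\|$ and the $U$-weighted norms $x^T U x$ and $y^T U y$.

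For the forward direction, I would argue as follows. The kernel condition gives $E v = 0$ and $E^T v = 0$ for every $v \in \ker(U)$, so $E = P E P$ where $P = U^{1/2} U^{\dagger/2}$ is the orthogonal projection onto $\mathrm{range}(U)$; this yields the factorization $E = U^{1/2} A U^{1/2}$. Substituting into the bilinear form gives $x^T E y = (U^{1/2} x)^T A (U^{1/2} y)$, and Cauchy--Schwarz with $\|A\| \le \epsilon$ produces $|x^T E y| \le \epsilon \sqrt{(x^T U x)(y^T U y)}$; the AM--GM inequality then gives the desired $\tfrac{\epsilon}{2}(x^T U x + y^T U y)$ bound.

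For the reverse direction, the kernel condition is extracted first: if $v \in \ker(U)$, then plugging $y = tv$ (and symmetrically $x = tv$) into the hypothesis makes the right-hand side constant in $t$ (since $v^T U v = 0$) while the left-hand side is linear in $t$, so sending $t \to \pm\infty$ forces $x^T E v = 0$ and $v^T E y = 0$ for all $x, y$, hence $v \in \ker(E) \cap \ker(E^T)$. For the spectral bound, I would pick arbitrary unit vectors $u, w \in \mathrm{range}(U)$, set $x = U^{\dagger/2} u$ and $y = U^{\dagger/2} w$, and note that $x^T E y = u^T A w$ while $x^T U x = \|u\|^2$ and $y^T U y = \|w\|^2$. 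The hypothesis then reads $u^T A w \le \tfrac{\epsilon}{2}(\|u\|^2 + \|w\|^2)$, and the key trick is to homogenize by replacing $(u, w)$ with $(tu,\, w/t)$, which leaves $u^T A w$ unchanged, and optimizing over $t > 0$; choosing $t^2 = \|w\|/\|u\|$ turns the arithmetic mean on the right into a geometric mean, yielding $u^T A w \le \epsilon \|u\|\|w\|$, and applied to sign-flipped vectors this gives $\|A\| \le \epsilon$.

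The main obstacle is the homogenization step in the reverse direction: the hypothesis naturally delivers an AM-style bound on $u^T A w$, but the operator-norm bound asks for a GM-style bound, and it is the bilinearity of $u^T A w$ that lets one convert between them via the $t$-rescaling. A smaller bookkeeping item is making sure the $U^{1/2}$ and $U^{\dagger/2}$ manipulations are clean, which is handled automatically once one restricts attention to vectors in $\mathrm{range}(U)$, thanks to the kernel condition doing exactly the right cancellation.
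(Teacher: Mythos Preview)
The paper does not prove this lemma; it is quoted from \cite{CKPPRSV16} and used as a black box. Your argument is correct and is essentially the standard proof: the kernel condition is exactly the statement $E = U^{1/2} A U^{1/2}$ for $A = U^{\dagger/2} E U^{\dagger/2}$, after which the forward direction is Cauchy--Schwarz plus AM--GM, and the reverse direction recovers the kernel condition by the linear-in-$t$ versus constant-in-$t$ observation and recovers $\|A\|\le\epsilon$ by the homogenization trick $(u,w)\mapsto(tu,w/t)$. One small point worth making explicit in the reverse direction is that $A$ automatically vanishes on $\ker(U)$ (since $U^{\dagger/2}$ does), so bounding $|u^T A w|$ for $u,w\in\mathrm{range}(U)$ indeed suffices for $\|A\|\le\epsilon$; you allude to this but it is the one place a reader might pause.
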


\begin{lemma}[\cite{CKPPRSV16} Lemma 3.6]
\label{lem:dir_implies_undir}
Suppose $\tilde{L}$ is a directed $\epsilon$-approximation of $L$ and let $U=(L+L^T)/2$ and $\tilde{U}=(\tilde{L}+\tilde{L}^T)/2$. Then $\tilde{U}\approx_{\epsilon}U$.
\end{lemma}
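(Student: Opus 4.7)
The plan is to reduce the claim directly to Lemma~\ref{lem:dirapprox} by exploiting the elementary observation that the quadratic form $v^{T} M v$ depends only on the symmetric part $(M+M^{T})/2$ of any square matrix $M$. In particular, $v^{T} \tilde{L} v = v^{T}\tilde{U} v$ and $v^{T} L v = v^{T} U v$, so it suffices to bound the quadratic form of $\tilde{L}-L$ against that of $U$ rather than compare $\tilde{U}$ and $U$ directly.

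First, I would invoke Lemma~\ref{lem:dirapprox}, which translates the directed approximation hypothesis into the universal bilinear bound
\[
x^{T}(\tilde{L}-L) y \;\leq\; \tfrac{\epsilon}{2}\bigl(x^{T} U x + y^{T} U y\bigr) \qquad \text{for all } x,y \in \mathbb{R}^{n}.
\]
Setting $x=y=v$ yields $v^{T}(\tilde{L}-L) v \leq \epsilon \cdot v^{T} U v$. To obtain the matching lower bound, I would apply the same inequality with $x=-v$, $y=v$ (the lemma holds for \emph{all} $x,y$), which gives $-v^{T}(\tilde{L}-L) v \leq \epsilon \cdot v^{T} U v$. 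Together these produce the two-sided bound $\bigl|v^{T}(\tilde{L}-L) v\bigr| \leq \epsilon \cdot v^{T} U v$.

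Finally, using $v^{T}(\tilde{L}-L) v = v^{T}(\tilde{U}-U) v$, this is precisely the statement that $(1-\epsilon)\cdot v^{T} U v \leq v^{T}\tilde{U} v \leq (1+\epsilon)\cdot v^{T} U v$ for every $v\in\mathbb{R}^n$, i.e., $\tilde{U}\approx_{\epsilon} U$ in the sense of Definition~\ref{def:spectralapprox}. There is no real obstacle here: the kernel condition in Definition~\ref{def:dirapprox} is already packaged into Lemma~\ref{lem:dirapprox}, so the proof is essentially a one-line consequence of that lemma once one observes that symmetrizing a matrix does not change its quadratic form. The only minor care needed is confirming that $U$ being PSD is enough to make the right-hand side of the bilinear inequality nonnegative, which is immediate from the hypothesis of Definition~\ref{def:dirapprox}.
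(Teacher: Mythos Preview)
Your argument is correct: specializing Lemma~\ref{lem:dirapprox} to $x=y=v$ and to $x=-v,\ y=v$ gives the two-sided bound $|v^{T}(\tilde{L}-L)v|\leq \epsilon\cdot v^{T}Uv$, and since quadratic forms see only the symmetric part this is exactly $\tilde{U}\approx_{\epsilon}U$. Note that the paper does not supply its own proof of this lemma---it is quoted directly from \cite{CKPPRSV16}---so there is no in-paper argument to compare against; your derivation via Lemma~\ref{lem:dirapprox} is the standard one.
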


Lemma \ref{lem:dir_implies_undir} says that directed spectral approximation implies the usual notion from Definition \ref{def:spectralapprox} for ``symmetrized'' versions of the matrices $L$ and $\tilde{L}$. In fact, when the matrices $L$ and $\tilde{L}$ are both symmetric, the two definitions are equivalent:
\begin{lemma}
\label{lem:symmequiv}
Let $\tilde{L}$ and $L$ be symmetric PSD matrices. Then $\tilde{L}$ is a directed $\epsilon$-approximation of $L$ if and only if $\tilde{L}\approx_{\epsilon}L$.
\end{lemma}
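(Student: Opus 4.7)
\medskip
\noindent\textbf{Proof plan for Lemma \ref{lem:symmequiv}.}

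The plan is to prove the two directions separately, exploiting the fact that when $L$ and $\tilde L$ are symmetric we have $U \eqdef (L+L^T)/2 = L$ and $\tilde U \eqdef (\tilde L+\tilde L^T)/2 = \tilde L$.

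For the forward direction ($\Rightarrow$), assume $\tilde L$ is a directed $\epsilon$-approximation of $L$. Then Lemma~\ref{lem:dir_implies_undir} gives $\tilde U \approx_\epsilon U$ immediately, which by the symmetry observation is just $\tilde L \approx_\epsilon L$. This direction is essentially free.

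For the reverse direction ($\Leftarrow$), assume $\tilde L \approx_\epsilon L$ in the sense of Definition~\ref{def:spectralapprox}. I need to verify the two conditions of Definition~\ref{def:dirapprox} with $U=L$. For condition (1), $\ker(L) \subseteq \ker(\tilde L - L) \cap \ker((\tilde L - L)^T)$: symmetry collapses the intersection to $\ker(\tilde L - L)$, and if $Lv=0$ then $v^T L v = 0$, so spectral approximation forces $v^T \tilde L v = 0$; since $\tilde L$ is symmetric PSD this yields $\tilde L v = 0$, hence $(\tilde L - L)v = 0$. For condition (2), I must bound $\|L^{\dagger/2}(\tilde L - L) L^{\dagger/2}\|$. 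Rewriting $\tilde L \approx_\epsilon L$ in Loewner form gives $-\epsilon L \preceq \tilde L - L \preceq \epsilon L$. Conjugating by the symmetric matrix $L^{\dagger/2}$ (which preserves the Loewner order), I get
\[
-\epsilon\, L^{\dagger/2} L\, L^{\dagger/2} \preceq L^{\dagger/2}(\tilde L - L) L^{\dagger/2} \preceq \epsilon\, L^{\dagger/2} L\, L^{\dagger/2}.
\]
The key identity is that $L^{\dagger/2} L\, L^{\dagger/2} = \Pi_{\mathrm{range}(L)}$, the orthogonal projection onto the range of $L$, which follows directly from the spectral decomposition of $L$ (the nonzero eigenvalues of $L$ are cancelled, the zero eigenvectors stay at zero). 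Since $\Pi_{\mathrm{range}(L)} \preceq I$, the middle matrix is a symmetric matrix sandwiched between $-\epsilon I$ and $\epsilon I$, so its spectral norm is at most $\epsilon$, as required.

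Neither direction presents a real obstacle; the only step that deserves care is recognizing $L^{\dagger/2} L\, L^{\dagger/2}$ as the range projection, and observing that conjugation by the symmetric matrix $L^{\dagger/2}$ preserves the Loewner inequalities (which is a special case of item~3 of Proposition~\ref{prop:psdfacts} applied to the shifted matrices). I expect the write-up to be short, a few lines per direction.
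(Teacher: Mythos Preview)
Your proposal is correct and follows essentially the same argument as the paper's proof: both directions use $U=L$ by symmetry, the forward direction is immediate from Lemma~\ref{lem:dir_implies_undir}, and the reverse direction conjugates the Loewner inequality $-\epsilon L \preceq \tilde L - L \preceq \epsilon L$ by $L^{\dagger/2}$ and uses that $L^{\dagger/2} L\, L^{\dagger/2}$ has spectral norm at most $1$. Your identification of $L^{\dagger/2} L\, L^{\dagger/2}$ as the range projection is slightly more explicit than the paper's eigenvalue argument, but the content is the same.
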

A proof of Lemma \ref{lem:symmequiv} can be found in Appendix \ref{app:symmequiv}.

\subsection{Space Bounded Computation}
\label{sect:spacemodel}
We use a standard model of space-bounded computation where the machine $\mathcal{M}$ has a read-only input tape, a constant number of  read/write work tapes, and a write-only output tape. If throughout every computation on inputs of length at most $n$, $\mathcal{M}$ uses at most $s(n)$ total tape cells on all the work tapes, we say $\mathcal{M}$ runs in space $s=s(n)$. Note that $\mathcal{M}$ may write more than $s$ cells (in fact as many as $2^{O(s)}$) but the output tape is write-only. The following proposition describes the behavior of space complexity when space bounded algorithms are composed. 
\begin{proposition}
\label{prop:composition}
Let $f_1$, $f_2$ be functions that can be computed in space $s_1(n),s_2(n)\geq \log n$, respectively, and $f_1$ has output of length at most $\ell_1(n)$ on inputs of length $n$. Then $f_2\circ f_1$ can be computed in space 
\[
O(s_2(\ell_1(n)) + s_1(n)).
\]
\end{proposition}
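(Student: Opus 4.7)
The plan is to use the standard on-the-fly simulation trick: we cannot afford to store $f_1(x)$ on a work tape, since $\ell_1(n)$ may be much larger than $s_1(n)+s_2(\ell_1(n))$, so instead we treat $f_1$ as providing a virtual input tape for $f_2$ and recompute bits of $f_1(x)$ from scratch each time $f_2$ needs to read them.

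First I would simulate the machine $\mathcal{M}_2$ that computes $f_2$, allocating a block of work tape of size $s_2(\ell_1(n))$ for its work tapes and a counter $i$ of size $O(\log \ell_1(n))$ tracking the current position of $\mathcal{M}_2$'s input head on its (virtual) input tape. Whenever the simulation of $\mathcal{M}_2$ requests the symbol at position $i$ of its input, I would invoke the machine $\mathcal{M}_1$ computing $f_1$ on the real input $x$, using a separate block of work tape of size $s_1(n)$. While running $\mathcal{M}_1$, I would maintain a second counter $j$ initialized to $0$; every time $\mathcal{M}_1$ writes a symbol to its output tape, I would discard the symbol but increment $j$, and as soon as $j=i$ I would intercept that written symbol and feed it back to the simulation of $\mathcal{M}_2$, then halt the $\mathcal{M}_1$ simulation and reclaim its $s_1(n)$ cells. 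Moves of $\mathcal{M}_2$'s input head simply update $i$.

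Accounting for space: the $\mathcal{M}_2$ simulation uses $s_2(\ell_1(n))$ cells; each (sequential, hence reusable) $\mathcal{M}_1$ simulation uses $s_1(n)$ cells; and the two counters $i,j$ each take $O(\log \ell_1(n))$ cells, which is absorbed into $s_2(\ell_1(n))$ because $s_2$ is at least logarithmic in its input length $\ell_1(n)$. The total is $O(s_2(\ell_1(n))+s_1(n))$, as desired. Correctness is immediate since $\mathcal{M}_2$ sees exactly the bits of $f_1(x)$ it asks for.

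There is no real obstacle here; the only subtle point is to verify that the counter $i$ fits in the claimed bound, which is where the hypothesis $s_2(n)\geq \log n$ (applied at input length $\ell_1(n)$) is used, and to observe that the write-only output tape of $\mathcal{M}_1$ does not count toward its space usage, so the intercept-and-discard strategy is legitimate within the model defined in Section \ref{sect:spacemodel}.
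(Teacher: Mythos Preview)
Your argument is correct and is the standard ``recompute the intermediate output on the fly'' simulation. The paper does not actually prove Proposition~\ref{prop:composition}; it is stated as a well-known fact about space-bounded composition, and your write-up supplies exactly the folklore argument one would expect, including the key observation that the head-position counter fits in $O(\log \ell_1(n)) \le O(s_2(\ell_1(n)))$ space.
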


\subsection{Rotation Maps}
In the space-bounded setting, it is convenient to use local descriptions of graphs. Such descriptions allow us to navigate large graphs without loading them entirely into memory. For this we use $\emph{rotation maps}$, functions that describe graphs through their neighbor relations. Rotation maps are defined for graphs with labeled edges as described in the following definition.

\begin{definition}[\cite{ReingoldVaWi01}]
\label{def:labeling}
A {\em two-way labeling} of an undirected multigraph $G=(V,E)$ with vertex degrees $(d_v)_{v\in V}$, is a labeling of the edges in $G$ such that 
\begin{enumerate}
\item Every edge $(u,v)\in E$ has two labels: one in $[d_u]$ as an edge incident to $u$ and one in $[d_v]$ as an edge incident to $v$,
\item For every vertex $v\in V$, the labels of the $d_v$ edges incident to $v$ are distinct.
\end{enumerate}
\end{definition}
In \cite{RozenmanVa05}, two-way labelings are referred to as {\em undirected} two-way labelings. Note that every graph has a two-way labeling where each vertex ``names'' its neighbors uniquely in some canonical way based on the order they're represented in the input. We will describe multigraphs with two-way labelings using rotation maps:

\begin{definition}[\cite{ReingoldVaWi01}]
\label{def:rotmap}
Let $G$ be an undirected multigraph on $n$ vertices with a two-way labeling. The {\em rotation map} Rot$_{G}$ is defined as follows: Rot$_{G}(v,i)=(w,j)$ if the $i$th edge to vertex $v$ leads to vertex $w$ and this edge is the $j$th edge incident to $w$. 
\end{definition}

We will use expanders that have efficiently computable rotation maps. We call such graphs \emph{strongly explicit}. The usual definition of strong explicitness only refers to time complexity, but we will use it for both time and space. 
\begin{definition}
A family of two-way labeled graphs $\mathcal{G}=\{G_{n,c}\}_{(n,c)}$, where $G_{n,c}$ is a $c$-regular graph on $n$ vertices, is called \emph{strongly explicit} if given $n,c$, a vertex $v\in[n]$ and an edge label $a\in[c]$, Rot$_{G_{n,c}}(v,a)$ can be computed in time $\mathrm{poly}(\log (nc))$ and space $O(\log nc)$.
\end{definition}

\section{The Derandomized Product and Expanders of All Sizes}
\label{sect:expanders}

In this section we introduce our derandomized graph product. The derandomized product generalizes the \emph{derandomized square} graph operation that was introduced by Rozenman and Vadhan \cite{RozenmanVa05} to give an alternative proof that $\USTConn$ is in $\L$. Unlike the derandomized square, the derandomized product is defined for \emph{irregular} graphs and produces a sparse approximation to the product of any two (potentially different) graphs with the same vertex degrees.

Here, by the `product' of two graphs $G_0,G_1$, we mean the reversible Markov chain with transitions defined as follows: from a vertex $v$, with probability $1/2$ take a random step on $G_0$ followed by a random step on $G_1$ and with probability $1/2$ take a random step on $G_1$ followed by a random step on $G_0$. 

When $G_0=G_1=G$, this is the same as taking a 2-step random walk on $G$. Note, however, that when $G$ is irregular, a 2-step random walk is \emph{not} equivalent to doing a 1-step random walk on the graph $G^2$, whose edges correspond to paths of length 2 in $G$. Indeed, even the stationary distribution of the random walk on $G^2$ may be different than on $G$.\footnote{For example, let $G$ be the graph on two vertices with one edge $(u,v)$ connecting them and a single self loop on $u$. Then $[2/3,1/3]$ is the stationary distribution of $G$ and $[3/5,2/5]$ is the stationary distribution of $G^2$.} Nevertheless, our goal in the derandomized product is to produce a relatively sparse graph whose 1-step random walk approximates the 2-step random walk on $G$. 

The intuition behind the derandomized product is as follows: rather than build a graph with every such two-step walk, we use expander graphs to pick a pseudorandom subset of the walks.  Specifically, we first pick $b\in\{0,1\}$ at random. Then, as before we take a truly random step from $v$ to $u$ in $G_b$. But for the second step, we don't use an arbitrary edge leaving $u$ in $G_{\bar{b}}$, but rather correlate it to the edge on which we arrived at $u$ using a $c$-regular expander on deg$(u)$ vertices, where we assume that the vertex degrees in $G_0$ and $G_1$ are the same. When $c<\mathrm{deg}(u)$, the vertex degrees of the resulting two-step graph will be sparser than without derandomization. However using the pseudorandom properties of expander graphs, we can argue that the derandomized product is a good approximation of the true product. 

\begin{definition}[Derandomized Product]
\label{def:derandproduct}
Let $G_0, G_1$ be undirected multigraphs on $n$ vertices with two-way labelings and identical vertex degrees $d_1,d_2,\ldots,d_n$. Let $\mathcal{H}=\{H_i\}$ be a family of two-way labeled, $c$-regular expanders of sizes including $d_1,\ldots,d_n$. The {\em derandomized product with respect to $\mathcal{H}$}, denoted $G_0\dpr G_1$, is an undirected multigraph on $n$ vertices with vertex degrees $2\cdot c\cdot d_1,\ldots,2\cdot c\cdot d_n$ and rotation map Rot$_{G_0\dpr G_1}$ defined as follows: For $v_{0}\in[n], j_0\in[d_{v_0}]$, $a_0\in [c]$, and $b\in\{0,1\}$ we compute Rot$_{G_0\dpr G_1}(v_0,(j_0,a_0,b))$ as
\begin{enumerate}
\item Let $(v_1,j_1)=$Rot$_{G_b}(v_0,j_0)$
\item Let $(j_2,a_1)=$Rot$_{H_{d_{v_{1}}}}(j_1,a_0)$
\item Let $(v_2,j_3)=$Rot$_{G_{\bar{b}}}(v_1,j_2)$
\item Output $(v_2,(j_3,a_1,\bar{b}))$
\end{enumerate}
where $\bar{b}$ denotes the bit-negation of $b$. 
\end{definition}
Note that when $G_0=G_1$ the derandomized product generalizes the derandomized square \cite{RozenmanVa05} to irregular graphs, albeit with each edge duplicated twice. Note that Definition \ref{def:derandproduct} requires that each vertex $i$ has the same degree $d_i$ in $G_0$ and $G_1$, ensuring that the random walks on $G_0,G_1$, and $G_0\dpr G_1$ all have the same stationary distribution. This can be generalized to the case that there is an integer $k$ such that for each vertex $v$ with degree $d_v$ in $G_1$, $v$ has degree $k\cdot d_v$ in $G_0$. For this, we can duplicate each edge in $G_1$ $k$ times to match the degrees of $G_0$ and then apply the derandomized product to the result. In such cases we abuse notation and write $G_0\dpr G_1$ to mean $G_0\dpr k\cdot G_1$.  

In \cite{derandbeyond2017} we showed that the derandomized square produces a spectral approximation to the true square. We now show that the derandomized product also spectrally approximates a natural graph product. 

\begin{theorem}
\label{thm:dprapprox}
Let $G_0,G_1$ be undirected multigraphs on $n$ vertices with two-way labelings, and normalized Laplacians $I-M_0$ and $I-M_1$. Let $G_0$ have vertex degrees $d_1,\ldots,d_n$ and $G_1$ have vertex degrees $d'_1,\ldots,d'_n$ where for all $i\in[n]$, $d_i=k\cdot d'_i$ for a positive integer $k$. Let $\mathcal{H}=\{H_i\}$ be a family of two-way labeled, $c$-regular expanders with $\lambda(H_i)\leq\lambda$ for all $H_i\in\mathcal{H}$, of sizes including $d_1,\ldots,d_n$. Let $I-\tilde{M}$ be the normalized Laplacian of $\tilde{G}=G_0\dpr G_1$. Then
\[
I-\tilde{M}\approx_{\lambda} I-\frac{1}{2}\cdot(M_0M_1+M_1M_0).
\]
\end{theorem}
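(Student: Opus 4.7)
The plan is to analyze both $\tilde M$ and $N=\tfrac12(M_0M_1+M_1M_0)$ in a common ``edge-space lift'' in which their only difference is an instance of $B-J$, and then to reduce the theorem to the blockwise expander bound of Lemma~\ref{lem:expanderapprox} combined with the bilinear characterisation of directed approximation in Lemma~\ref{lem:dirapprox}. Using the abuse of notation preceding the theorem I may assume $d_i=d_i'$, so that $G_0,G_1$ share a common degree sequence. Setting $N_E=\sum_v d_v$ and indexing $\mathbb{R}^{N_E}$ by edge-slot pairs $(v,j)$ with $j\in[d_v]$, I will use the isometry $U\in\mathbb{R}^{N_E\times n}$, $(U)_{(v,j),u}=\delta_{v,u}/\sqrt{d_v}$, which satisfies $U^TU=I$ and $UU^T=J$ for $J$ the block-diagonal uniform-stochastic matrix; the symmetric permutation matrices $P_b$ encoding $\mathrm{Rot}_{G_b}$ (symmetric because rotation maps are involutions, so $P_b=P_b^T$ and $P_b^2=I$); and the block-diagonal operator $B$ whose $v$-th block is $T_{H_{d_v}}$.

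The first step is to verify the three identities
\[
M_b=U^TP_bU,\qquad N=U^T\tfrac12(P_0JP_1+P_1JP_0)U,\qquad \tilde M=U^T\tfrac12(P_0BP_1+P_1BP_0)U.
\]
The first is a direct computation against $D^{-1/2}A_bD^{-1/2}$; the second follows from the first and $J=UU^T$; the third is obtained by unfolding Definition~\ref{def:derandproduct}, where averaging over $b$ produces the symmetrised sum and marginalising over the expander label $a_0$ replaces the expander step by its transition matrix $B$. These three identities give $\tilde M-N=U^T\tfrac12\bigl(P_0(B-J)P_1+P_1(B-J)P_0\bigr)U$, so the theorem reduces to controlling $B-J$ on bilinear forms of shape $(P_0Ux)^T(B-J)(P_1Uy)$.

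Lemma~\ref{lem:expanderapprox} applied blockwise gives $-\lambda(I-J)\preceq B-J\preceq\lambda(I-J)$, so writing $B-J=(I-J)^{1/2}Q(I-J)^{1/2}$ with $\|Q\|\le\lambda$ (interpreting pseudoinverses on $\ker(I-J)\subseteq\ker(B-J)$) and applying Cauchy--Schwarz followed by AM--GM yields the bilinear bound $|\xi^T(B-J)\eta|\le\tfrac\lambda2(\xi^T(I-J)\xi+\eta^T(I-J)\eta)$. A short computation using $P_b=P_b^T$, $P_b^2=I$ and $J=UU^T$ shows $(P_bUw)^T(I-J)(P_bUw)=w^T(I-M_b^2)w$, so applying the bilinear bound to the two terms in $x^T(\tilde M-N)y$ and collecting produces
\[
|x^T(\tilde M-N)y|\le\tfrac\lambda2\Bigl(x^T\bigl(I-\tfrac12(M_0^2+M_1^2)\bigr)x+y^T\bigl(I-\tfrac12(M_0^2+M_1^2)\bigr)y\Bigr).
\]

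To close the loop I will invoke the elementary PSD inequality $(M_0-M_1)^2\succeq0$ (trivial since $M_0-M_1$ is symmetric), which rearranges to $\tfrac12(M_0^2+M_1^2)\succeq\tfrac12(M_0M_1+M_1M_0)=N$, i.e.\ $I-\tfrac12(M_0^2+M_1^2)\preceq I-N$. Substituting gives $|x^T(\tilde M-N)y|\le\tfrac\lambda2(x^T(I-N)x+y^T(I-N)y)$, which by Lemma~\ref{lem:dirapprox} says that $I-\tilde M$ is a directed $\lambda$-approximation of $I-N$, and by Lemma~\ref{lem:symmequiv} this is equivalent to the stated undirected approximation $I-\tilde M\approx_\lambda I-N$. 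The key conceptual point—and the main place the argument could have failed—is the use of a \emph{bilinear} rather than quadratic bound on $\xi^T(B-J)\eta$: it is precisely this flexibility that lets the inequality $(M_0-M_1)^2\succeq0$ absorb the expander error back into the reference matrix $I-N$, rather than leaving separate, possibly unrelated contributions from $M_0^2$ and $M_1^2$.
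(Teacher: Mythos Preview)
Your proof is correct and follows essentially the same approach as the paper: lift to the edge-slot space via an isometry, express both $\tilde M$ and $N$ through the rotation permutations $P_0,P_1$ and the block-diagonal operators $B,J$, and control $B-J$ with the bilinear bound coming from Lemma~\ref{lem:dirapprox}. The only organisational difference is in the closing step: the paper stays in the lifted space and uses the PSD inequality $(x-y)^T J(x-y)\ge 0$ with $x=P_0v,\ y=P_1v$ to obtain the undirected approximation there before descending, whereas you descend first and then invoke $(M_0-M_1)^2\succeq 0$ in the base space to pass from $I-\tfrac12(M_0^2+M_1^2)$ to $I-N$; these are two faces of the same PSD trick.
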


\begin{proof}[Proof of Theorem \ref{thm:dprapprox}]
Note that $k\cdot G_1$ has the same transition matrix and normalized Laplacian as $G_1$. So we can replace $G_1$ with $k\cdot G_1$ and assume $k=1$ without loss of generality. 

Since $G_0$ and $G_1$ have the same vertex degrees, we can we write
\begin{equation}
\label{eq:1}
I-\frac{1}{2}\cdot(M_0M_1+M_1M_0) = I-D^{-1/2}\frac{1}{2}\cdot(T_0T_1+T_1T_0)D^{1/2}
\end{equation}
where $T_0$ and $T_1$ are the transition matrices of $G_0$ and $G_1$, respectively.

Following the proofs in \cite{RozenmanVa05} and \cite{derandbeyond2017}, we can write the transition matrix for the random walk on $\tilde{G}$ as $\tilde{T}=\frac{1}{2}\cdot(P R_0 \tilde{B} R_1 Q + P R_1 \tilde{B} R_0 Q)$, where each matrix corresponds to a step in the definition of the derandomized product. The two terms correspond to $b=0$ and $b=1$ in the derandomized product and, setting $\bar{d}=\sum_{i\in[n]}d_i$,

\begin{itemize}
\item $Q$ is a $\bar{d}\times n$ matrix that ``lifts'' a probability distribution over $[n]$ to one over $[\bar{d}]$ where the mass on each coordinate $i\in[n]$ is divided uniformly over the corresponding degree $d_i$. That is, $Q_{(u,i),v}=1/d_{i}$ if $u=v$ and 0 otherwise where the rows of $Q$ are ordered $(1,1),(1,2),\ldots,(1,d_1),(2,1),\ldots,(2,d_2),\\ \ldots(n,1),\ldots,(n,d_n)$.
\item $R_0$ and $R_1$ are the $\bar{d}\times \bar{d}$ symmetric permutation matrices corresponding to the rotation maps of $G_0$ and $G_1$, respectively. That is, entry $(u,i),(v,j)$ in $R_a$ is $1$ if Rot$_{G_{a}}(u,i)=(v,j)$ and 0 otherwise for $a\in\{0,1\}$.
\item $\tilde{B}$ is a $\bar{d}\times \bar{d}$ symmetric block-diagonal matrix with $n$ blocks where block $i$ is the transition matrix for the random walk on $H_{d_{i}}\in\mathcal{H}$, the expander in our family with $d_i$ vertices.
\item $P=DQ^{T}$ is the $n\times \bar{d}$ matrix that maps any $\bar{d}$-vector to an $n$-vector by summing all the entries corresponding to edges incident to the same vertex in $G_0$ and $G_1$. This corresponds to projecting a distribution on $[\bar{d}]$ back down to a distribution over $[n]$. $P_{v,(u,i)}=1$ if $u=v$ and 0 otherwise where the columns of $P$ are ordered $(1,1),(1,2),\ldots,(1,d_1),(2,1),\ldots,(2,d_2),\ldots(n,1),\ldots,(n,d_n)$.
\end{itemize}
Likewise, we can write
\begin{equation}
\label{eq:2}
(T_0T_1+T_1T_0)=(P R_0 \tilde{J} R_1 Q + P R_1 \tilde{J} R_0 Q)
\end{equation}
where $\tilde{J}$ is a $\bar{d}\times \bar{d}$ symmetric block-diagonal matrix with $n$ blocks where block $i$ is $J_i$, the transition matrix for the complete graph on $d_i$ vertices with a self loop on every vertex. That is, every entry of $J_i$ is $1/d_i$.

We will show that
\[
I_{\bar{d}}-\frac{1}{2}\cdot(R_0\bar{B}R_1+R_1\bar{B}R_0)\approx_{\lambda} I_{\bar{d}}-\frac{1}{2}\cdot(R_0\bar{J}R_1+R_1\bar{J}R_0).
\]
From this the theorem follows by multiplying by $D^{-1/2}P$ on the left and $(D^{-1/2}P)^T=QD^{1/2}$ on the right and applying Proposition \ref{prop:psdfacts} Part 3. Since $D^{-1/2}PQD^{1/2}=I_n$, the left-hand side becomes 
\begin{align*}
I_n-D^{-1/2}\tilde{T}D^{1/2}&=I_n-\tilde{D}^{-1/2}\tilde{T}\tilde{D}^{1/2}\\
&=I_n-\tilde{M}
\end{align*}
where $\tilde{D}=2\cdot c\cdot D$ is the diagonal matrix of vertex degrees of $\tilde{G}$. By Equations (\ref{eq:1}) and (\ref{eq:2}), the right-hand side becomes $I_n-\frac{1}{2}(M_0M_1+M_1M_0)$.

By Lemma \ref{lem:expanderapprox}, each graph in $\mathcal{H}$ is a $\lambda$-approximation of the complete graph on the same number of vertices. It follows that $I_{\bar{d}}-\tilde{B}\approx_\lambda I_{\bar{d}}-\tilde{J}$  because the quadratic form of a block diagonal matrix equals the sum of the quadratic forms of its blocks. By Lemma \ref{lem:symmequiv} and the fact that $I_d-\tilde{J}$ is PSD, $I_{\bar{d}}-\tilde{B}$ is also a directed $\lambda$-approximation of $I_{\bar{d}}-\tilde{J}$. So for all vectors $x,y\in\mathbb{R}^{\bar{d}}$ we have 
\begin{align*}
\left|x^{T}(\tilde{B}-\tilde{J})y\right| &\leq \frac{\lambda}{2}\cdot (x^{T}(I_{\bar{d}}-\tilde{J})x+y^{T}(I_{\bar{d}}-\tilde{J})y)\\
&\leq \frac{\lambda}{2}\cdot (x^{T}x + y^Ty - 2x^{T}\tilde{J}y).
\end{align*}
The first inequality uses Lemma \ref{lem:dirapprox}. We can add the absolute values on the left-hand side since the right-hand side is always nonnegative ($I_d-\tilde{J}$ is PSD) and invariant to swapping $x$ with $-x$. The second inequality follows from the fact that $\tilde{J}$ is PSD and so 
\[
0 \leq (x-y)^{T}\tilde{J}(x-y) = x^{T}\tilde{J}x+y^{T}\tilde{J}y-2\cdot x^{T}\tilde{J}y.
\]
Fix $v\in\mathbb{R}^{\bar{d}}$ and set $x=R_0v$ and $y=R_1v$. Recall that $R_0$ and $R_1$ are symmetric permutation matrices and hence $R_0^2=R_1^2=I_{\bar{d}}$. Also note that for all square matrices $A$ and vectors $x$, $x^{T}Ax=x^{T}(A+A^{T})x/2$. Combining these observations with the above gives
\begin{align*}
\left|v^{T}\left(\frac{1}{2}\cdot\left(R_0(\tilde{B}-\tilde{J})R_1+R_1(\tilde{B}-\tilde{J})R_0\right)\right)v\right|&=\left|v^{T}R_0(\tilde{B}-\tilde{J})R_1v\right| \\
&\leq \frac{\lambda}{2}\cdot (v^{T}R_0^2v + v^TR_1^2v - 2v^{T}R_0\tilde{J}R_1 v)\\
&=\lambda\cdot (v^{T}v-v^{T}R_0\tilde{J}R_1 v)\\
&=\lambda\cdot v^{T}\left(I-\frac{1}{2}\cdot\left(R_0\tilde{J}R_1+R_1\tilde{J}R_0\right)\right)v
\end{align*}
Rearranging the above shows that
\[
I_{\bar{d}}-\frac{1}{2}\cdot(R_0\bar{B}R_1+R_1\bar{B}R_0)\approx_{\lambda} I_{\bar{d}}-\frac{1}{2}\cdot(R_0\bar{J}R_1+R_1\bar{J}R_0),
\]
which proves the theorem.
\end{proof}

Note that for a graph $G$ with normalized Laplacian $I-M$ and transition matrix $T$, approximating $I-\frac{1}{2}\cdot(M_0M_1+M_1M_0)$ as in Theorem \ref{thm:dprapprox} for $M_0=M^{k_0}$ and $M_1=M^{k_1}$ gives a form of approximation to random walks of length $k_1+k_2$ on $G$, as 
\begin{align*}
I-T^{k_1+k_2} &= D^{1/2}(I-M^{k_1+k_2})D^{-1/2} \\
&=I-\frac{1}{2}\cdot D^{1/2}(M_0M_1+M_1M_0)D^{-1/2}.
\end{align*}

To apply the derandomized product, we need a strongly explicit expander family $\mathcal{H}$ with sizes equal to all of the vertex degrees. However, the commonly used existing constructions of strongly explicit expander families only give graphs of sizes that are subsets of $\mathbb{N}$ such as all powers of 2 or all perfect squares.  In \cite{RozenmanVa05, derandbeyond2017} this was handled by adding self loops to make the vertex degrees all equal and matching the sizes of expanders in explicit families. Adding self loops was acceptable in those works because it does not affect connectivity (the focus of \cite{RozenmanVa05}) or the Laplacian (the focus of \cite{derandbeyond2017}). However it does affect long random walks (our focus), so we cannot add self loops. It was folklore that one can convert any sufficiently dense family of expanders into a family containing expanders of every size (\cite{alon2008elementary}, \cite{goldreich08}) but there was no analysis in terms of spectral expansion or explicitness. We give a self-contained description and proof of such a reduction, with an analysis in terms of both strong explicitness and spectral expansion. Subsequent to this work, Goldreich \cite{gnote} has provided an analysis of a different construction in terms of vertex expansion and strong explicitness, and Alon \cite{alon_inprep} has given a stronger reduction that yields nearly Ramanujan expanders (i.e. ones with nearly optimal spectral expansion as a function of the degree) of every size $n$.

\begin{theorem}
\label{thm:expanders}
There exists a family of strongly explicit expanders $\mathcal{H}$ such that for all $n>1$ and $\lambda\in(0,1)$ there is a $c=\mathrm{poly}(1/\lambda)$ and a $c$-regular graph $H_{n,c}\in \mathcal{H}$ on $n$ vertices with $\lambda(H_{n,c})\leq \lambda$.
\end{theorem}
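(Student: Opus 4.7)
The plan is to reduce the problem to two ingredients: (i) a strongly explicit base family of $d_0$-regular expanders whose sizes form a sufficiently dense subset of $\mathbb{N}$---for instance, Cayley graphs of $\mathbb{F}_2^k$ giving $d_0$-regular graphs on every power of $2$ with some fixed $\lambda_0 < 1$---and (ii) a ``fold-and-regularize'' operation that converts a base expander on $N$ vertices into an expander on any $n$ with $N/2 < n \leq N$. Once we have constant-gap expanders of every size, we amplify to arbitrary $\lambda$ via repeated powering.

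Given $n > 1$, let $N$ be the smallest size in the base family with $N \geq n$, so $n \leq N \leq 2n$, and let $G$ be the base expander on $N$ vertices. Define the fold $\pi \colon [N] \to [n]$ by $\pi(i) = i$ for $i < n$ and $\pi(i) = i - n$ otherwise, so that $|\pi^{-1}(j)| \in \{1,2\}$ for every $j$. Form the multigraph $H'_n$ on $[n]$ by replacing each edge $\{u, v\}$ of $G$ with $\{\pi(u), \pi(v)\}$, and then add $d_0(2 - |\pi^{-1}(j)|)$ half-edges worth of self-loops at each vertex $j$ to make $H'_n$ a $2 d_0$-regular multigraph (doubling $d_0$ beforehand if parity requires it). The rotation map of $H'_n$ is computable in logspace from $\pi$, that of $G$, and a fixed ordering of self-loops.

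The heart of the proof is to show $\lambda(H'_n) \leq (1 + \lambda_0)/2 < 1$. Writing $P \in \mathbb{R}^{n \times N}$ for the projection with $P_{j,i} = \mathbf{1}[\pi(i) = j]$, one computes the identity
\[
M_{H'_n} = I + \tfrac{1}{2}\, P (M_G - I) P^T,
\]
using $P P^T = \operatorname{diag}(|\pi^{-1}(j)|)$, $P^T \vec{1}_n = \vec{1}_N$, and $M_G \vec{1}_N = \vec{1}_N$. For any test vector $f \perp \vec{1}_n$, decompose $P^T f = c \vec{1}_N + g$ with $g \perp \vec{1}_N$; since $M_G - I$ annihilates $\vec{1}_N$ and has eigenvalues in $[-1 - \lambda_0,\, -1 + \lambda_0]$ on $\vec{1}_N^{\perp}$, the Rayleigh quotient reduces to $\|f\|^2 + \tfrac{1}{2}\, g^T (M_G - I) g$. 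The main obstacle is establishing the two-sided bound $\|f\|^2 \leq \|g\|^2 \leq 2 \|f\|^2$: the upper bound is immediate since $|\pi^{-1}(j)| \leq 2$, while the lower bound follows by splitting $f$ over $S = \{j : |\pi^{-1}(j)| = 2\}$ and using $f \perp \vec{1}_n$ together with Cauchy--Schwarz to control $(\langle f, P\vec{1}_N \rangle)^2 / N$. Plugging these bounds in gives $|f^T M_{H'_n} f| \leq ((1 + \lambda_0)/2) \|f\|^2$, as desired.

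Finally, set $H_{n,c} = (H'_n)^k$ for the smallest $k$ with $((1 + \lambda_0)/2)^k \leq \lambda$, giving $k = O(\log(1/\lambda))$; the result is a $c$-regular graph on $n$ vertices with $c = (2 d_0)^k = \mathrm{poly}(1/\lambda)$ and $\lambda(H_{n,c}) \leq \lambda$. Strong explicitness is preserved: each evaluation of $H'_n$'s rotation map runs in time $\mathrm{polylog}(n)$ and space $O(\log n)$, and the $k$-fold composition for the power runs in time $\mathrm{polylog}(nc)$ and space $O(\log(nc))$ by reusing workspace across compositions.
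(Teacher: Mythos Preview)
Your construction is essentially the same as the paper's---take a base expander on $N$ vertices with $n\le N\le 2n$, merge $N-n$ pairs of vertices, pad singletons with self-loops to restore regularity, and then power---but your analysis is genuinely different and somewhat cleaner. The paper argues via spectral approximation of Laplacians: it observes that the merged graph's Laplacian is $B^T(c'I-A')B$ for the natural $N\times n$ incidence matrix $B$, shows this $\lambda'$-approximates the Laplacian of the graph $U$ obtained by merging the \emph{complete} graph on $N$ vertices in the same pattern, and then separately bounds $\lambda(U)\le 1/2$ by writing $U$'s transition matrix as $\tfrac12 J+\tfrac12 E$ for some stochastic $E$. Chaining these gives $\lambda(H)\le \lambda'+\tfrac12+\tfrac{\lambda'}{2}$, which is why they need $\lambda'<1/4$. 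You instead work directly with the transition matrix identity $M_{H'_n}=I+\tfrac12\,P(M_G-I)P^T$ and a Rayleigh-quotient argument; the two-sided estimate $\|f\|^2\le\|g\|^2\le 2\|f\|^2$ (whose lower bound you derive via Cauchy--Schwarz on the doubled coordinates, using $f\perp\vec 1_n$) yields $\lambda(H'_n)\le (1+\lambda_0)/2$. This is a tighter constant and requires only $\lambda_0<1$ for the base family, avoiding the intermediate ``merged complete graph'' altogether. Both routes are short; yours is more self-contained, while the paper's fits naturally with the spectral-approximation machinery used elsewhere in the paper. (Minor quibble: your indexing of $\pi$ should read $\pi(i)=i$ for $i\le n$ rather than $i<n$, but this does not affect the argument.)
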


\begin{proof}
Let $H'$ be a $c'$-regular expander on $m$ vertices such that $n\leq m\leq 2n$, $c'$ is a constant independent of $n$ and $\lambda(H')\leq \lambda'<1/4$. $H'$ can be constructed using already known strongly explicit constructions such as \cite{GabberGa81,ReingoldVaWi01} followed by squaring the graph a constant number of times to achieve $\lambda'<1/4$. We will construct $H$ as follows: Pair off the first $(m-n)$ vertices with the last $(m-n)$ vertices in $H'$ and merge each pair into a single vertex (which will then have degree $2\cdot c'$). To make the graph regular, add $c'$ self loops to all of the unpaired vertices. More precisely, given $u'\in[n]$ and $i'\in[c]=[2\cdot c']$ we compute Rot$_{H}(u',i')$ as follows:
\begin{enumerate}
\item If $1\leq u'\leq m-n$ [$u'$ is a paired vertex]:
\begin{enumerate}
\item If $1\leq i'\leq c'$, let $u=u'$, $i=i'$ [$u'$ is the first vertex in pair]
\item else let $u=m-u'$, $i=i'-c'$ [$u'$ is the second vertex in pair]
\item let $(v,j)=\mathrm{Rot}_{H'}(u,i)$
\end{enumerate}
\item else (if $m-n< u'\leq n$ ) [$u'$ is an unpaired vertex]
\begin{enumerate}
\item If $1\leq i'\leq c'$, let $u=u'$, $i=i'$, and $(v,j)=\mathrm{Rot}_H(u,j)$ [original edge]
\item else let $(v,j)=(u',i')$ [new self loop]
\end{enumerate}
\item
\begin{enumerate}
\item If $v\leq n$, let $(v',j')= (v,j)$
\item else let $v'=m-v$, $j'=j+c'$.
\end{enumerate}
\item Output $(v',j')$
\end{enumerate}

Next we show that $\lambda(H)$ is bounded below 1 by a constant. The theorem then follows by taking the $O(\log 1/\lambda)$th power to drive $\lambda(H)$ below $\lambda$. This gives the graph degree $\mathrm{poly}(1/\lambda)$. 

Let $A'$ be the adjacency matrix of $H'$ and $K'$ be the $m\times m$ all ones matrix. Since $\lambda(H')\leq\lambda'$, Lemma \ref{lem:expanderapprox} implies that 
\[
\frac{1}{c'}\cdot (c'\cdot I- A')\approx_{\lambda'}\frac{1}{m}\cdot(m\cdot I-K').
\]
Define $B$ to be the $m\times n$ matrix such that $B_{u',u}=1$ if and only if vertex $u'\in V(H')$ was merged into vertex $u\in V(H)$ or vertex $u\in V(H')$ was not merged and is labeled vertex $u'$ in $H$. That is, $B_{u',u}=1$ if and only if $u=u'$ or $n\leq u = m-u'$. Then the unnormalized Laplacian of the expander after the merging step is $B^{T}(c'\cdot I-A')B$. Adding self loops to a graph does not change its Laplacian. So applying Proposition \ref{prop:psdfacts} parts 3 and 6 we get
\[
L(H) = \frac{1}{2c'}\cdot B^{T}(c'\cdot I-A')B \approx_{\lambda'}\frac{1}{2m}\cdot B^{T}(m\cdot I-K)B
\]
Note that the righthand side is the normalized Laplacian of the graph $U$ that results from starting with the complete graph on $m$ vertices, merging the same pairs of vertices that are merged in $H$ and adding $m$ self loops to all of the unmerged vertices for regularity.

We finish the proof by showing that $\lambda(U)\leq 1/2$ and thus $H$ is a $(\lambda'+1/2+\lambda'/2)$-approximation of the complete graph by Proposition \ref{prop:psdfacts} Part 2 and Lemma \ref{lem:expanderapprox}. Recalling that $\lambda'<1/4$ completes the proof.

$U$ has at least $m$ edges between every pair of vertices so we can write its transition matrix $T_u$ as 

\[
T_u = \frac{1}{2}\cdot J_m + \frac{1}{2}\cdot E
\]
where $J_m$ is the transition matrix of the complete graph on $m$ vertices with self loops on every vertex and $E$ is the transition matrix for an $m$-regular multigraph. Since the uniform distribution is stationary for all regular graphs, $\vec{1}$ is an eigenvector of eigenvalue 1 for $T_u,J_m,$ and $E$. Thus 
\begin{align*}
\lambda(U)&=\sup_{v\perp\vec{1}}\frac{\|T_uv\|}{\|v\|}\\
&\leq \sup_{v\perp\vec{1}}\frac{\frac{1}{2}\cdot(\|J_mv\|+\|Ev\|)}{\|v\|}\\
&\leq \frac{1}{2}\cdot 0 + \frac{1}{2}\cdot 1,
\end{align*}
which completes the proof.
\end{proof}

\section{Main Result}
\label{sect:mainresult}
In this section we prove Theorem \ref{thm:main}, our main result regarding space bounded computation of the normalized Laplacian of the $r$-step random walk on $G$.

\begin{namedtheorem}[Theorem \ref{thm:main} (restated)]
There is a deterministic algorithm that given an undirected multigraph $G$ on $n$ vertices with normalized Laplacian $I-M$, a nonnegative integer $r$, and $\epsilon>0$, constructs an undirected multigraph $\tilde{G}$ whose normalized Laplacian $\tilde{L}$ is an $\epsilon$-spectral approximation of $L=I-M^r$. That is, for all vectors $v\in\mathbb{R}^n$
\[
(1-\epsilon)\cdot v^{T}Lv \leq v^{T}\tilde{L}v \leq (1+\epsilon)\cdot v^{T}Lv.
\]
The algorithm runs in space $\SC$, where $N$ is the bit length of the input graph $G$. 
\end{namedtheorem}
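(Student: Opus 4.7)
The plan is to expand $r$ in binary as $r=\sum_{i=0}^{z}b_i 2^i$ with $z=\lfloor\log_2 r\rfloor$, and to construct $\tilde G$ in two stages. First, build a tower of graphs $\tilde G_0,\tilde G_1,\ldots,\tilde G_z$ by iterated derandomized squaring: set $\tilde G_0=G$ and $\tilde G_{i+1}=\tilde G_i\dpr \tilde G_i$, using the expander family of Theorem~\ref{thm:expanders} with spectral parameter $\lambda=\Theta(\epsilon/\log r)$ at every level, producing normalized Laplacians $I-\tilde M_i$ that should approximate $I-M^{2^i}$. Second, walk through the bits of $r$, maintaining a running graph $\hat G$ with $I-\hat M$ approximating $I-M^{K}$ for the current partial sum $K$, and whenever $b_i=1$ replace $\hat G$ by $\hat G\dpr \tilde G_i$. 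By Theorem~\ref{thm:dprapprox} the new $\hat G$ approximates $I-\tfrac{1}{2}(\hat M\tilde M_i+\tilde M_i\hat M)$, which is the intended target for $I-M^{K+2^i}$ because in the exact case $\hat M$ and $\tilde M_i$ are both polynomials in $M$ and therefore commute. Degree mismatches in the combining stage, where $\tilde G_i$ has degrees $(2c)^i$ times those of $G$, are absorbed by the edge-duplication convention noted after Definition~\ref{def:derandproduct}.

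The heart of the proof is the error analysis. I would prove by induction on the number of derandomized compositions performed that each composition adds at most $O(\lambda)$ to the accumulated approximation error, so that after $O(\log r)$ compositions the total error is $O(\epsilon)$ with the chosen $\lambda$. One step of Theorem~\ref{thm:dprapprox} gives $I-\tilde M_{i+1}\approx_\lambda I-\tilde M_i^2$ immediately. What requires work is pulling the inductive hypothesis $I-\tilde M_i\approx_{\alpha_i} I-M^{2^i}$ through the squaring to obtain $I-\tilde M_i^2\approx I-M^{2^{i+1}}$ without incurring a multiplicative blow-up. Following \cite{derandbeyond2017} and the framework of \cite{cheng2015}, the key move is the symmetric factorization $I-X^2=(I+X)^{1/2}(I-X)(I+X)^{1/2}$, valid for symmetric $X\succeq -I$, and an application of Proposition~\ref{prop:psdfacts}(3) with $V=(I+X)^{1/2}$, so that the inherited error is \emph{conjugated} rather than multiplied. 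For the cross-product step $\tfrac{1}{2}(XY+YX)$ arising in the combining stage the symmetric factorization is not available, and here I would invoke the directed approximation framework of \cite{CKPPRSV16}: Lemma~\ref{lem:dirapprox} furnishes a bilinear inequality that controls $\tfrac{1}{2}(XY+YX)$ from spectral approximations of $X$ and $Y$ separately, and Lemmas~\ref{lem:dir_implies_undir} and~\ref{lem:symmequiv} convert the conclusion back to the undirected form of Definition~\ref{def:spectralapprox}. Combining with Proposition~\ref{prop:psdfacts} parts 1--6 closes the induction.

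For the space bound, one query to the rotation map of $\tilde G_{i+1}$ makes a constant number of queries to that of $\tilde G_i$ together with one query to a strongly explicit expander of degree $c=\poly(1/\lambda)=\poly(\log r/\epsilon)$, using $O(\log n+\log c)$ additional workspace. Unrolling through $z=O(\log r)$ squaring levels and a further $O(\log r)$ combining steps, and appealing to Proposition~\ref{prop:composition} together with the explicitness guarantee of Theorem~\ref{thm:expanders}, one rotation query of $\tilde G$ costs
\[
O(\log N)+O(\log r)\cdot O(\log\log r+\log(1/\epsilon))=\SC
\]
total workspace. Writing the adjacency representation of $\tilde G$ to the write-only output tape is then done by enumerating all pairs $(v,j)$ and invoking the rotation map, within the same bound.

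The principal obstacle is precisely this graceful-error-accumulation claim: it is what prevents the analysis from degenerating into the linear-in-$r$ error one would get from a naive INW analysis of the length-$r$ branching program. The enabling structural insight is that $I-X^2$ admits the factorization $(I+X)^{1/2}(I-X)(I+X)^{1/2}$, which lets us peel off a single derandomization level and absorb the earlier error by conjugation rather than multiplication; the directed-approximation framework of \cite{CKPPRSV16} plays the analogous role for the asymmetric-looking cross-product $\tfrac{1}{2}(XY+YX)$. Once this inductive step is in place, the remaining ingredients---the derandomized product, its irregular-graph specialization, and the universal strongly explicit expander family---are already supplied by Sections~2 and~3.
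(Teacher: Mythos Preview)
Your approach has a genuine gap in the combining stage. You propose to maintain a running approximation $\hat G$ with $I-\hat M\approx I-M^K$ and, when $b_i=1$, replace $\hat G$ by $\hat G\dpr\tilde G_i$. Theorem~\ref{thm:dprapprox} then gives that the new graph approximates $I-\tfrac12(\hat M\tilde M_i+\tilde M_i\hat M)$, and you need this to approximate $I-M^{K+2^i}$. The tool for this is Lemma~\ref{lem:plusapprox}, but that lemma requires the ``exact'' factor $N_2$ to \emph{commute} with the target $N_1$; in your scheme both $\hat M$ and $\tilde M_i$ are approximations and neither commutes with $M^K$ or $M^{2^i}$. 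Concretely, the step $N_2(I-N_1)N_2\preceq I-N_1\preceq I-N_2N_1$ in the proof of Lemma~\ref{lem:plusapprox} fails without commutativity (take $N_1=\mathrm{diag}(0,1)$ and $N_2$ the swap permutation). The bilinear inequality of Lemma~\ref{lem:dirapprox} does not rescue this: it controls $x^T(\tilde N-N_1)y$ against $x^T(I-N_1)x+y^T(I-N_1)y$, and to land on the right target you still need to bound these quadratic forms by $I-N_2N_1$, which is exactly where commutativity enters.

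The paper avoids this by a different algorithmic structure. Rather than precomputing a tower and then multiplying selected approximate powers, it runs Horner-style from the most significant bit: at each step it squares, and if the bit is $1$ it does a ``plus-one'' with a \emph{fixed} base graph $G_0$. Because the plus-one factor is always the same exact $M_0$, the commutativity hypothesis of Lemma~\ref{lem:plusapprox} is automatic ($N_1=M_0^k$, $N_2=M_0$). This in turn forces the construction of $G_0$ (Lemma~\ref{lem:buildpsdG}) with $M_0$ PSD, so that every intermediate target $M_0^k$ is PSD as both Lemmas~\ref{lem:sq} and~\ref{lem:plusapprox} require; your outline never addresses this PSD issue. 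A separate lemma (Lemma~\ref{lem:allpowersapprox}) is then needed to transfer the approximation of $I-M_0^{r'}$ back to $I-M^{2r'}$.

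A smaller point: the factorization $I-X^2=(I+X)^{1/2}(I-X)(I+X)^{1/2}$ with Proposition~\ref{prop:psdfacts}(3) does not give Lemma~\ref{lem:sq} as you describe, because conjugating $I-\tilde N\approx_\epsilon I-N$ by $(I+N)^{1/2}$ yields $(I+N)^{1/2}(I-\tilde N)(I+N)^{1/2}$ on the left, not $I-\tilde N^2$; the conjugators differ on the two sides. The paper's proof instead passes through the $2\times 2$ block matrix $\bigl[\begin{smallmatrix}I & -N\\ -N & I\end{smallmatrix}\bigr]$ and the variational identity $x^T(I-N^2)x=\min_y [x;y]^T\bigl[\begin{smallmatrix}I & -N\\ -N & I\end{smallmatrix}\bigr][x;y]$, which is what makes the error conjugate rather than multiply.
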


The algorithm described below is inspired by techniques used in \cite{cheng2015} to approximate random walks with a randomized algorithm in nearly linear time. Our analyses use ideas from the work of Cohen, Kelner, Peebles, Peng, Rao, Sidford, and Vladu on \emph{directed} Laplacian system solvers even though all of the graphs we work with are undirected.
\subsection{Algorithm Description and Proof Overview}
\label{sect:description}
Let $I-M$ be the normalized Laplacian of our input and $r$ be the target power. We will first describe an algorithm for computing $I-M^r$ without regard for space complexity and then convert it into a space-efficient approximation algorithm. The algorithm iteratively approximates larger and larger powers of $M$. On a given iteration, we will have computed $I-M^k$ for some $k<r$ and we use the following operations to increase $k$:
\begin{itemize}
\item Square: $I-M^k \rightarrow I-M^{2k}$,
\item Plus one: $I-M^{k} \rightarrow I-\frac{1}{2}\cdot (M\cdot M^{k}+ M^{k}\cdot M) = I-M^{k+1}$.
\end{itemize}

Interleaving these two operations appropriately can produce any power $r$ of $M$, invoking each operation at most $\log_2 r$ times. To see this, let $b_zb_{z-1}\ldots b_0$ be the bits of $r$ in its binary representation where $b_0$ is the least significant bit and $b_{z}=1$ is the most significant. We are given $I-M=I-M^{b_z}$. The algorithm will have $z$ iterations and each one will add one more bit from most significant to least significant to the binary representation of the exponent. So after iteration $i$ we will have $I-M^{b_zb_{z-1}\ldots b_{z-i}}$.

For iterations $1,\ldots, z$, we read the bits of $r$ from $b_{z-1}$ to $b_{0}$ one at a time. On each iteration we start with some power $I-M^k$. If the corresponding bit is a 0, we square to create $I-M^{2k}$ (which adds a 0 to the binary representation of the current exponent) and proceed to the next iteration. If the corresponding bit is a 1, we square and then invoke a plus one operation to produce $I-M^{2k+1}$ (which adds a 1 to the binary representation of the current exponent). After iteration $z$ we will have $I-M^r$.

Implemented recursively, this algorithm has $\log_2 r$ levels of recursion and uses $O(\log N)$ space at each level for the matrix multiplications, where $N$ is the bit length of the input graph. This results in total space $O(\log r\cdot \log N)$, which is more than we want to use (cf. Theorem \ref{thm:main}). We reduce the space complexity by replacing each square and plus one operation with the corresponding derandomized product, discussed in Section \ref{sect:expanders}. 

Theorem \ref{thm:dprapprox} says that the derandomized product produces spectral approximations to the square and the plus one operation. Since we apply these operations repeatedly on successive approximations, we need to maintain our ultimate approximation to a power of $I-M$. In other words, we need to show that given $\tilde{G}$ such that $I-\tilde{M}\approx_{\epsilon}I-M^k$ we have:
\begin{enumerate}
\item $I-\tilde{M}^2 \approx_{\epsilon} I-M^{2k}$
\item $I-\frac{1}{2}\cdot(M\tilde{M}+\tilde{M}M)\approx_{\epsilon} I-M^{k+1}$.
\end{enumerate}
We prove these in Lemmas \ref{lem:sq} and \ref{lem:plusapprox}. The transitive property of spectral approximation (Proposition \ref{prop:psdfacts} Part 2) will then complete the proof of spectral approximation. 

We only know how to prove items 1 and 2 when $M^k$ is PSD. This is problematic because $M$ is not guaranteed to be PSD for arbitrary graphs and so $M^{k}$ may only be PSD when $k$ is even. Simple solutions like adding self loops (to make the random walk lazy) are not available to us because loops may affect the random walk behavior in unpredictable ways. Another attempt would be to replace the plus one operation in the algorithm with a ``plus two'' operation
\begin{itemize}
\item Plus two: $I-M^{k} \rightarrow I-\frac{1}{2}\cdot (M^2\cdot M^{k}+ M^{k}\cdot M^2) = I-M^{k+2}$.
\end{itemize}
Interleaving the square and plus two would preserve the positive semidefiniteness of the matrix we're approximating and can produce any even power of $M$. If $r$ is odd, we could finish with one plus one operation, which will produce a spectral approximation because $I-M^{r-1}$ is PSD. A problem with this approach is that the derandomized product is defined only for unweighted multigraphs and $M^2$ may not correspond to an unweighted multigraph when $G$ is irregular. (When $G$ is regular, the graph $G^2$ consisting of paths of length 2 in $G$ does have normalized Laplacian $I-M^2$.)

For this reason we begin the algorithm by constructing an unweighted multigraph $G_0$ whose normalized Laplacian $I-M_0$ approximates $I-M^2$ and where $M_0$ is PSD. We can then approximate any power $I-M_0^{r'}$ using the square and plus one operation and hence can approximate $I-M^r$ for any even $r$ (see Lemma \ref{lem:allpowersapprox}). For odd powers, we again can finish with a single plus one operation. 

Our main algorithm is presented below. Our input is an undirected two-way labeled multigraph $G$ with normalized Laplacian $I-M$, $\epsilon\in(0,1)$, and $r=b_zb_{z-1}\ldots b_1b_0$.

\begin{myalg}
\label{alg:main}
\centering
\fbox{
\parbox{0.9\textwidth}{
\textbf{Algorithm \themyalgctr}\\
Input: $G$ with normalized Laplacian $I-M$, $\epsilon\in(0,1)$, $r=b_zb_{z-1}\ldots b_1b_0$  \\
Output: $G_z$ with normalized Laplacian $I-M_z$ such that $I-M_z\approx_{\epsilon}I-M^r$
\begin{enumerate}
\item Set $\mu=\epsilon/(32\cdot z)$
\item Let $\mathcal{H}$ be family of expanders of every size such that $\lambda(H)\leq \mu$ for all $H\in\mathcal{H}$. 
\item Construct $G_0$ such that $I-M_0\approx_{\epsilon/(16\cdot z)}I-M^{2}$ and $M_0$ is PSD.
\item For $i$ in $\{1,\ldots, z-1\}$
\begin{enumerate}
\item If $b_{z-i}=0$, $G_i=G_{i-1}\dpr G_{i-1}$
\item Else $G_i=(G_{i-1}\dpr G_{i-1})\dpr G_0$ 
\end{enumerate}
\item If $b_0=0$ ($r$ even), $G_z=G_{z-1}$
\item Else ($r$ is odd), $G_z=G_{z-1}\dpr G$
\item Output $G_z$
\end{enumerate}
}
}
\end{myalg}
Note that each derandomized product multiplies every vertex degree by a factor of $2\cdot c$. So the degrees of $G,G_0,\ldots, G_z$ are all proportional to one another and the derandomized products in Algorithm \ref{alg:main} are well-defined.

\subsection{Proof of Main Result}
\label{sect:evenpf}
In this section we prove Theorem \ref{thm:main} by showing that Algorithm \ref{alg:main} yields a spectral approximation of our target power $I-M^r$ and can be implemented space-efficiently. First we show that our two operations, square and plus one, preserve spectral approximation.

\begin{lemma}[Adapted from \cite{cheng2015}]
\label{lem:sq}
Let $N$ and $\tilde{N}$ be symmetric matrices such that $I-\tilde{N}\approx_{\epsilon} I-N$ and $N$ is PSD, then $I-\tilde{N}^2 \approx_{\epsilon} I-N^2$.
\end{lemma}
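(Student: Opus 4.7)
The plan is to prove Lemma~\ref{lem:sq} via directed spectral approximation. Since $I-N$ and $I-\tilde N$ are both symmetric and PSD, Lemma~\ref{lem:symmequiv} translates the hypothesis $I-\tilde N \approx_\epsilon I-N$ into $I-\tilde N$ being a \emph{directed} $\epsilon$-approximation of $I-N$, which by Lemma~\ref{lem:dirapprox} is
\[
|x^T(\tilde N - N)y| \leq \tfrac{\epsilon}{2}\bigl(x^T(I-N)x + y^T(I-N)y\bigr) \quad \text{for all } x,y.
\]
A standard AM--GM rescaling (apply the bound to $(\sqrt{t}\,x, y/\sqrt{t})$ and optimize $t>0$) sharpens this to the bilinear form
\[
|x^T E y| \leq \epsilon\sqrt{x^T(I-N)x \cdot y^T(I-N)y}, \qquad E := \tilde N - N.
\]

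Next I would verify that both $I-N^2$ and $I-\tilde N^2$ are symmetric PSD: the hypothesis pins $N$'s eigenvalues to $[0,1]$ and forces $\tilde N$'s eigenvalues into $[-\epsilon, 1]$, so both $N^2$ and $\tilde N^2$ lie in $[0,1]$ spectrally. By Lemma~\ref{lem:symmequiv} it then suffices to establish the symmetric quadratic-form bound $|v^T(\tilde N^2 - N^2)v| \leq \epsilon\, v^T(I-N^2)v$. The algebraic heart is the identity
\[
v^T(\tilde N^2 - N^2)v \;=\; v^T E(N+\tilde N)v,
\]
which follows from $\tilde N^2 - N^2 = \tfrac12\{N+\tilde N,\,E\}$ together with the general fact that $v^T\{A,B\}v = 2v^T AB v$ for symmetric $A,B$ (because $v^T AB v = \langle Av,Bv\rangle = v^T BA v$). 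Applying the sharp bilinear bound with $y = (N+\tilde N)v$ now gives
\[
|v^T(\tilde N^2 - N^2)v| \;\leq\; \epsilon\sqrt{v^T(I-N)v \;\cdot\; v^T(N+\tilde N)(I-N)(N+\tilde N)v}.
\]

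The crux of the proof, and the step I expect to be the main obstacle, is the Cauchy--Schwarz-type inequality
\[
v^T(I-N)v \cdot v^T(N+\tilde N)(I-N)(N+\tilde N)v \;\leq\; \bigl(v^T(I-N^2)v\bigr)^2,
\]
since combining it with the previous display completes the proof. I plan to attack it by substituting $u = (I-N)^{1/2} v$, so that the right-hand side becomes $(u^T(I+N)u)^2$ and $(I-N)^{1/2}(N+\tilde N)v = 2Nu + (I-N)Mu$ where $M = (I-N)^{\dagger/2} E (I-N)^{\dagger/2}$ satisfies $\|M\|\leq\epsilon$. In the ``central'' case $\tilde N = N$ (so $M=0$) the inequality reduces to $4\|u\|^2 u^T N^2 u \leq (\|u\|^2 + u^T N u)^2$, which follows from the AM--GM bound $(\|u\|^2 + u^T N u)^2 \geq 4\|u\|^2 u^T N u$ together with $N^2 \preceq N$ (using that $N$'s eigenvalues lie in $[0,1]$). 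Expanding $\|2Nu + (I-N)Mu\|^2$ in the general case produces cross terms whose size is controlled by $\|M\|\leq\epsilon$ and the ordering $(I-N)^2 \preceq (I-N)$; the delicate point will be absorbing these $O(\epsilon)$ corrections into the slack of the commuting-case AM--GM via weighted Cauchy--Schwarz, exploiting that $u^T N(I-N)u \geq 0$ gives additional room.
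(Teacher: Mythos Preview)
Your route is genuinely different from the paper's. The paper never passes through a bilinear bound on $E=\tilde N-N$; instead it uses the Schur-complement identity
\[
x^T(I-N^2)x \;=\; \min_{y}\begin{pmatrix} x\\ y\end{pmatrix}^{\!T}\begin{pmatrix} I & -N\\ -N & I\end{pmatrix}\begin{pmatrix} x\\ y\end{pmatrix},
\]
together with the observation that $N\succeq 0$ forces $I+\tilde N\approx_\epsilon I+N$ and hence
$\bigl(\begin{smallmatrix} I & -\tilde N\\ -\tilde N & I\end{smallmatrix}\bigr)\approx_\epsilon \bigl(\begin{smallmatrix} I & -N\\ -N & I\end{smallmatrix}\bigr)$.
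The two inequalities in $I-\tilde N^2\approx_\epsilon I-N^2$ then drop out by plugging $y=\tilde Nx$ and $y=Nx$ respectively.

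Your proposal, on the other hand, has a real gap: the ``crux'' inequality
\[
v^T(I-N)v \;\cdot\; v^T(N+\tilde N)(I-N)(N+\tilde N)v \;\le\; \bigl(v^T(I-N^2)v\bigr)^2
\]
is \emph{false}, and not merely for $\epsilon$ close to $1$. Take $N=\mathrm{diag}(1-\delta,\,\delta)$ and
$\tilde N=N+\epsilon\,(I-N)^{1/2}\bigl(\begin{smallmatrix}0&1\\1&0\end{smallmatrix}\bigr)(I-N)^{1/2}$, which satisfies $I-\tilde N\approx_\epsilon I-N$ exactly. For $v=e_1$ a direct computation gives
\[
\frac{v^T(I-N)v \;\cdot\; v^T(N+\tilde N)(I-N)(N+\tilde N)v}{\bigl(v^T(I-N^2)v\bigr)^2}
\;=\;\frac{(1-\delta)^2(4+\epsilon^2)}{(2-\delta)^2}
\;\xrightarrow[\delta\to 0]{}\; 1+\frac{\epsilon^2}{4}\;>\;1.
\]
Thus for every fixed $\epsilon>0$ one can choose $\delta$ small enough that the ratio exceeds $1$. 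This is not an $O(\epsilon)$ correction that can be absorbed into AM--GM slack: the slack in your commuting-case bound vanishes precisely as $\delta\to 0$ (the eigenvalue of $N$ approaches $1$), while the non-commuting perturbation contributes a term of fixed relative size $\epsilon^2/4$.

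Everything in your proposal up to that point is fine --- the sharpened bilinear bound and the identity $v^T(\tilde N^2-N^2)v=v^T E(N+\tilde N)v$ are both correct --- but applying the bilinear bound with $y=(N+\tilde N)v$ simply loses too much when $N$ and $\tilde N$ fail to commute. The Schur-complement argument avoids ever isolating such a term.
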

A proof of Lemma \ref{lem:sq} can be found in Appendix \ref{app:sq_preserves_approx}.  Next we show that the plus one operation in our algorithm also preserves spectral approximation.

\begin{lemma}
\label{lem:plusapprox}
Let $\tilde{N}$, $N_1$, and $N_2$ be symmetric matrices with spectral norm at most 1 and suppose that $N_1$ is PSD and commutes with $N_2$. If $I-\tilde{N}\approx_{\epsilon} I-N_1$ then 
\[
I-\frac{1}{2}\cdot (\tilde{N}N_2 + N_2\tilde{N})\approx_{\epsilon} I-N_2N_1.
\]
\end{lemma}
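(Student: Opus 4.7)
The plan is to translate the hypothesis into a bilinear inequality via directed approximation (Lemma \ref{lem:dirapprox}), push it through the commutator structure, and then translate back to undirected spectral approximation using Lemma \ref{lem:dir_implies_undir}. Since $I - \tilde{N}$ and $I - N_1$ are symmetric PSD (PSD of $I - N_1$ follows from $N_1$ being PSD with $\|N_1\|\leq 1$, and PSD of $I - \tilde{N}$ from the hypothesis), Lemma \ref{lem:symmequiv} says $I - \tilde{N}$ is a directed $\epsilon$-approximation of $I - N_1$, so Lemma \ref{lem:dirapprox} yields
\[
x^T(N_1 - \tilde{N}) y \;\leq\; \frac{\epsilon}{2}\bigl(x^T(I-N_1)x + y^T(I-N_1)y\bigr) \qquad \forall\,x,y \in \mathbb{R}^n.
\]

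Next, using $N_1 N_2 = N_2 N_1$ and symmetry of $N_2$, I would rewrite
\[
N_2 N_1 - \tfrac{1}{2}(\tilde{N} N_2 + N_2 \tilde{N}) \;=\; \tfrac{1}{2}(N_1 - \tilde{N}) N_2 + \tfrac{1}{2} N_2 (N_1 - \tilde{N}),
\]
so the bilinear form $x^T\bigl(N_2 N_1 - \tfrac{1}{2}(\tilde{N} N_2 + N_2 \tilde{N})\bigr) y$ splits as $\tfrac{1}{2}\, x^T(N_1-\tilde{N})(N_2 y) + \tfrac{1}{2}(N_2 x)^T(N_1-\tilde{N}) y$. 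Applying the previous inequality to each half bounds the sum by
\[
\tfrac{\epsilon}{4}\bigl(x^T(I-N_1)x + y^T(I-N_1)y + x^T N_2(I-N_1)N_2 x + y^T N_2(I-N_1)N_2 y\bigr).
\]

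The main technical step, and the key use of the hypotheses that $N_1$ is PSD and $\|N_1\|,\|N_2\|\leq 1$, is the pointwise bound
\[
x^T(I-N_1)x + x^T N_2(I-N_1) N_2 x \;\leq\; 2\, x^T(I - N_2 N_1) x,
\]
which, together with its twin in $y$, turns the display above into the desired $\tfrac{\epsilon}{2}\bigl(x^T(I-N_2 N_1)x + y^T(I-N_2 N_1) y\bigr)$ bound. Since $N_1$ and $N_2$ commute and are symmetric, they share an orthonormal eigenbasis and the inequality reduces to the scalar statement $(1-\alpha)(1+\beta^2)\leq 2(1-\alpha\beta)$ for every pair of corresponding eigenvalues $\alpha$ of $N_1$ and $\beta$ of $N_2$. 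Rearranging gives $(1-\beta)\bigl[(1+\beta) + \alpha(1-\beta)\bigr]\geq 0$, which follows from $\alpha\in[0,1]$ and $\beta\in[-1,1]$: the first factor is nonnegative and the second, being linear in $\beta$ with slope $1-\alpha\geq 0$, attains its minimum value $2\alpha\geq 0$ at $\beta = -1$. This elementary but crucial scalar reduction is the main obstacle and is exactly where the commutativity of $N_1,N_2$ and the PSD/norm hypotheses are used.

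Combining everything gives
\[
x^T\bigl(N_2 N_1 - \tfrac{1}{2}(\tilde{N}N_2 + N_2\tilde{N})\bigr) y \;\leq\; \frac{\epsilon}{2}\bigl(x^T(I-N_2 N_1)x + y^T(I-N_2 N_1) y\bigr) \qquad \forall\,x,y.
\]
Because $I-N_2 N_1$ is symmetric (since $N_1 N_2 = N_2 N_1$) and PSD (its eigenvalues lie in $[0,2]$), Lemma \ref{lem:dirapprox} certifies that $I - \tfrac{1}{2}(\tilde{N}N_2 + N_2\tilde{N})$ is a directed $\epsilon$-approximation of $I - N_2 N_1$. Both matrices are already symmetric, so Lemma \ref{lem:dir_implies_undir} converts this directly to $I - \tfrac{1}{2}(\tilde{N}N_2 + N_2\tilde{N}) \approx_\epsilon I - N_2 N_1$, which is the conclusion of the lemma.
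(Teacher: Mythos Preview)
Your proof is correct and follows essentially the same route as the paper: convert the hypothesis to a bilinear inequality via Lemmas~\ref{lem:symmequiv} and~\ref{lem:dirapprox}, push through a substitution involving $N_2$, reduce the resulting operator inequality to a scalar eigenvalue inequality using the commutativity of $N_1$ and $N_2$, and translate back via Lemma~\ref{lem:dirapprox}.

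The only organizational difference is where the symmetrization happens. The paper substitutes $x\mapsto N_2x$ alone, obtaining a directed $\epsilon$-approximation of the \emph{asymmetric} matrix $I-N_2\tilde{N}$ by $I-N_2N_1$, and then symmetrizes at the end using $\|(U+U^T)/2\|\leq\|U\|$. You instead split the symmetric difference as $\tfrac12(N_1-\tilde{N})N_2+\tfrac12 N_2(N_1-\tilde{N})$ and substitute in both slots, landing directly on the symmetric target. Correspondingly, the paper proves the two separate operator inequalities $N_2(I-N_1)N_2\preceq I-N_1$ and $I-N_1\preceq I-N_2N_1$, whereas you need only their sum $(I-N_1)+N_2(I-N_1)N_2\preceq 2(I-N_2N_1)$, which you verify via the scalar factorization $(1-\beta)\bigl[(1+\beta)+\alpha(1-\beta)\bigr]\geq 0$. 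Your variant is marginally cleaner in that it avoids the detour through the asymmetric matrix and the norm-symmetrization step, but the underlying mechanism is identical.
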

\begin{proof}
Since $I-\tilde{N}\approx_{\epsilon} I-N_1$ and $I-N_1$ is PSD (since $\|N_1\|\leq 1$), Lemmas \ref{lem:dirapprox} and \ref{lem:symmequiv} say that for all $x,y\in \mathbb{R}^n$,
 \[
x^T(\tilde{N}-N_1)y\leq \frac{\epsilon}{2}\cdot (x^T(I-N_1)x +y^T(I-N_1)y).
\]
It follows (replacing $x$ with $N_2x$) that for all vectors $x,y\in \mathbb{R}^n$,
 \[
x^T(N_2\tilde{N}-N_2N_1)y\leq \frac{\epsilon}{2}\cdot (x^TN_2(I-N_1)N_2x +y^T(I-N_1)y).
\]
We now claim that $N_2(I-N_1)N_2\preceq I-N_1\preceq I - N_2N_1$. If the claim is true then the above implies that for all vectors $x,y\in \mathbb{R}^n$
 \[
x^T(N_2\tilde{N}-N_2N_1)y\leq \frac{\epsilon}{2}\cdot (x^T(I-N_2N_1)x +y^T(I-N_2N_1)y),
\]
which by Lemma \ref{lem:dirapprox} implies 
\[
\left\|(I-N_2N_1)^{\dagger/2}(N_2\tilde{N}-N_2N_1)(I-N_2N_1)^{\dagger/2}\right\| \leq \epsilon.
\]
Since for all matrices $U$ it is the case that $\|U+U^T\|/2\leq \|U\|/2 +\|U^T\|/2 =\|U\|$ we have
\[
\left\|(I-N_2N_1)^{\dagger/2}\left(I-\frac{1}{2}\cdot(\tilde{N}N_2 + N_2\tilde{N})-(I-N_2N_1)\right)(I-N_2N_1)^{\dagger/2}\right\| \leq \epsilon.
\]
Therefore $I-\frac{1}{2}\cdot(\tilde{N}N_2 + N_2\tilde{N})$ is a directed $\epsilon$-approximation of $I-N_2N_1$. Since these matrices are symmetric ($N_2$ and $N_1$ commute), Lemma \ref{lem:symmequiv} says that the approximation also holds in the undirected sense. 

It remains to show that $N_2(I-N_1)N_2\preceq I-N_1\preceq I - N_2N_1$. Since $N_1$ and $N_2$ commute, they have the same eigenvectors \cite{horn1990matrix}. So the inequalities reduce to inequalities of their corresponding eigenvalues. Let $\lambda_1,\lambda_2$  be eigenvalues corresponding to the same eigenvector of $N_1,N_2$, respectively. Since $\|N_1\|,\|N_2\|\leq 1$, we have $1-\lambda_1\geq 0$ and $|\lambda_2|\leq 1$, so 
\[
\lambda_2\cdot(1-\lambda_1)\cdot\lambda_2\leq 1-\lambda_1.
\]
This proves that $N_2(I-N_1)N_2\preceq I-N_1$. Also, since $N_1$ is PSD we have $\lambda_1\geq 0$ and thus
\[
1-\lambda_1 \leq 1-\lambda_2\cdot\lambda_1
\]
because $|\lambda_2|\leq 1$. Thus $I-N_1\preceq I-N_2N_1$ and the proof is complete.
\end{proof}
Setting $N_1=M^k$ and $N_2=M$ in Lemma \ref{lem:plusapprox} shows that the plus one operation preserves spectral approximation whenever $M^k$ is PSD. Recall that the first step in Algorithm \ref{alg:main} is to construct a graph $G_0$ with normalized Laplacian $I-M_0$ such that $M_0$ is PSD and $I-M_0$ approximates $I-M^2$. We can then approximate $I-M_0^{k}$ for any $k$ using squaring and plus one because $M_0^k$ will always be PSD. The following Lemma says that $I-M_0^k$ spectrally approximates $I-M^{2k}$.

\begin{lemma}
\label{lem:allpowersapprox}
Let $r$ be a positive integer with bit length $\ell(r)$ and $A$ and $B$ be symmetric PSD matrices with $\|A\|,\|B\|\leq 1$ such that $I-A\approx_{\epsilon}I-B$ and $I-B\approx_{\epsilon}I-A$ for $\epsilon\leq 1/(2\cdot \ell(r))$. Then $I-A^r\approx_{2\cdot\epsilon\cdot \ell(r)}I-B^r$. 
\end{lemma}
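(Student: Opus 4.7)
The plan is to prove the lemma by strong induction on $r$, carrying the stronger inductive hypothesis that \emph{both} $I - A^r \approx_{2\epsilon \ell(r)} I - B^r$ and $I - B^r \approx_{2\epsilon \ell(r)} I - A^r$ hold simultaneously. The base case $r = 1$ is immediate from the hypothesis, since $\epsilon \leq 2\epsilon$. For the inductive step at $r > 1$, I reduce to $r' = \lfloor r/2 \rfloor$, which satisfies $\ell(r') = \ell(r) - 1$, and the inductive hypothesis is valid at $r'$ because $\epsilon \leq 1/(2\ell(r)) \leq 1/(2\ell(r'))$.

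The even case $r = 2r'$ is handled directly by Lemma \ref{lem:sq}: since $A^{r'}$ and $B^{r'}$ are both PSD (as positive integer powers of PSD matrices), applying Lemma \ref{lem:sq} in each direction of the bidirectional $2\epsilon \ell(r')$-approximation at $r'$ promotes it to a $2\epsilon \ell(r')$-approximation at $2r' = r$ with no loss, and this is in turn at most $2\epsilon \ell(r)$.

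The odd case $r = 2r' + 1$ is where the real work happens, and it parallels the ``square then plus one'' pattern of Algorithm \ref{alg:main}. First, I use Lemma \ref{lem:sq} as in the even case to obtain bidirectional $2\epsilon \ell(r')$-approximation at $2r'$. Then I apply Lemma \ref{lem:plusapprox} twice to the \emph{same} intermediate matrix $X = I - \tfrac{1}{2}(A^{2r'} B + B A^{2r'})$: once with $(\tilde N, N_1, N_2) = (A^{2r'}, B^{2r'}, B)$, using that $B^{2r'}$ and $B$ commute and that $I - A^{2r'} \approx_{2\epsilon\ell(r')} I - B^{2r'}$, which yields $X \approx_{2\epsilon\ell(r')} I - B \cdot B^{2r'} = I - B^r$; and once with $(\tilde N, N_1, N_2) = (B, A, A^{2r'})$, using that $A$ and $A^{2r'}$ commute and the reverse assumption $I - B \approx_\epsilon I - A$, which yields $X \approx_\epsilon I - A^{2r'} \cdot A = I - A^r$. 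All PSD and spectral-norm-at-most-$1$ conditions for Lemma \ref{lem:plusapprox} are inherited from $\|A\|, \|B\| \leq 1$ and both being PSD. I then chain these two approximations through $X$ via Proposition \ref{prop:psdfacts} Part 1 (to invert $X \approx_\epsilon I - A^r$) and Part 2 (for transitivity), obtaining $I - A^r \approx_{\alpha} I - B^r$ with $\alpha = (\epsilon + 2\epsilon \ell(r'))/(1-\epsilon)$. The reverse direction $I - B^r \approx_{\alpha} I - A^r$ is entirely symmetric, exchanging the roles of $A$ and $B$ in both applications of Lemma \ref{lem:plusapprox}.

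The only remaining calculation, and the main obstacle to the claimed bound, is verifying $\alpha \leq 2\epsilon \ell(r) = 2\epsilon(\ell(r')+1)$. Cross-multiplying and simplifying reduces this inequality to $2\epsilon \ell(r) \leq 1$, which is precisely the hypothesis $\epsilon \leq 1/(2\ell(r))$. This is where the restriction on $\epsilon$ is used sharply: the multiplicative loss incurred by inverting an approximation (Proposition \ref{prop:psdfacts} Part 1) is what forces $2\epsilon \ell(r)$ to stay below $1$, but whenever it does, the additive bookkeeping on $\ell(r)$ closes the induction cleanly.
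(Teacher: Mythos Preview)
Your proof is correct and follows essentially the same approach as the paper's own argument. Both proceed by induction along the binary expansion of $r$, use Lemma~\ref{lem:sq} for the squaring step, and handle the ``plus one'' step by applying Lemma~\ref{lem:plusapprox} twice to the same intermediate matrix $I-\tfrac{1}{2}(A^{2r'}B+BA^{2r'})$, then chain via Proposition~\ref{prop:psdfacts} Parts 1 and 2; your error bound $(\epsilon+2\epsilon\ell(r'))/(1-\epsilon)$ is algebraically identical to the paper's, and both reduce the final inequality to exactly $2\epsilon\ell(r)\le 1$. The only cosmetic differences are that the paper inducts on the bit index (MSB to LSB) and tracks only the single direction $I-A^{r_i}\approx I-B^{r_i}$, whereas you do strong induction on $r$ via $r'=\lfloor r/2\rfloor$ and carry a bidirectional hypothesis---neither change affects the substance.
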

\begin{proof} 
Let $b_zb_{z-1}\ldots b_0$ be the bits in the binary representation of $r$ with $b_z=1$ being the most significant bit and $b_0$ being the least. For each $i\in \{0,\ldots,z\}$, let $r_i$ be the integer with binary representation $b_zb_{z-1}\ldots b_{z-i}$. 

The proof is by induction on $i$. As a base case, we have $I-A^{b_z}\approx_{\epsilon}I-B^{b_z}$. We will show for the inductive step that for each $i\in\{1,\ldots,z\}$, if $I-A^{r_{i-1}}\approx_{2\cdot (i-1)\cdot\epsilon}I-B^{r_{i-1}}$ then $I-A^{r_{i}}\approx_{2\cdot i\cdot\epsilon}I-B^{r_{i}}$. Assume $I-A^{r_{i-1}}\approx_{2\cdot (i-1)\cdot\epsilon}I-B^{r_{i-1}}$. By Lemma \ref{lem:sq}, we have 

\[
I-A^{2\cdot r_{i-1}}\approx_{2\cdot (i-1)\cdot\epsilon}I-B^{2\cdot r_{i-1}}
\]
If $b_i=0$ then $2\cdot r_{i-1}=r_i$ so $I-A^{r_i}\approx_{2\cdot (i-1)\cdot\epsilon}I-B^{r_i}$. If $b_i=1$, then by applying Lemma \ref{lem:plusapprox} we have 
\[
I-\frac{1}{2}(A^{2\cdot r_{i-1}}B+BA^{2\cdot r_{i-1}})\approx_{2\cdot (i-1)\cdot\epsilon}I-B^{2\cdot r_{i-1}+1}=I-B^{r_i}.
\]
Applying Lemma \ref{lem:plusapprox} again (recalling that $I-B\approx_{\epsilon}I-A$ and $A$ is PSD) gives
\[
I-\frac{1}{2}(A^{2\cdot r_{i-1}}B+BA^{2\cdot r_{i-1}})\approx_{\epsilon}I-A^{2\cdot r_{i-1}+1}=I-A^{r_i}.
\]
By Proposition \ref{prop:psdfacts} Part 1, this implies 
\[
I-A^{r_i}\approx_{\epsilon/(1-\epsilon)} I-\frac{1}{2}(A^{2\cdot r_{i-1}}B+BA^{2\cdot r_{i-1}})
\]
Putting this together with Proposition \ref{prop:psdfacts} Part 2 we have that $I-A^{r_i}$ is a $(\epsilon/(1-\epsilon)+2\cdot (i-1)\cdot\epsilon+2\cdot (i-1)\cdot\epsilon^2/(1-\epsilon))$-approximation of $I-B^{r_i}$. Note that
\begin{align*}
2\cdot (i-1)\cdot\epsilon+\frac{\epsilon+2\cdot (i-1)\cdot\epsilon^2}{1-\epsilon}&\leq 2\cdot (i-1)\cdot\epsilon+\frac{2\cdot\epsilon-2\cdot\epsilon^2}{1-\epsilon}\\
&=2\cdot i \cdot \epsilon
\end{align*}
where the first inequality follows from our assumption that $\epsilon\leq 1/(2\cdot \ell(r))$. So $I-A^{r_i} \approx_{2\cdot i\cdot \epsilon}I-B^{r_i}$, completing the inductive step. This shows that when $i=\ell(r)$, $I-A^{r}_{2\cdot \ell(r)\cdot \epsilon}I-B^{r}$, as desired. 
\end{proof}

Now we can prove Theorem \ref{thm:main}. We prove the theorem with three lemmas: Lemma \ref{lem:buildpsdG} shows how to construct the graph $G_0$ needed in Algorithm \ref{alg:main}, Lemma \ref{lem:pf_of_spec_approx} argues that the algorithm produces a spectral approximation to $I-M^r$, and Lemma \ref{lem:pf_of_space} shows that the algorithm can be implemented in space $\SC$. 
\subsubsection{Building \texorpdfstring{$G_0$}{}}
\begin{lemma}
\label{lem:buildpsdG}
There is an algorithm that takes an undirected, unweighted multigraph $G$ with normalized Laplacian $I-M$ and a parameter $\epsilon>0$, and outputs a rotation map Rot$_{G_0}$ for an undirected, unweighted multigraph $G_0$ with a two-way labeling and normalized Laplacian $I-M_0$ such that:
\begin{enumerate}
\item $M_0$ is PSD,
\item $I-M_0\approx_{\epsilon}I-M^2$,
\item The algorithm uses space $O(\log N +\log(1/\epsilon))$, where $N$ is the bit length of the input graph $G$.
\end{enumerate}
\end{lemma}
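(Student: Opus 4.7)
The plan is to set $G_0 = G \dpr G$ (the derandomized product of $G$ with itself, from Definition~\ref{def:derandproduct}), choosing the expander family $\mathcal{H}$ so that $M_0$ comes out PSD. By Theorem~\ref{thm:dprapprox} applied with $G_0 = G_1 = G$, we immediately obtain
\[
I - M_0 \;\approx_{\lambda}\; I - \tfrac{1}{2}\bigl(M\cdot M + M\cdot M\bigr) \;=\; I - M^2,
\]
where $\lambda$ is the spectral expansion bound for $\mathcal{H}$. Taking $\lambda = \epsilon$ yields property~2, and by Theorem~\ref{thm:expanders} this requires only degree $c = \mathrm{poly}(1/\epsilon)$ in $\mathcal{H}$.

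The main step is to ensure $M_0 \succeq 0$. Unpacking the proof of Theorem~\ref{thm:dprapprox} in the symmetric case $R_0 = R_1 = R$, the transition matrix of $G_0$ is $\tilde T = PR\tilde B R Q$. Combined with $P = DQ^T$ and $\tilde D = 2cD$, a direct calculation gives
\[
M_0 \;=\; \tilde D^{-1/2}\tilde T \tilde D^{1/2} \;=\; (QD^{1/2})^T\,(R\tilde B R)\,(QD^{1/2}).
\]
This is a congruence of $R\tilde B R$ by $QD^{1/2}$; since $R$ is a symmetric permutation matrix, $R\tilde B R$ is PSD exactly when the block-diagonal matrix $\tilde B$ is, which in turn reduces to asking that each expander $H_i \in \mathcal{H}$ have a PSD transition matrix.

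I can arrange PSD transition matrices at essentially no cost in either explicitness or expansion. The construction behind Theorem~\ref{thm:expanders} produces its final graph as the $k$-th power of a regular base expander with $k = O(\log 1/\epsilon)$; rounding $k$ up to the nearest even integer makes the final transition matrix equal to the square of a symmetric matrix, so every eigenvalue is nonnegative. This keeps the degree $\mathrm{poly}(1/\epsilon)$, only improves the spectral gap, and preserves strong explicitness. I then take $\mathcal{H}$ to consist of these even-power expanders, establishing properties~1 and~2.

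For the space bound, computing $\mathrm{Rot}_{G_0}(v,(j,a,b))$ requires a constant number of invocations of $\mathrm{Rot}_G$ (space $O(\log N)$) and $\mathrm{Rot}_{H_{d_v}}$ (space $O(\log(nc)) = O(\log N + \log(1/\epsilon))$, since $d_v \le n$ and $c = \mathrm{poly}(1/\epsilon)$). Composing these with Proposition~\ref{prop:composition} yields total space $O(\log N + \log(1/\epsilon))$, giving property~3. The main subtlety I anticipate is being careful when enforcing PSD not to inflate the expander degree beyond $\mathrm{poly}(1/\epsilon)$, but the even-power trick above keeps us safely in that regime.
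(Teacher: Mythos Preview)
Your argument is correct and takes a genuinely different route from the paper. The paper also starts with the derandomized square $\tilde G = G\dpr G$, but it does \emph{not} force the expanders to have PSD transition matrices; instead it ``lazifies'' $\tilde G$ after the fact by duplicating each edge $t=\lceil 4/\epsilon\rceil$ times and adding $\tilde d_v$ self-loops at each vertex, yielding $M_0=(t\tilde M+I)/(t+1)$. Positive semidefiniteness then comes from the eigenvalue shift $\tilde M\succeq -\delta I$ (which follows from $I-\tilde M\approx_\delta I-M^2$), and the approximation $I-M_0\approx_\epsilon I-M^2$ is recovered because the $1/(1+\delta)$ scaling is absorbed into the $\epsilon$ slack.

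Your approach is cleaner: by observing the congruence $M_0=(QD^{1/2})^T R\tilde B R\,(QD^{1/2})$ you reduce PSD-ness of $M_0$ to PSD-ness of the expander transition matrices, and then enforce the latter for free by taking even powers in the construction of Theorem~\ref{thm:expanders}. This avoids the extra self-loop / duplication layer entirely. The paper's approach, on the other hand, is agnostic to the internals of the expander family --- any black-box family from Theorem~\ref{thm:expanders} works unchanged --- at the cost of one additional (very cheap) graph transformation. Both arrive at the same space bound and approximation quality.

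One small slip: your justification ``since $d_v\le n$'' is not valid for multigraphs, where degrees can exceed the vertex count. The bound you actually need is $d_v\le N$ (the degree is at most the number of edges, which is at most the input bit length), and that still gives $O(\log N+\log(1/\epsilon))$.
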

\begin{proof}
Let $\delta=1/\lceil{4/\epsilon}\rceil$ and $t=1/\delta$, an integer. Let $\mathcal{H}$ be a family of $c$-regular expanders of every size from Theorem \ref{thm:expanders}, such that for every $H\in\mathcal{H}$, $\lambda(H)\leq \delta$ (and hence $c=\mathrm{poly}(1/\delta)$). 

Let $\tilde{G}=G\dpr G$ be the derandomized square of $G$ with normalized Laplacian $I-\tilde{M}$. Each vertex $v$ in $\tilde{G}$ has degree $\tilde{d_v}=2\cdot c\cdot d_v$, where $d_v$ is the degree of $v$ in $G$. We construct $G_0$ as follows: duplicate every edge of $\tilde{G}$ to have multiplicity $t$ and then for each vertex $v$, add $\tilde{d_v}$ self loops. So for each vertex $v$ in $G_0$, $v$ has degree $(t+1)\cdot 2\cdot c\cdot d_v$ and hence $G_0$ has the same stationary distribution as $G$. Note that we can write 
\[
M_0 = (t\cdot \tilde{M} + I)/(t+1).
\]
First we show that $M_0$ is PSD. From Theorem \ref{thm:dprapprox}, we have $I-\tilde{M}\approx_{\delta}I-M^2$, so $I-\tilde{M}\preceq (1+\delta)\cdot (I-M^2)\preceq (1+\delta)\cdot I$, since $M^2$ is PSD. Thus $\tilde{M}\succeq -\delta\cdot I$ and
\[
M_0\succeq\frac{t\cdot(-\delta \cdot I)+I}{t+1}\succeq 0.
\]

Next we prove that $I-M_0\approx_{\epsilon}I-M^2$ 
\begin{align*}
I-M_0&=(t/(t+1))\cdot(I-\tilde{M})\\
&=\left(\frac{1}{1+\delta}\right)\cdot(I-\tilde{M})\\
&\preceq I-M^2.
\end{align*}
Observe that since $I-\tilde{M}\approx_{\delta} I-M^2$, we also have
\begin{align*}
I-M_0&=\left(\frac{1}{1+\delta}\right)\cdot(I-\tilde{M})\\
&\succeq \left(\frac{1-\delta}{1+\delta}\right)\cdot(I-M^2)\\
&\succeq (1-\epsilon)\cdot(I-M^2).
\end{align*}

We can construct a two-way labeling of $G$ in space $O(\log N)$ by arbitrarily numbering the edges incident to each vertex. Computing Rot$_{\tilde{G}}$ involves computing Rot$_G$ twice and the rotation map of an expander in $\mathcal{H}$ once. For a given vertex degree $d$ in $G$, Rot$_{H_{d}}$ can be computed in space $O(\log (d\cdot c))=O(\log N + \log(1/\epsilon))$. Duplicating the edges and adding self loops for Rot$_{G_0}$ adds at most $O(\log N+\log(1/\epsilon))$ overhead for a total of $O(\log N+\log(1/\epsilon))$ space.
\end{proof}

\subsubsection{Proof of Spectral Approximation}
\begin{lemma}
\label{lem:pf_of_spec_approx}
Let $G$ be an undirected multigraph with normalized Laplacian $I-M$, $r$ be a positive integer and $\epsilon\in(0,1)$. Let $G_z$ be the output of Algorithm \ref{alg:main} with normalized Laplacian $I-M_z$. Then
\[
I-M_z \approx_{\epsilon} I-M^r
\]
\end{lemma}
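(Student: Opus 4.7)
The central idea is to track, by induction on $i\in\{0,1,\ldots,z-1\}$, the invariant
\[
I - M_i \approx_{\epsilon_i} I - M_0^{q_i},
\]
where $q_i$ is the integer whose binary representation is $b_z b_{z-1} \ldots b_{z-i}$. The base case $i = 0$ holds trivially with $q_0 = 1$ and $\epsilon_0 = 0$. For the inductive step, Theorem \ref{thm:dprapprox} tells us that each invocation of the derandomized product is a $\mu$-approximation of the corresponding average of products. When $b_{z-i} = 0$, I would then push the prior approximation through the square via Lemma \ref{lem:sq}, applied with $N = M_0^{q_{i-1}}$ (PSD because $M_0$ is PSD by Lemma \ref{lem:buildpsdG}). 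When $b_{z-i} = 1$, I would first square as before and then apply Lemma \ref{lem:plusapprox} with $N_1 = M_0^{2 q_{i-1}}$ and $N_2 = M_0$ (both PSD and commuting since they are powers of the same matrix) to pass the approximation through the subsequent derandomized product with $G_0$. Composing via Proposition \ref{prop:psdfacts} Part 2 yields a recursion of the form $\epsilon_i \leq 2\mu + \epsilon_{i-1} + O(\mu \epsilon_{i-1})$, which solves to $\epsilon_{z-1} \leq \epsilon/8$ since $\mu = \epsilon/(32z)$ and there are at most $z-1$ iterations.

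Next, I would translate the $M_0$-based approximation into an $M$-based one. Lemma \ref{lem:buildpsdG} gives $I - M_0 \approx_{\epsilon/(16z)} I - M^2$, and Proposition \ref{prop:psdfacts} Part 1 yields the reverse approximation with essentially the same constant. Since $q_{z-1} = \lfloor r/2 \rfloor$ has bit length at most $z$, Lemma \ref{lem:allpowersapprox} gives
\[
I - M_0^{q_{z-1}} \approx_{\epsilon/4} I - M^{2 q_{z-1}},
\]
and chaining with the loop invariant via Proposition \ref{prop:psdfacts} Part 2 yields $I - M_{z-1} \approx_{\epsilon/2} I - M^{2 q_{z-1}}$.

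For even $r$, we have $r = 2 q_{z-1}$ and $G_z = G_{z-1}$, so we are done. For odd $r$, the final step $G_z = G_{z-1} \dpr G$ produces, by Theorem \ref{thm:dprapprox}, the $\mu$-approximation $I - M_z \approx_\mu I - \tfrac{1}{2}(M_{z-1} M + M M_{z-1})$. I would then invoke Lemma \ref{lem:plusapprox} with $\tilde{N} = M_{z-1}$, $N_1 = M^{2 q_{z-1}}$ (PSD because $2 q_{z-1}$ is even and $M$ is symmetric), and $N_2 = M$ (which commutes with $N_1$ since both are powers of $M$) to conclude $I - \tfrac{1}{2}(M_{z-1} M + M M_{z-1}) \approx_{\epsilon/2} I - M \cdot M^{2 q_{z-1}} = I - M^r$. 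One more composition via Proposition \ref{prop:psdfacts} Part 2 delivers the final bound well within the budget $\epsilon$.

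The main obstacle is precisely the odd-$r$ case: we cannot simply compose with $M_0$ at the very end because $M_0$ approximates $M^2$, not $M$, so we are forced to first revert from $M_0^{q_{z-1}}$ to $M^{2 q_{z-1}}$ via Lemma \ref{lem:allpowersapprox} before applying the final plus-one step. The key enabling observation is that $M^{2 q_{z-1}}$ is PSD (as an even power of a symmetric matrix) and commutes with $M$, so Lemma \ref{lem:plusapprox} is applicable. Beyond this, the remaining work is careful bookkeeping: the choice $\mu = \epsilon/(32z)$ provides enough slack so that the multiplicative compounding in Proposition \ref{prop:psdfacts} Part 2, the $O(\ell(q) \cdot \epsilon/z)$ loss from Lemma \ref{lem:allpowersapprox}, and the final derandomized-product error all fit comfortably below $\epsilon$.
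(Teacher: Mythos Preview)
Your proposal is correct and follows essentially the same approach as the paper: the same induction in $i$ showing $I-M_i\approx_{O(\mu i)} I-M_0^{q_i}$ via Theorem~\ref{thm:dprapprox} combined with Lemma~\ref{lem:sq} (square) and Lemma~\ref{lem:plusapprox} (plus one), then translating from $M_0$ to $M$ via Lemma~\ref{lem:allpowersapprox}, and finally handling odd $r$ with one more plus-one step using that $M^{2q_{z-1}}$ is PSD. The only differences are cosmetic bookkeeping of constants (the paper tracks the explicit invariant $\epsilon_i=4\mu i$ rather than your recursion, and obtains $\epsilon/8$ and $\epsilon/3$ where you get $\epsilon/4$ and $\epsilon/2$), all well within the $\epsilon$ budget.
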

\begin{proof}
Let $b_zb_{z-1}\ldots b1 b_0$ be the binary representation of $r$. Recall that for the derandomized products in our algorithm we use a family of $c$-regular expanders $\mathcal{H}$ from Theorem \ref{thm:expanders} such that for every $H\in\mathcal{H}$, $\lambda(H)\leq\mu=\epsilon/(32\cdot z)$ (and hence $c=\mathrm{poly}(1/\mu)=\mathrm{poly}((\log r)/\epsilon)$).

We construct $G_0$ with normalized Laplacian $I-M_0$ as in Lemma \ref{lem:buildpsdG} such that $M_0$ is PSD and $I-M_0 \approx_{\epsilon/(16\cdot z)}I-M^2$. By Proposition \ref{prop:psdfacts} Part 1, and the fact that 
\begin{align*}
\frac{\epsilon/(16\cdot z)}{1-\epsilon/(16\cdot z)}&=\frac{\epsilon}{(16\cdot z)-\epsilon}\\
&\leq \frac{\epsilon}{8\cdot z},
\end{align*}
we also have $I-M^2 \approx_{\epsilon/(8\cdot z)}I-M_0$.  

For each $i\in\{0,\ldots z\}$ let $r_i$ be the integer with binary representation $b_zb_{z-1}\ldots b_{z-i}$ and let $I-M_i$ be the normalized Laplacian of $G_i$. We will prove by induction on $i$ that $G_i$ is a $(4\cdot\mu\cdot i)$-approximation to $I-M_0^{r_i}$. Thus, $G_{z-1}$ is a $4\cdot\mu\cdot(z-1)\leq \epsilon/8$-approximation to $I-M_0^{r_{z-1}}$.

The base case is trivial since $r_0=1$. For the induction step, suppose that $I-M_{i-1}\approx_{4\cdot\mu\cdot (i-1)}I-M_{0}^{r_{z-i+1}}$. On iteration $i$, if $b_{z-i}=0$, then $G_{i}=G_{i-1}\dpr G_{i-1}$. So we have 
\begin{align*}
I-M_i &\approx_{\mu} I-M_{i-1}^2 \\
&\approx_{4\cdot\mu\cdot (i-1)} I-M_{0}^{2\cdot r_{i-1}}\\
&=I-M_0^{r_{i}}
\end{align*}
where the first approximation uses Theorem \ref{thm:dprapprox} and the second uses Lemma \ref{lem:sq}. By Proposition \ref{prop:psdfacts} Part 2 this implies that $I-M_i$ approximates $I-M_0^{r_{i}}$ with approximation factor 
\[
\mu +4\cdot\mu\cdot (i-1)+4\cdot\mu^2\cdot (i-1)\leq 4\cdot\mu\cdot i
\]
where we used the fact that $\mu<1/(32\cdot (i-1))$.

If $b_{z-i}=1$, $G_{i}=(G_{i-1}\dpr G_{i-1})\dpr G_0$. Let $I-M_{\mathrm{ds}}$ be the normalized Laplacian of $G_{i-1}\dpr G_{i-1}$. By the analysis above, $I-M_{\mathrm{ds}}$ is a $(\mu +4\cdot\mu\cdot (i-1)+4\cdot\mu^2\cdot (i-1))$-approximation of $I-M_0^{2\cdot r_{i-1}}$. By Theorem \ref{thm:dprapprox} and Lemma \ref{lem:plusapprox} we have 
\begin{align*}
I-M_i &\approx_{\mu} I-\frac{1}{2}\cdot(M_{\mathrm{ds}}M_0 + M_0M_{\mathrm{ds}})\\
&\approx_{\mu +4\cdot\mu\cdot (i-1)+4\cdot\mu^2\cdot (i-1)} I-M_0^{2\cdot r_{i-1}}M_0\\
&=I-M_0^{r_i}
\end{align*}
Applying Proposition \ref{prop:psdfacts} Part 2 and noting that $\mu\leq 1/(32\cdot(i-1))$ we get
\[
I-M_i \approx_{4\cdot\mu\cdot i} I-M_0^{r_i}.
\]

So we conclude that $I-M_{z-1}\approx_{\epsilon/8}I-M_0^{r_{z-1}}$. Furthermore, by Lemma \ref{lem:allpowersapprox} we have 
\[
I-M_0^{r_{z-1}}\approx_{\epsilon/8}I-M^{2\cdot r_{z-1}}.
\]
By Proposition \ref{prop:psdfacts} Part 2, and the fact that $\epsilon\leq 1$, this gives
\[
I-M_{z-1}\approx_{\epsilon/3}I-M^{2\cdot r_{z-1}}
\]
If $b_0=0$ then $2\cdot r_{z-1}=r$ and we are done. If $b_0=1$ then we apply one more plus one operation using our original graph $G$ to form $G_z=G_{z-1}\dpr G$ such that
\begin{align*}
I-M_z &\approx_{\mu} I-\frac{1}{2}\cdot (M_{z-1}M+MM_{z-1})\\
&\approx_{\epsilon/3}I-M^{2\cdot r_{z-1}+1}\\
&=I-M^{r}.
\end{align*}
Applying Proposition \ref{prop:psdfacts} Part 2 then gives $I-M_z \approx_{\epsilon} I-M^{r}$.
\end{proof}

\subsubsection{Analysis of Space Complexity}
\begin{lemma}
\label{lem:pf_of_space}
Algorithm \ref{alg:main} can be implemented so that given an undirected multigraph $G$, a positive integer $r$, and $\epsilon\in(0,1)$, it computes its output $G_z$ in space $\SC$, where $N$ is the bit length of the input graph $G$. 
\end{lemma}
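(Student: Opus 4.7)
The plan is to implement Algorithm \ref{alg:main} by computing a single entry of $\mathrm{Rot}_{G_z}$ on demand, unfolding the recursive definition of the derandomized product (Definition \ref{def:derandproduct}) on the fly rather than ever materializing any $G_i$ explicitly. I will argue by induction on $i$ that $\mathrm{Rot}_{G_i}$ can be evaluated in space $O(\log N + i\cdot \log c)$, where $c=\poly(1/\mu)=\poly((\log r)/\epsilon)$ is the degree of the expanders from Theorem \ref{thm:expanders} that drive the derandomized products, so that $\log c = O(\log\log r + \log(1/\epsilon))$. Setting $i=z=O(\log r)$ then yields the claimed bound $\SC$.

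For the base cases, $\mathrm{Rot}_G$ is computed directly from the input using any canonical two-way labeling in space $O(\log N)$, while $\mathrm{Rot}_{G_0}$ is computed from $\mathrm{Rot}_G$ in space $O(\log N + \log(1/\epsilon))$ by Lemma \ref{lem:buildpsdG}. The expander family from Theorem \ref{thm:expanders} provides $\mathrm{Rot}_H$ for an expander on $m$ vertices in space $O(\log m)$, and the worst case $m$ at stage $i$ is a vertex degree in $G_{i-1}$, i.e.\ at most $\dmax\cdot(2c)^{i-1}$, contributing $O(\log N+(i-1)\log c)$ space.

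For the inductive step, to compute $\mathrm{Rot}_{G_i}(v_0,(j_0,a_0,b))$ I unroll Definition \ref{def:derandproduct}: a single register of size $O(\log N)$ carries the ``current vertex'' $v_0\to v_1\to v_2$ through the three phases (first subgraph step, expander step, second subgraph step). Between calls, level $i$'s activation record retains only $O(\log c + 1)$ bits --- namely $(a_0,b)$ before the $H$-step and $(a_1,\bar b)$ after --- plus, transiently while $\mathrm{Rot}_H$ is being evaluated, the intermediate label $j_1$ of size $O(\log N+(i-1)\log c)$. The key observation is that only one recursion level at a time can hold such a transient label: once level $i$ passes $j_2$ into its second recursive call, it discards $j_2$ and keeps only the $O(\log c+1)$-bit activation record, and all shallower (outer) levels are in between their own calls, each storing only $O(\log c+1)$ bits. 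The plus-one branch $G_i = (G_{i-1}\dpr G_{i-1})\dpr G_0$ and the final odd-$r$ step $G_z = G_{z-1}\dpr G$ compose two derandomized products per iteration, which only doubles the total rotation-map recursion depth; the overall depth remains $O(z)=O(\log r)$.

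The space analysis then follows by composition (Proposition \ref{prop:composition}): at any moment the working space is bounded by the sum of (i) the $O(\log c+1)$ activation-record contributions across stack depth $O(\log r)$, giving $O((\log r)(\log\log r+\log(1/\epsilon)))$; (ii) at most one transient contribution of $O(\log N+(\log r)\log c)$ for an in-progress $\mathrm{Rot}_H$ evaluation or intermediate label; and (iii) the $O(\log N)$ current-vertex register and the $O(\log N+\log(1/\epsilon))$ overhead from Lemma \ref{lem:buildpsdG}. Summing gives $O(\log N + (\log r)\log(1/\epsilon)+(\log r)\log\log r)=\SC$. The main obstacle will be formalizing the ``only one transient label at a time'' claim for the two composed derandomized products in the plus-one branch, which amounts to scheduling the sub-rotation-map calls sequentially and reusing the transient-label scratch space across the two stages; this is a routine adaptation of the space-efficient implementation of repeated derandomized squaring from \cite{RozenmanVa05,derandbeyond2017} to our irregular, product-of-distinct-graphs setting.
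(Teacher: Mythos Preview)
Your proposal is correct and follows essentially the same approach as the paper: an inductive argument that evaluating $\mathrm{Rot}_{G_i}$ costs $O(\log c)$ additional space over $\mathrm{Rot}_{G_{i-1}}$, yielding the recurrence $\Space(G_i)=\Space(G_{i-1})+O(\log c)$ and hence the bound $\SC$. The paper's version is somewhat cleaner in that it renumbers the iterations so each step is a single derandomized product (avoiding your ``doubles the depth'' remark) and works explicitly in an in-place multi-tape Turing machine model---the full edge label of length $\ell_i$ sits on one tape and each recursion level simply peels off and later restores its own $(a,b)$ suffix---which makes your ``only one transient label at a time'' observation automatic rather than something to be argued separately.
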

\begin{proof}
We show how to compute Rot$_{G_z}$ in space $\SC$. Let $b_zb_{z-1}\ldots b_0$ be the binary representation of $r$. Following Algorithm \ref{alg:main}, $G_0$ is constructed with normalized Laplacian $I-M_0\approx_{\epsilon/(16\cdot z)}I-M^2$. From Lemma \ref{lem:buildpsdG}, we know Rot$_{G_0}$ can be computed in space $O(\log N + \log(16\cdot z/\epsilon))=O(\log N+\log(1/\epsilon)+\log\log r)$. Let $d_1,\ldots,d_n$ be the vertex degrees in $G_0$ and $\dmax$ be the maximum degree. 

The algorithm is presented to have $z$ iterations, where on iteration $i\in[z-1]$, if $b_{z-i}=0$ the derandomized product is invoked once, and if $b_{z-i}=1$, it is invoked twice. On iteration $z$ it is either invoked once ($b_0=1$)  or not at all ($b_0=0$). It will be simpler for us to think of each derandomized product happening in its own iteration. So we will consider $\tau=z+w=O(\log r)$ iterations where $w$ is the number of ones in $b_{z-1},\ldots,b_{0}$. On iterations $1,\ldots,z-1$, there are $z-1$ derandomized square operations and $w$ plus one operations. The final iteration will either have a plus one operation with the graph $G$ (if $b_0=1$) or no operation.

We copy the bits of $r$ into memory and expand them into $\tau$ bits as follows: for $i\in\{1,\ldots z-1\}$ if $b_{z-i}=0$, record a 0 (corresponding to a derandomized square) and if $b_{z-i}=1$, record a 0 followed by a 1 (corresponding to a derandomized square followed by a plus one operation). Finish by just recording $b_z$ at the end. Now we have $\tau$ bits $t_1,\ldots, t_\tau$ in memory where for $i<\tau$, $t_i=0$ if the $i$th derandomized product in our algorithm is a derandomized square and $t_i=1$ if the $i$th derandomized product is a plus one with the graph $G_0$. If $t_\tau=0$, we do no derandomized product on the last iteration and if $t_\tau=1$ we apply the plus one operation using $G$ instead of $G_0$ as described in the algorithm. 

We also re-number our graphs to be $G_{1},\ldots, G_{\tau}$ where $G_i$ is the graph produced by following the derandomized products corresponding to $t_1,\ldots, t_i$. For each $i\in[\tau]$ and $v\in[n]$, vertex $v$ in graph $G_i$ has degree $(2\cdot c)^{i}\cdot d_v$ because each derandomized product multiplies every vertex degree by a factor of $2\cdot c$. 

Since our graphs can be irregular, the input to a rotation map may have a different length than its output. To simplify the space complexity analysis, when calling a rotation map, we will pad the edge labels to always have the same length as inputs and outputs to the rotation map. For each graph $G_i$, we pad its edge labels to have length $\ell_i=\lceil{\log_2\dmax}\rceil +i\cdot \lceil{\log_2(2\cdot c)}\rceil$. 

Sublogarithmic-space complexity can depend on the model, so we will be explicit about the model we use. We compute the rotation map of each graph $G_i$ on a multi-tape Turing machine with the following input/output conventions:

\begin{itemize}
\item Input Description:
\begin{itemize}
\item Tape 1 (read-only): Contains the input $G$, $r$, and $\epsilon$ with the head at the leftmost position of the tape.
\item Tape 2 (read-write): Contains the input to the rotation map  $(v_0,k_0)$, where $v_0\in[n]$ is a vertex of $G_i$, and $k_0$ is the label of an edge incident to $v_0$ padded to have total length $\ell_i$. The tapehead is at the rightmost end of $k_0$. The rest of the tape may contain additional data.
\item Tape 3: (read-write) Contains the bits $t_1,\ldots, t_{\tau}$ with the head pointing at $t_i$.
\item Tapes 4+: (read-write): Blank worktapes with the head at the leftmost position.
\end{itemize}

\item Output Description:
\begin{itemize}
\item Tape 1: The head  should be returned to the leftmost position.
\item Tape 2: In place of $(v_0,k_0)$, it should contain $(v_2,k_2)=\mathrm{Rot}_{G_i}(v_0,k_0)$, where $v_2\in[n]$, and $k_2$ is padded to have total length $\ell_i$. The head should be at the rightmost position of $k_2$ and the rest of the tape should remain unchanged from its state at the beginning of the computation.
\item Tape 3: Contains the bits $t_1,\ldots, t_{\tau}$ with the head pointing at $t_i$.
\item Tapes 4+: (read-write): Are returned to the blank state with the heads at the leftmost position.
\end{itemize}
\end{itemize}

Let $\Space(G_i)$ be the space used on tapes other than tape 1 to compute Rot$_{G_i}$. We will show that $\Space(G_i)=\Space(G_{i-1}) + O(\log c)$. Recalling that $\Space(G_0)=O(\log N+\log(1/\epsilon)+\log\log r)$ and unraveling the recursion gives 
\begin{align*}
\Space(G_{z})&=O(\log N +\log(1/\epsilon)+\log\log r+ \tau\cdot \log c)\\
&=O(\log N + \log(1/\epsilon)+\log\log r+ \log r\cdot\log(\mathrm{poly}(\log r)/\epsilon))\\
&= O(\log N + (\log r)\cdot\log(1/\epsilon)+(\log r)\cdot \log\log r)
\end{align*}
as desired. Now we prove the recurrence on $\Space(G_i)$. We begin with $(v_0, k_0)$ on Tape 2 (possibly with additional data) and the tapehead at the far right of $k_0$. We parse $k_0$ into $k_0=(j_0, a_0, b)$ where $j_0$ is an edge label in $[(2\cdot c)^{i-1}\cdot d_{v_0}]$ padded to have length $\ell_{i-1}$, $a_0\in [c]$, and $b\in\{0,1\}$. 

Note that $G_i=G_{i-1}\dpr G'$ where for $i\neq \tau$, we have $G'=G_{i-1}$ if $t_{i-1}=0$ and $G'=G_{0}$ when $t_{i-1}=1$. We compute Rot$_{G_i}$ according to Definition \ref{def:derandproduct}. We move the head left to the rightmost position of $j_0$. If $b=0$, we move the third tapehead to $t_{i-1}$ and recursively compute Rot$_{G_{i-1}}(v_0,j_0)$ so that Tape 2 now contains $(v_1,j_1,a_0, b)$ (with $j_1$ padded to have the same length as $j_0$). The vertex $v_1$ in the graph $G_{i-1}$ has degree $d'=(2\cdot c)^{i-1}\cdot d_{v_1}$ so we next compute Rot$_{H_{d'}}(j_1,a_0)$ so that $(v_1, j_2, a_1, b)$ is on the tape. Finally we compute Rot$_{G'}(v_1,j_2)$ and flip $b$ to finish with $(v_2,j_3,a_1, \bar{b})$ on the second tape. We then move the third tapehead to $t_i$. If $b=1$ then we just swap the roles of $G_{i-1}$ and $G'$ above.  

So computing Rot$_{G_i}$ involves computing the rotation maps of $G_{i-1}$, $H_{d'}$, and $G'$ each once. Note that each of the rotation map evaluations occur in succession and can therefore reuse the same space. Clearly $\Space(G')\leq \Space(G_{i-1})$ because either $G'=G_{i-1}$ or $G'$ is either $G_0$ or $G$, both of whose rotation maps are subroutines in computing Rot$_{G_{i-1}}$. Computing Rot$_{H_{d'}}$ adds an overhead of at most $O(\log c)$ space to store the additional edge label $a_0$ and the bit $b$. So we can compute the rotation map of $G_\tau$ in space $\SC$. 
\end{proof}

\section{Corollaries}
\label{sect:cors}
\subsection{Random Walks}
Our algorithm immediately implies Theorem \ref{thm:conductance}, which we prove below. 

\begin{namedtheorem}[Theorem \ref{thm:conductance} Restated]
There is a deterministic algorithm that given an undirected multigraph $G$ on $n$ vertices, a positive integer $r$, a set of vertices $S$, and $\epsilon>0$, computes a number $\tilde{\Phi}$ such that
\[
(1-\epsilon)\cdot \Phi_r(S) \leq \tilde{\Phi} \leq (1+\epsilon)\cdot \Phi_r(S)
\]
and runs in space $\SC$, where $N$ is the bit length of the input graph $G$. 
\end{namedtheorem}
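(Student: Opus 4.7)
}

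The plan is to reduce to Theorem \ref{thm:main} by rewriting $\Phi_r(S)$ as a ratio of two quadratic forms of the normalized Laplacian $L = I - M^r$ of the $r$-step walk. First I would express the numerator and denominator explicitly. Let $e_S$ be the indicator vector of $S$, $\pi$ the stationary distribution with $\pi(u) = \deg(u)/(2|E|)$, and $T = AD^{-1}$. Then
\[
\Phi_r(S) = 1 - \frac{\Pr[V_0 \in S \wedge V_r \in S]}{\Pr[V_0 \in S]} = \frac{e_S^T (I - T^r) D e_S}{e_S^T D e_S}.
\]
Using the identity $I - T^r = D^{1/2}(I - M^r) D^{-1/2}$ stated in Section \ref{sect:prelims}, the numerator becomes $(D^{1/2} e_S)^T L (D^{1/2} e_S)$, and the denominator is $(D^{1/2} e_S)^T (D^{1/2} e_S)$. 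Setting $v = D^{1/2} e_S$, we obtain the clean identity
\[
\Phi_r(S) = \frac{v^T L v}{v^T v}.
\]

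The algorithm is then: invoke Theorem \ref{thm:main} on $G$, $r$, and $\epsilon$ to obtain an undirected multigraph $\tilde{G}$ whose normalized Laplacian $\tilde{L}$ satisfies $\tilde{L}\approx_\epsilon L$, and output
\[
\tilde{\Phi} := \frac{v^T \tilde{L} v}{v^T v}.
\]
Applying the definition of $\epsilon$-spectral approximation to the fixed vector $v$ yields $(1-\epsilon)\cdot v^T L v \leq v^T \tilde{L} v \leq (1+\epsilon)\cdot v^T L v$, and dividing by the positive scalar $v^T v$ (which depends only on $G$ and $S$, not on $\tilde{G}$) gives the claimed approximation bound for $\tilde\Phi$.

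For the space complexity, Theorem \ref{thm:main} provides the rotation map of $\tilde{G}$ in space $\SC$. Since all derandomized products in Algorithm \ref{alg:main} scale every vertex degree by the same factor, $\tilde{G}$ has degrees $\tilde{d}_u = \alpha \cdot d_u$ for a known scalar $\alpha$, so its normalized Laplacian can be expanded as
\[
v^T \tilde{L} v \;=\; v^T v \;-\; \sum_{u}\sum_{\text{edges }(u,w)\text{ at }u} \frac{v_u\, v_w}{\sqrt{\tilde{d}_u\, \tilde{d}_w}},
\]
where the inner sum is enumerated by walking through edge labels $1,\dots,\tilde d_u$ at each vertex $u$ and calling $\mathrm{Rot}_{\tilde G}$ to retrieve the neighbor $w$. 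Each term uses $O(\log N + \log r + \log(1/\epsilon))$ bits to represent (the $v_u$'s are $0$ or $\sqrt{\deg_G(u)}$, and the degrees are $\mathrm{poly}(N, r, 1/\epsilon)$), so the accumulator and all counters fit in the desired budget. The total space is dominated by the one call to $\mathrm{Rot}_{\tilde G}$ kept on the stack at a time, giving $\SC$; the denominator $v^T v = \sum_{u\in S} \deg_G(u)$ is computed directly from $G$ in $O(\log N)$ space.

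The main obstacle is a purely bookkeeping one: making sure that the running accumulator for $v^T\tilde L v$, the edge enumeration over a graph of blown-up degree, and the repeated invocation of the (recursive) rotation map from Theorem \ref{thm:main} can all share the same $\SC$-sized workspace via Proposition \ref{prop:composition}. No new ideas beyond Theorem \ref{thm:main} and careful use of the rotation-map model of Section \ref{sect:spacemodel} should be required.
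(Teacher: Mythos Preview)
Your proposal is correct and follows essentially the same approach as the paper: set $v=D^{1/2}e_S$, observe that $\Phi_r(S)=v^T(I-M^r)v/v^Tv$ (the paper writes $1/d_S$ in place of $1/v^Tv$, which is the same quantity), invoke Theorem~\ref{thm:main}, and read off the multiplicative approximation from the definition of $\approx_\epsilon$. Your added discussion of how to evaluate $v^T\tilde L v$ by enumerating edges of $\tilde G$ via its rotation map is more detailed than what the paper provides, but is consistent with the paper's space model; note in passing that the square roots in your expression actually cancel (each nonzero summand equals $1/\alpha$), so no irrational arithmetic is needed.
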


\begin{proof}[Proof of Theorem \ref{thm:conductance}]
Let $D$ be the diagonal degree matrix and $I-M$ be the normalized Laplacian of $G$. Let $v=D^{1/2}e_S$ where $e_S$ is the characteristic vector of the set $S$. Let $d_S$ be the sum of the degrees of vertices in $S$. Then using the fact that $I-M^r=D^{-1/2}(I-T^r)D^{1/2}$ where $T$ is the transition matrix of $G$ gives:
\begin{align*}
\frac{1}{d_S}\cdot v^{T}(I-M^r)v &= \frac{1}{d_S}\cdot e_S^{T}D^{1/2}D^{-1/2}(I-T^r)D^{1/2}D^{1/2}e_S \\
&=\frac{1}{d_S}\cdot e_S^{T}De_{S} -  e_S^{T}(T^{r}(De_S/d_S))\\
&=1-\Pr[V_r\in S| V_0\in S]\\
&=\Phi_r(S)
\end{align*}
where the penultimate equality follows from the fact that $De_S/d_S$ is the probability distribution over vertices in $S$ where each vertex has mass proportional to its degree, i.e. the probability distribution $V_0\|(V_0\in S)$. Multiplying this distribution by $T^r$ gives the distribution of $V_r\|(V_0\in S)$. Multiplying this resulting distribution on the left by $e_S^{T}$, sums up the probabilities over vertices in $S$, which gives the probability that our random walk ends in $S$. 

From Theorem \ref{thm:main}, we can compute a matrix $\tilde{L}$ such that $\tilde{L}\approx_{\epsilon} I-M^r$ in space $\SC$. It follows from Proposition \ref{prop:psdfacts}, Part 6 and the definition of spectral approximation that 
\[
(1-\epsilon)\cdot \Phi_{r}(S)\leq \frac{1}{d_S}\cdot v^{T}\tilde{L}v\leq (1+\epsilon)\cdot \Phi_{r}(S).
\]
\end{proof}
Our algorithm also implies an algorithm for approximating random walk matrix polynomials.
\begin{definition}[Random walk matrix polynomials]
Let $G$ be an undirected multigraph with Laplacian $D-A$ and let $\alpha$ be a vector of nonnegative scalars $\alpha=(\alpha_1,\ldots,\alpha_{\ell})$ such that $\sum_{i\in[\ell]}\alpha_i=1$. The normalized \emph{$\ell$-degree random walk matrix polynomial} of $G$ with respect to $\alpha$ is defined to be:
\begin{align*}
L_{\alpha}(G)&\coloneqq I-\sum_{r=1}^{\ell}\alpha_{r}\cdot M^r\\
&=D^{1/2}\left(I-\sum_{r=1}^{\ell}\alpha_r\cdot T^r\right)D^{-1/2},
\end{align*}
where $M=D^{-1/2}AD^{-1/2}$ and $T=AD^{-1}$
\end{definition}
Random walk matrix polynomials are Laplacian matrices that arise in algorithms for approximating $q$th roots of symmetric diagonally dominant matrices and efficient sampling from Gaussian graphical models \cite{loh12,cheng2015,ChengCLPT14}. 
\begin{corollary}
There is a deterministic algorithm that given an undirected multigraph $G$, a positive integer $\ell$, a vector of nonnegative scalars $\alpha=(\alpha_1,\ldots,\alpha_{\ell})$ that sum to 1, and $\epsilon>0$, computes an $\epsilon$-approximation to $L_{\alpha}(G)$ and runs in space $\SC$.
\end{corollary}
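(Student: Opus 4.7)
The plan is to reduce directly to Theorem~\ref{thm:main} via the linear decomposition
\[
L_\alpha(G) \;=\; \sum_{r=1}^\ell \alpha_r \,(I - M^r),
\]
which holds because $\sum_{r=1}^\ell \alpha_r = 1$. For each $r \in \{1,\ldots,\ell\}$, I would apply Theorem~\ref{thm:main} with input $G$, power $r$, and error parameter $\epsilon$ to obtain a multigraph $\tilde{G}_r$ whose normalized Laplacian $\tilde{L}_r$ satisfies $\tilde{L}_r \approx_\epsilon I - M^r$. The algorithm's output is then the matrix $\tilde{L}_\alpha := \sum_{r=1}^\ell \alpha_r \tilde{L}_r$, written entry by entry.

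The approximation guarantee is immediate from Proposition~\ref{prop:psdfacts}. Each $I - M^r$ is PSD since the eigenvalues of $M$ lie in $[-1,1]$, so $\alpha_r(I - M^r)$ is PSD for every $r$. Part~6 of the proposition gives $\alpha_r \tilde{L}_r \approx_\epsilon \alpha_r (I - M^r)$, and iterating Part~5 then yields $\tilde{L}_\alpha \approx_\epsilon \sum_{r=1}^\ell \alpha_r (I - M^r) = L_\alpha(G)$, as required.

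For the space bound, I would output $\tilde{L}_\alpha$ by streaming its entries: for each pair $(i,j)$, loop $r$ from $1$ to $\ell$ maintaining a single running sum. Reading off $(\tilde{L}_r)_{ij}$ from the rotation map of $\tilde{G}_r$ is a standard subroutine: iterate through the labels of edges incident to $i$, count those rotated to $j$, and normalize by the appropriate degree factor (which grows predictably by a factor of $2c$ per derandomized product applied in Algorithm~\ref{alg:main}). The dominant cost is the per-iteration invocation of Theorem~\ref{thm:main}, which takes space $\SC$ (with its power parameter instantiated by some value $\leq \ell$); the outer loops over $r$ and over entries $(i,j)$ plus the local counters add only $O(\log(n\ell)) = O(\log N)$ additional bits. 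There is no substantial obstacle here---the proof is a direct linearity argument---and the only thing requiring any care is verifying that matrix entries of $\tilde{L}_r$ can be recovered in small space from the locally described graph $\tilde{G}_r$, which is a routine rotation-map computation.
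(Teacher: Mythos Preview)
Your proposal is correct and follows essentially the same approach as the paper: decompose $L_\alpha(G)=\sum_{r}\alpha_r(I-M^r)$, invoke Theorem~\ref{thm:main} for each $r$, and combine via Proposition~\ref{prop:psdfacts} Parts~5 and~6. The paper's space analysis is slightly more black-box (citing that multiplication and iterated addition are in $O(\log N)$ space and then applying Proposition~\ref{prop:composition}), but your entry-by-entry description amounts to the same thing.
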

\begin{proof}
For each $r\in[\ell]$, let $L_r$ be an $\epsilon$-approximation of $I-M^r$. Applying Proposition \ref{prop:psdfacts}, Parts 5 and 6 gives
\begin{align*}
\sum_{r=1}^{\ell}\alpha_r\cdot L_{r} &\approx_{\epsilon}\sum_{r=1}^{\ell}\alpha_{r}\cdot I-M^r\\
&=I-\sum_{r=1}^{\ell}\alpha_{r}\cdot M^r\\
&=L_{\alpha}(G)
\end{align*}
Our algorithm can compute each $L_r$ in space $\SC$. Multiplication and iterated addition can both be computed in $O(\log N)$ space where $N$ is the input bit length \cite{BCH86,ABH02}. So by the composition of space bounded algorithms (Proposition \ref{prop:composition}), $L_{\alpha}(G)$ can be approximated in space $\SC$.
\end{proof}

\subsection{Odd Length Walks in Nearly Linear Time}
Our approach to approximating odd length walks deterministically and space-efficiently also leads to a new result in the context of nearly linear-time (randomized) spectral sparsification algorithms. Specifically, we extend the following Theorem of Cheng, Cheng, Liu, Peng, and Teng \cite{cheng2015}.

\begin{theorem}[\cite{cheng2015}]
\label{thm:lintime}
There is a randomized algorithm that given an undirected weighted graph  $G$ with $n$ vertices, $m$ edges, and normalized Laplacian $I-M$, \emph{even} integer $r$, and $\epsilon>0$ constructs an undirected weighted graph $\tilde{G}$ with normalized Laplacian $\tilde{L}$ containing $O(n\log n/\epsilon^2)$ non-zero entries, in time $O(m\cdot \log^3 n\cdot \log^5 r/\epsilon^4)$, such that $\tilde{L}\approx_{\epsilon}I-M^r$ with high probability.
\end{theorem}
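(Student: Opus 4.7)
The plan is to prove Theorem~\ref{thm:lintime} by combining iterated squaring of the random-walk matrix with nearly-linear-time spectral sparsification, in a structure that mirrors Algorithm~\ref{alg:main} but replaces the derandomized product with a randomized sparsifier. I would maintain, for each $i = 1, 2, \ldots, \log_2 r$, a sparse graph $\tilde{G}_i$ with $O(n \log n / (\epsilon')^2)$ non-zero entries whose normalized Laplacian $\tilde{L}_i = I - \tilde{M}_i$ is an $\epsilon_i$-approximation of $I - M^{2^i}$, and then combine these $\tilde{L}_i$'s according to the binary expansion of $r$ to form $\tilde{L} \approx_\epsilon I - M^r$.

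To advance from $\tilde{L}_i$ to $\tilde{L}_{i+1}$, observe the identity $2\tilde{L}_i - \tilde{L}_i^2 = I - \tilde{M}_i^2$. Lemma~\ref{lem:sq}, combined with the fact that $M^{2^i}$ is PSD for $i \geq 1$, yields $I - \tilde{M}_i^2 \approx_{\epsilon_i} I - M^{2^{i+1}}$. Since $I - \tilde{M}_i^2$ can be dense, I would then apply a Spielman--Srivastava-style sparsifier to obtain $\tilde{L}_{i+1} \approx_{\epsilon'} I - \tilde{M}_i^2$ with only $O(n \log n / (\epsilon')^2)$ non-zero entries, and chain the two approximations via Proposition~\ref{prop:psdfacts}, part~2. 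Setting $\epsilon' = \Theta(\epsilon / \log r)$ keeps the total accumulated error below $\epsilon$ across all $\log_2 r$ levels. Because $r$ is assumed even, the final combination dictated by the binary expansion of $r$ can use only plus-two-style products on the sparse $\tilde{L}_i$'s (staying PSD throughout), followed by a sparsification after each product.

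The main obstacle, and where the nontrivial work lies, is achieving the stated running time $O(m \log^3 n \log^5 r / \epsilon^4)$. A naive implementation would materialize $I - \tilde{M}_i^2$ explicitly, which can have $\Theta(n^2)$ non-zero entries and makes the per-level cost prohibitive. To avoid this, I would sparsify $I - \tilde{M}_i^2$ \emph{implicitly}: sample candidate edges by taking two random-walk steps in $\tilde{G}_i$, and estimate effective resistances against $I - \tilde{M}_i^2$ by combining a Johnson--Lindenstrauss projection with a nearly-linear-time Laplacian solver, never writing $I - \tilde{M}_i^2$ down. The polylogarithmic factors $\log^3 n$ and $\log^5 r$ in the stated bound presumably track the cost of one Laplacian solve plus projection per level, repeated across the $O(\log r)$ levels. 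Verifying that this implicit sparsification genuinely runs in $\tilde{O}(m)$ per level, so that the final totals match, is the delicate accounting I expect to spend most of the proof on.
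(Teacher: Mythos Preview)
This theorem is not proved in the paper at all: it is quoted verbatim from \cite{cheng2015} and used as a black box. The paper's only contribution in this section is Corollary~\ref{cor:lintime}, which \emph{extends} Theorem~\ref{thm:lintime} to odd $r$ by invoking Theorem~\ref{thm:lintime} to get an approximation to $I-M^{r-1}$, applying Lemma~\ref{lem:plusapprox} to pass to $I-M^r$, and then sparsifying the resulting product via the product-graph decomposition of Lemma~\ref{lem:sparsifyproduct}. So there is no ``paper's own proof'' of Theorem~\ref{thm:lintime} to compare your proposal against.

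Your sketch is a reasonable outline of what the \cite{cheng2015} argument looks like (repeated squaring interleaved with nearly-linear-time sparsification, with Lemma~\ref{lem:sq} controlling error growth), but be aware that the part you flag as ``the delicate accounting''---sparsifying $I-\tilde{M}_i^2$ without materializing it---is exactly the technical heart of that paper, and the details (two-step sampling, effective-resistance estimation via JL plus a Laplacian solver) are carried out there, not here. If your goal is to reconstruct a proof, you should work from \cite{cheng2015} directly; the present paper simply assumes the result.
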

Our approach to approximating odd length walks can be used to extend Theorem \ref{thm:lintime} to odd $r$. 

\begin{corollary}
\label{cor:lintime}
There is a randomized algorithm that given an undirected weighted graph $G$ with $n$ vertices, $m$ edges, and normalized Laplacian $I-M$, \emph{odd} integer $r$, and $\epsilon>0$ constructs an undirected weighted graph $\tilde{G}$ with normalized Laplacian $\tilde{L}$ containing $O(n\log n/\epsilon^2)$ non-zero entries, in time $O(m\cdot \log^3 n\cdot \log^5 r/\epsilon^4)$, such that $\tilde{L}\approx_{\epsilon}I-M^r$ with high probability.
\end{corollary}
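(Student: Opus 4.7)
The plan is to reduce the odd-$r$ case to the even case via a single ``plus one'' step, mirroring the strategy used by Algorithm \ref{alg:main} to handle arbitrary $r$. First, invoke Theorem \ref{thm:lintime} on the even integer $r-1$ with error parameter $\epsilon' = \epsilon/3$ to obtain, in time $O(m \log^3 n \log^5 r / \epsilon^4)$, a sparsifier $\tilde{L}_{r-1}$ with $O(n \log n / \epsilon^2)$ non-zeros satisfying $\tilde{L}_{r-1} \approx_{\epsilon'} I - M^{r-1}$. Writing $\tilde{L}_{r-1} = I - \tilde{M}_{r-1}$, form the symmetrized ``plus one'' matrix
\[
\tilde{L}' \coloneqq I - \tfrac{1}{2}\bigl(\tilde{M}_{r-1} M + M \tilde{M}_{r-1}\bigr).
\]
Since $r - 1$ is even, $M^{r-1}$ is PSD and commutes with $M$, so Lemma \ref{lem:plusapprox} (applied with $N_1 = M^{r-1}$, $N_2 = M$, and $\tilde{N} = \tilde{M}_{r-1}$) yields $\tilde{L}' \approx_{\epsilon'} I - M^r$.

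Next, observe that because both $M$ and $\tilde{M}_{r-1}$ have non-negative entries, the off-diagonal entries of $\tfrac{1}{2}(\tilde{M}_{r-1} M + M \tilde{M}_{r-1})$ are non-negative, so $\tilde{L}'$ is symmetric PSD with non-positive off-diagonals and, after rescaling by $D^{1/2}$ on both sides, is the un-normalized Laplacian of a weighted undirected graph $G'$ on $n$ vertices whose edges correspond naturally to pairs consisting of one edge of $G$ and one edge of $\tilde{G}_{r-1}$. Apply a nearly linear-time spectral sparsification routine to $G'$ to obtain a matrix $\tilde{L}$ with $O(n \log n / \epsilon^2)$ non-zeros and $\tilde{L} \approx_{\epsilon'} \tilde{L}'$. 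Chaining the two approximations via Proposition \ref{prop:psdfacts} Part~2 gives $\tilde{L} \approx_{2\epsilon' + (\epsilon')^2} I - M^r$, which is at most $\epsilon$ for our choice of $\epsilon'$.

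The main obstacle will be carrying out the sparsification of $G'$ within the stated time budget, since the dense product $\tilde{M}_{r-1} M$ may have as many as $\Omega(nm)$ non-zero entries and therefore cannot be materialized explicitly. The resolution is that the edges of $G'$ admit an efficient two-level description: each edge of $G'$ comes from choosing an edge $(u,w)$ of $\tilde{G}_{r-1}$ followed by an edge $(w,v)$ of $G$ (or vice versa), weighted in a way determined by the common degree matrix $D$. Leverage-score estimation and edge sampling can thus be implemented implicitly over this product representation in nearly linear total time, using the same sampling-and-sparsification machinery that underlies Theorem \ref{thm:lintime} and the matrix-polynomial sparsifiers of \cite{cheng2015}. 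With this implicit sparsifier, the extra work incurred by the plus-one step matches the time bound in the statement up to constant factors in the polylogarithmic terms.
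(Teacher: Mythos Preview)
Your high-level strategy matches the paper's exactly: run the even-length sparsifier of Theorem~\ref{thm:lintime} on $r-1$, apply Lemma~\ref{lem:plusapprox} to obtain $I-\tfrac{1}{2}(\tilde{M}_{r-1}M+M\tilde{M}_{r-1})\approx_{\epsilon'} I-M^r$, and then sparsify this product. The first two steps are carried out essentially as in the paper (which uses $\epsilon'=\epsilon/8$ rather than $\epsilon/3$, and also checks explicitly that $G$ and $\tilde{G}_{r-1}$ have proportional degree sequences so that the product is well-formed).

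The real content of the corollary, however, is the third step, and here your treatment is too vague to stand as a proof. You correctly flag the obstacle---$G'$ may have $\Omega(nm)$ edges and cannot be materialized---but ``leverage-score estimation and edge sampling can be implemented implicitly over this product representation'' is an assertion, not a mechanism, and the reference to the machinery of \cite{cheng2015} is not the right pointer. The paper supplies the missing idea concretely: write the \emph{directed} Laplacian $Q=D-AD^{-1}\tilde{A}$ as $\sum_{i\in[n]}Q_i$ with $Q_i=\mathrm{diag}(\tilde{A}_{i,:})-D_{i,i}^{-1}\,A_{:,i}\tilde{A}_{i,:}^{T}$, so that each $Q_i$ is a \emph{product graph} in the sense of \cite{CKPPRSV16}. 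Lemma~\ref{lem:sparsifyproduct} then sparsifies each $Q_i$ in time $O(s_i\log s_i/\epsilon^2)$ where $s_i$ is the number of neighbors of $i$ in the two graphs, and $\sum_i s_i=O(m)$. Symmetrizing via Lemma~\ref{lem:dir_implies_undir}, normalizing by $D^{-1/2}$, and applying one further round of undirected sparsification (Lemma~\ref{lem:st}) brings the edge count down to $O(n\log n/\epsilon^2)$. The passage through directed approximation and the per-vertex rank-one decomposition is precisely the piece your proposal is missing.
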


Our proof of Corollary \ref{cor:lintime} uses Theorem \ref{thm:lintime} as a black box. So in fact, given $G$ with normalized Laplacian $I-M$ and any graph $\tilde{G}$ whose normalized Laplacian approximates $I-M^r$ for even $r$, we can produce an approximation to $I-M^{r+1}$ in time nearly linear in the sparsities of $G$ and $\tilde{G}$. To prove the corollary, we use the same method used in \cite{PS13} and \cite{CKPPRSV16} for sparsifying two-step walks on undirected and directed graphs, respectively. The idea is that the graphs constructed from two-step walks can be decomposed into the union of \emph{product graphs}: graphs whose adjacency matrices have the form $xy^{T}$ for vectors $x,y\in\mathbb{R}^n$. We use the following fact from \cite{CKPPRSV16} that says that product graphs can be sparsified in time that is nearly-linear in the number of non-zero entries of $x$ and $y$ rather than the number of non-zero entries in $xy^{T}$, which may be much larger.

\begin{lemma}[Adapted from \cite{CKPPRSV16} Lemma 3.18]
\label{lem:sparsifyproduct}
Let $x,y$ be non-negative vectors with $\|x\|_1=\|y\|_1=r$ and let $\epsilon\in(0,1)$. Furthermore, let $s$ denote the total number of non-zero entries in $x$ and $y$ and let $L=\mathrm{diag}(y)-\frac{1}{r}\cdot xy^T$. Then there is an algorithm that in time $O(s\cdot\log s/\epsilon^2)$ computes a matrix $\tilde{L}$ with $O(s\cdot\log s/\epsilon^2)$ non-zeros such that $\tilde{L}$ is a directed $\epsilon$-approximation of $L$ with high probability.
\end{lemma}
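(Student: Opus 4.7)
The plan is to prove Lemma~\ref{lem:sparsifyproduct} by a straightforward importance-sampling sparsification of the ``product graph'' whose Laplacian is $L$, following the strategy used for two-step walk sparsification in \cite{PS13,CKPPRSV16}. First, decompose $L$ into elementary edge contributions. Using Definition~\ref{def:dirlap}, the directed graph associated with $L$ has an edge from $i$ to $j$ of weight $x_j y_i/r$, and a straightforward calculation verifies the identity
\[
L \;=\; \sum_{i,j} \frac{x_j\,y_i}{r}\,(e_i-e_j)e_i^T,
\]
since the $i=j$ terms vanish and the remaining sum reproduces $\mathrm{diag}(y)-xy^T/r$. Thus if $(I,J)$ is drawn with $I\sim y/r$ and $J\sim x/r$ independently, then $\mathbb{E}\left[r\cdot (e_I-e_J)e_I^T\right]=L$.

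The sparsifier is then defined as $\tilde L = \frac{r}{k}\sum_{\ell=1}^k (e_{I_\ell}-e_{J_\ell})e_{I_\ell}^T$ for $k = O(s\log s/\epsilon^2)$ independent samples. Each summand is a rank-one matrix supported on two coordinates contained in $\mathrm{supp}(x)\cup\mathrm{supp}(y)$, so $\tilde L$ has at most $2k=O(s\log s/\epsilon^2)$ nonzeros and automatically satisfies the kernel condition in Definition~\ref{def:dirapprox}. For running time, one preprocesses the two probability distributions $y/r$ and $x/r$ in $O(s)$ time using the alias method, so each of the $k$ samples can be drawn in $O(1)$ time, giving a total of $O(s+k)=O(s\log s/\epsilon^2)$.

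The main technical step, and the principal obstacle, is to show $\tilde L$ is a directed $\epsilon$-approximation of $L$ with high probability. Using the bilinear characterization of Lemma~\ref{lem:dirapprox}, it suffices to show that, with high probability, $u^T(\tilde L-L)v \leq (\epsilon/2)(u^TUu + v^TUv)$ for all $u,v\in\mathbb{R}^n$, where $U=(L+L^T)/2=\mathrm{diag}(y)-(xy^T+yx^T)/(2r)$. Equivalently, we must bound the spectral norm $\|U^{\dagger/2}(\tilde L-L)U^{\dagger/2}\|$ by $\epsilon$. This will follow from a matrix Bernstein inequality applied to the independent mean-zero summands $Z_\ell = (r/k)U^{\dagger/2}(e_{I_\ell}-e_{J_\ell})e_{I_\ell}^T U^{\dagger/2} - L/k$, by bounding both the almost-sure norm $\|Z_\ell\|$ and the variance parameters $\|\mathbb{E}\sum Z_\ell Z_\ell^T\|$, $\|\mathbb{E}\sum Z_\ell^T Z_\ell\|$. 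The key per-sample bounds reduce to bounding $e_I^TUe_I\leq y_I$ and $(e_I-e_J)^T U^\dagger (e_I-e_J)\leq O(1/y_I + 1/y_J)$ type quantities on the support of the sampling distribution, using the explicit form of $U$; these bounds are precisely what makes $y/r$ and $x/r$ the correct ``importance'' distributions. Plugging in these bounds and matching the variance against the target norm $\epsilon$ yields the sample complexity $k=O(s\log s/\epsilon^2)$, where the $\log s$ factor is the standard matrix Bernstein logarithm for dimension $s$ (the effective support size).

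I would then tie everything together: the sample count $k$ controls both sparsity and running time, the alias-method preprocessing gives the claimed time bound, and the matrix Bernstein computation gives the directed spectral guarantee. Since the original Lemma~3.18 of \cite{CKPPRSV16} is stated with nearly identical parameters, the adaptation here is essentially verifying that the sampling and concentration arguments from that work carry through to the product-graph Laplacian defined with the nonnegative vectors $x,y$ of equal $\ell_1$ mass $r$; no new ideas are required, and the main care is in the bookkeeping of the Bernstein variance parameters relative to the pseudoinverse of $U$.
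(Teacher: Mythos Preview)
The paper does not actually prove Lemma~\ref{lem:sparsifyproduct}; it is quoted as an adaptation of Lemma~3.18 of \cite{CKPPRSV16} and used as a black box in the proof of Corollary~\ref{cor:lintime}. So there is no ``paper's own proof'' to compare against. Your outline---writing $L$ as an expectation of rank-one edge contributions, taking $k=O(s\log s/\epsilon^2)$ independent samples, and invoking a matrix Bernstein inequality in the $U^{\dagger/2}(\cdot)U^{\dagger/2}$ normalization---is indeed the standard mechanism behind this kind of result in \cite{PS13,CKPPRSV16}, and is consistent with how Lemma~3.18 is proved in \cite{CKPPRSV16}.

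One point worth tightening in your sketch: sampling $(I,J)$ with $I\sim y/r$ and $J\sim x/r$ is sampling proportional to the edge \emph{weight} $x_Jy_I/r$, not to the leverage score $w_{IJ}\cdot(e_I-e_J)^TU^{\dagger}(e_I-e_J)$. For matrix Bernstein to give the stated sample count, you need the crucial structural fact about product graphs that the per-edge leverage score is bounded by a quantity depending only on the endpoints (roughly $x_J/r+y_I/r$), so that sampling by weight already oversamples each edge by at most a constant factor relative to the true importance distribution; this is exactly the content of the cited lemma in \cite{CKPPRSV16}. Your parenthetical ``$(e_I-e_J)^TU^{\dagger}(e_I-e_J)\leq O(1/y_I+1/y_J)$'' gestures at this but mixes up which coordinates of $x$ versus $y$ appear, and the actual bound requires a short argument specific to the bipartite product structure. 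That is the only place where real work is hiding; the rest of your plan is sound.
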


After using Lemma \ref{lem:sparsifyproduct} to sparsify each product graph in our decomposition, we then apply an additional round of graph sparsification.

\begin{lemma}[\cite{KPPS16}]
\label{lem:st}
Given an undirected graph $G$ with $n$ vertices, $m$ edges, and Laplacian $L$ and $\epsilon>0$, there is an algorithm that computes a graph $\tilde{G}$ with Laplacian $\tilde{L}$ containing $O(n\cdot\log n/\epsilon^2)$ non-zero entries in time $O(m\cdot \log^2 n/\epsilon^2)$ such that $\tilde{L}\approx_{\epsilon}L$ with high probability. 
\end{lemma}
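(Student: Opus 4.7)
The plan is to prove Lemma \ref{lem:st} by implementing the effective-resistance sampling framework of Spielman and Srivastava, with the approximate effective resistances computed via a Johnson--Lindenstrauss sketch combined with a nearly-linear-time Laplacian solver. Concretely, let $B \in \mathbb{R}^{m \times n}$ denote the signed edge--vertex incidence matrix of $G$ and $W$ the diagonal matrix of edge weights, so $L = B^T W B$. For an edge $e = (u,v)$, the effective resistance is $R_e = (e_u - e_v)^T L^{\dagger} (e_u - e_v)$, and by a well-known identity $\sum_e w_e R_e = n-1$.

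First, I would sample a $k \times m$ random $\pm 1/\sqrt{k}$ matrix $Q$ with $k = O(\log n / \epsilon^2)$ and set $Y = Q W^{1/2} B L^{\dagger} \in \mathbb{R}^{k \times n}$. By the Johnson--Lindenstrauss lemma, with high probability, for every edge $e = (u,v)$,
\[
(1-\epsilon/3)\, R_e \;\le\; \|Y(e_u - e_v)\|^2 \;\le\; (1+\epsilon/3)\, R_e,
\]
so reading off $\tilde R_e = \|Y(e_u-e_v)\|^2$ yields constant-factor estimates of every effective resistance simultaneously. To compute $Y$, note that each of the $k$ rows of $QW^{1/2}B$ is an explicit $n$-vector that can be formed in $O(m)$ time, and applying $L^{\dagger}$ to a vector to within sufficient accuracy can be done by the nearly-linear-time Laplacian solver in time $O(m \log n)$; doing this for all $k$ rows takes $O(m \log^2 n / \epsilon^2)$ total time. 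A standard perturbation argument shows that if the solver returns only an $\epsilon$-approximate inverse, the resulting $\tilde R_e$ still lies within a $(1 \pm \epsilon/2)$ factor of $R_e$.

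Next, I would independently include each edge $e$ in $\tilde G$ with probability $p_e = \min(1, C \cdot w_e \tilde R_e \log n / \epsilon^2)$ for an appropriate constant $C$, re-weighting kept edges by $w_e / p_e$; let $\tilde L = \sum_{e \text{ kept}} (w_e/p_e)\, b_e b_e^T$ where $b_e = e_u - e_v$. Since $\sum_e w_e \tilde R_e = O(n)$, the expected (and, via a Chernoff bound, actual) number of kept edges is $O(n \log n/\epsilon^2)$. To establish $\tilde L \approx_\epsilon L$, I would apply Tropp's matrix Chernoff inequality to the i.i.d.\ mean-zero matrices $X_e = \mathbb{1}[e \text{ kept}]\cdot(w_e/p_e) L^{\dagger/2} b_e b_e^T L^{\dagger/2} - L^{\dagger/2}(w_e b_e b_e^T) L^{\dagger/2}$; by construction each kept term has spectral norm $O(\epsilon^2/\log n)$, so the matrix Chernoff bound gives $\|\sum_e X_e\| \le \epsilon$ with high probability, which is exactly the condition $\tilde L \approx_\epsilon L$.

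The main obstacle I expect is Step 2: ensuring the time bound of $O(m \log^2 n / \epsilon^2)$ really goes through, which requires that the Laplacian solver invoked inside the JL sketch runs in $O(m \log n)$ time per solve and that the accuracy losses from both the solver and the JL embedding compose cleanly (i.e., do not force an extra factor of $\log n$ or $1/\epsilon$ in either the sample size or the runtime). This hinges on showing that constant-factor effective-resistance estimates suffice to make $\sum_e p_e = O(n \log n/\epsilon^2)$ while still delivering $\epsilon$-spectral approximation, which in turn requires a careful choice of the slack between the inner JL accuracy ($\epsilon/3$), the solver accuracy, and the final target approximation $\epsilon$.
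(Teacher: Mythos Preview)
The paper does not prove this lemma at all: it is quoted verbatim as a black-box result from \cite{KPPS16} and then invoked once inside the proof of Corollary~\ref{cor:lintime}. So there is no ``paper's own proof'' to compare against; any correct argument you supply is necessarily different in the trivial sense that the authors chose to cite rather than reprove.

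That said, your sketch is the standard Spielman--Srivastava effective-resistance sampling pipeline, and it is essentially the content underlying the cited result. One minor inefficiency worth flagging: you take the Johnson--Lindenstrauss dimension to be $k = O(\log n/\epsilon^2)$ in order to get $(1\pm\epsilon/3)$-accurate effective resistances, but the sampling analysis only needs \emph{constant}-factor estimates of $R_e$. With $k = O(\log n)$ rows you already get, say, $R_e/2 \le \tilde R_e \le 2R_e$ with high probability for all edges, and the matrix Chernoff argument goes through unchanged because the oversampling factor $C\log n/\epsilon^2$ absorbs the constant. This does not invalidate your argument (your choice of $k$ still hits the stated $O(m\log^2 n/\epsilon^2)$ runtime), but it explains why the $\epsilon$-dependence in the runtime is not forced by the JL step. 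The genuine $\epsilon$-dependence enters only through the number of samples $O(n\log n/\epsilon^2)$ needed for the matrix concentration bound, and through whatever $\epsilon$-dependence the underlying Laplacian solver carries. The ``main obstacle'' you identify --- getting the solver accuracy, JL slack, and sampling constants to compose --- is exactly the bookkeeping that the cited paper handles, and your outline of it is sound.
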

Now we can prove Corollary \ref{cor:lintime}

\begin{proof}[Proof of Corollary \ref{cor:lintime}]
Theorem \ref{thm:lintime} says that we can compute a graph $\tilde{G}$ with normalized Laplacian $I-\tilde{M}$ with $O(n\log n/\epsilon^2)$ non-zero entries, in time $O(m\cdot \log^3 n\cdot \log^5 r/\epsilon^4)$, such that $I-\tilde{M}\approx_{\epsilon/8}I-M^{r-1}$ with high probability. By Lemma \ref{lem:plusapprox} we have
\begin{equation}
\label{eq:oddtime}
I-\frac{1}{2}\cdot(\tilde{M}M+M\tilde{M}) \approx_{\epsilon/8} I-M^{r}.
\end{equation}

Our goal is to sparsify the lefthand side. Note that since $I-\tilde{M}$ spectrally approximates $I-M^{r-1}$, the corresponding graphs must have the same stationary distribution and hence proportional vertex degrees. In other words there is a number $k$ such that for all vertices $v\in[n]$ we have $\mathrm{deg}_{\tilde{G}}(v)=k\cdot\mathrm{deg}_{G}(v)$. We will think of the graph that adds one step to our walk as $k\cdot G$ rather than $G$ because $k\cdot G$ and $\tilde{G}$ have the same degrees and the normalized Laplacian of $k\cdot G$ is the same as the normalized Laplacian of $G$.

Let $A$ and $\tilde{A}$ be the adjacency matrices of $k\cdot G$ and $\tilde{G}$, respectively and let $D$ be the diagonal matrix of vertex degrees. Let $Q=D-AD^{-1}\tilde{A}$ and note that $Q$ is the Laplacian of a weighted directed graph. We will show how to compute a sparse directed approximation of $Q$ and use this to show how to compute a sparse approximation to the lefthand side of Equation \ref{eq:oddtime}. Our approach is inspired by similar arguments from \cite{PS13, CKPPRSV16}. We decompose $Q$ into $n$ product graphs as follows. For each $i\in[n]$ let
\[
Q_i = \mathrm{diag}(\tilde{A}_{i,:})-\frac{1}{D_{i,i}}\cdot A_{:,i}\tilde{A}_{i,:}^T
\]
where $\tilde{A}_{i,:}$ and $A_{:,i}$ denote the $i$th row of $\tilde{A}$ and the $i$th column of $A$, respectively. Observe that $Q_i$ is a directed Laplacian of a bipartite graph between the neighbors of vertex $i$ in $k\cdot G$ and the neighbors of $i$ in $\tilde{G}$ and that
$Q=\sum_{i\in [n]}Q_i$. Furthermore, each $Q_i$ is a product graph and hence can be sparsified using Lemma \ref{lem:sparsifyproduct}. Set $x_i=A_{:,i}$, $y_i=\tilde{A}_{i,:}$, $r_i=D_{i,i}$, and let $s_i$ be the total number of non-zero entries in $x$ and $y$. Note that $\|x_i\|_1=\|y_i\|_1=r_i$ because $k\cdot G$ and $\tilde{G}$ have the same vertex degrees. By Lemma \ref{lem:sparsifyproduct}, for each $i\in[n]$ we can compute a directed $\epsilon/8$-approximation $\tilde{Q}_i$ of $Q_i$ containing $O(s_i\cdot \log s_i/\epsilon^2)$ entries in time $O(s_i\cdot \log s_i/\epsilon^2)$. Applying the lemma to each $Q_i$ yields $\tilde{Q}=\sum_{i\in [n]}\tilde{Q}_i$, which contains $O(m\cdot \log m/\epsilon^2)$ non-zero entries and can be computed in time $O(m\cdot \log m/\epsilon^2)$ because $\sum_{i\in[n]}s_i=O(m)$. By Lemma \ref{lem:dir_implies_undir} we have 
\[
\frac{1}{2}\cdot(\tilde{Q}_i+\tilde{Q}_i^T)\approx_{\epsilon/8}\frac{1}{2}\cdot(Q_i+Q_i^T)
\]
for all $i\in[n]$ with high probability. It follows from Proposition \ref{prop:psdfacts} Part 5 that
\begin{align*}
\frac{1}{2}\cdot(\tilde{Q}+\tilde{Q}^T)&=\frac{1}{2}\cdot\sum_{i\in [n]}(\tilde{Q}_i+\tilde{Q}_i^T)\\
&\approx_{\epsilon/8}\frac{1}{2}\cdot\sum_{i\in [n]}(Q_i+Q_i^T)\\
&=\frac{1}{2}\cdot(Q+Q^T)
\end{align*}
with high probability. From Proposition \ref{prop:psdfacts} Part 3, we then get
\begin{align*}
D^{-1/2}\frac{1}{2}\cdot(\tilde{Q}+\tilde{Q}^T)D^{-1/2}&\approx_{\epsilon/8}D^{-1/2}\frac{1}{2}\cdot(Q+Q^T)D^{-1/2}\\
&=I-\frac{1}{2}\cdot(\tilde{M}M+M\tilde{M})
\end{align*}
with high probability. Applying Lemma \ref{lem:st} we can re-sparsify the graph corresponding to $D^{-1/2}\frac{1}{2}\cdot(\tilde{Q}+\tilde{Q}^T)D^{-1/2}$ to produce a graph $G'$ whose normalized Laplacian $I-M'$ has $O(n\cdot \log n/\epsilon^2)$ non-zero entries and $I-M'\approx_{\epsilon/8}D^{-1/2}\frac{1}{2}\cdot(\tilde{Q}+\tilde{Q}^T)D^{-1/2}$ with high probability. This takes additional time $O(m\cdot \log^2 n/\epsilon^2)$ due to Theorem 1.1 of \cite{KPPS16}. Applying Proposition \ref{prop:psdfacts} Part 2 twice we get that $I-M'\approx_{\epsilon}I-M^r$ and the total running time for the procedure was $O(m\cdot \log^3 n\cdot \log^5 r/\epsilon^4)$.
\end{proof}

\newpage
\bibliographystyle{alphanum}
\bibliography{graphpowers,pseudorandomness}

\newpage
\appendix
\section{Proof of Lemma \ref{lem:expanderapprox}}
\label{app:expanderapprox}
\begin{namedtheorem}[Lemma \ref{lem:expanderapprox} (restated)]
Let $H$ be a $c$-regular undirected multigraph on $n$ vertices with transition matrix $T$ and let $J\in\mathbb{R}^{n\times n}$ be a matrix with $1/n$ in every entry (i.e. $J$ is the transition matrix of the complete graph with a self loop on every vertex). Then $\lambda(H)\leq \lambda$ if and only if $I-T\approx_{\lambda} I-J$.
\end{namedtheorem}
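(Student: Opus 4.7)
The plan is to exploit the common eigenvector structure of $T$ and $J$ (both matrices are symmetric, share $\vec{1}$ as an eigenvector of eigenvalue $1$, and are invariant on $\vec{1}^\perp$) to reduce the spectral approximation condition to a simple bound on the quadratic form of $T$ restricted to $\vec{1}^\perp$.

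First I would unfold the definition of $\approx_\lambda$. The inequality $I-T \approx_\lambda I-J$ is equivalent to
\[
(1-\lambda)\cdot v^T(I-J)v \;\leq\; v^T(I-T)v \;\leq\; (1+\lambda)\cdot v^T(I-J)v
\qquad \text{for all } v\in\R^n,
\]
where I note that $I-J$ is PSD, so these quantities are nonnegative. Next I would decompose an arbitrary $v\in\R^n$ as $v=\alpha\cdot\vec{1}/\sqrt{n}+w$ with $w\perp\vec{1}$, and observe that $T\vec{1}=\vec{1}$ (since $H$ is regular) and $J\vec{1}=\vec{1}$. Hence $(I-T)v=(I-T)w$ and $(I-J)v=w$ (because $J$ acts as the orthogonal projector onto $\mathrm{span}(\vec{1})$, so $I-J$ is the projector onto $\vec{1}^\perp$). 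Consequently $v^T(I-J)v=\|w\|^2$ and $v^T(I-T)v=w^T(I-T)w=\|w\|^2-w^TTw$, so the three-term inequality collapses, for every $v$, to
\[
-\lambda\cdot\|w\|^2 \;\leq\; -w^T T w \;\leq\; \lambda\cdot\|w\|^2,
\]
i.e.\ $|w^TTw|\leq \lambda\|w\|^2$ for all $w\perp\vec{1}$.

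Finally I would invoke symmetry of $T$: since $T=A/c$ is symmetric for the regular graph $H$, the Rayleigh quotient characterization gives
\[
\sup_{\substack{w\perp\vec{1}\\ w\neq 0}}\frac{|w^TTw|}{\|w\|^2} \;=\; \max\bigl\{|\mu|:\mu\text{ eigenvalue of }T\text{ on }\vec{1}^\perp\bigr\} \;=\; \lambda(H),
\]
where the last equality uses the symmetry of $T$ to equate the largest singular value on $\vec{1}^\perp$ (which is how $\lambda(H)$ is defined in Definition~\ref{def:expansion}) with the largest absolute eigenvalue there. Combining this with the reduction above yields both implications of the ``if and only if'' simultaneously.

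I do not anticipate any real obstacle — the argument is a clean two-line reduction once one observes that $I-J$ is the orthogonal projector onto $\vec{1}^\perp$ and that $T$ and $J$ share the eigenvector $\vec{1}$ with eigenvalue $1$. The only mild subtlety is in identifying the operator $\lambda(H)$ (defined in terms of $\|Tv\|/\|v\|$) with $\sup_{w\perp\vec{1}}|w^TTw|/\|w\|^2$, which relies crucially on the symmetry of $T$ (and hence on $H$ being regular and undirected).
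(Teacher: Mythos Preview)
Your proof is correct and follows essentially the same approach as the paper: both exploit that $T$ and $J$ are symmetric and share $\vec{1}$ as an eigenvector of eigenvalue $1$, reducing the spectral-approximation condition to the eigenvalue bound $|\lambda_i|\le\lambda$ on $\vec{1}^\perp$. The only cosmetic difference is that the paper carries out a full simultaneous diagonalization via an orthonormal eigenbasis $V$ of $T$ (invoking Proposition~\ref{prop:psdfacts} Part~3 with $V$ orthogonal), whereas you work directly with the decomposition $\R^n=\mathrm{span}(\vec{1})\oplus\vec{1}^\perp$ and the Rayleigh-quotient characterization of $\lambda(H)$; the substance is identical.
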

\begin{proof}
Let $v_1,v_2,\ldots,v_n$ be the orthonormal eigenvectors of $I-T$ where $v_1$ is the uniform distribution (the vector containing $1/n$ in every coordinate). Note that these are also eigenvectors of $I-J$ as $(I-J)v_1=\vec{0}$ and $(I-J)v_i=v_i$ for all $i\neq 1$. 

Let $V$ be a matrix with $v_1,\ldots,v_n$ as columns. $V$ is an orthogonal matrix so $V^TV=I$. From Proposition \ref{prop:psdfacts} Part 3, we have $I-T\approx_{\lambda}I-J$ if and only if $V^T(I-T)V\approx_{\lambda}V^T(I-J)V$. $V^T(I-T)V$ is a diagonal matrix with the eigenvalues $0,1-\lambda_2,\ldots 1-\lambda_n$ of $I-T$ along the diagonal and $V^T(I-J)V$ is a diagonal matrix with the eigenvalues $0,1,\ldots, 1$ of $I-J$ along the diagonal. Diagonal matrices spectrally approximate each other if and only if their diagonal entries satisfy the spectral inequalities, so $V^T(I-T)V\approx_{\lambda}V^T(I-J)V$ if and only if for all $i\in\{2,\ldots,n\}$
\[
(1-\lambda)\cdot 1\leq 1-\lambda_i\leq (1+\lambda)\cdot 1.
\]
The above holds if and only if $|\lambda_i|\leq \lambda$ for all $i\in\{2,\ldots,n\}$, which is equivalent to $\lambda(H)\leq \lambda$. 
\end{proof}

\section{Proof of Lemma \ref{lem:symmequiv}}
\label{app:symmequiv}
\begin{namedtheorem}[Lemma \ref{lem:symmequiv} (restated)]
Let $\tilde{L}$ and $L$ be symmetric PSD matrices. Then $\tilde{L}$ is a directed $\epsilon$-approximation of $L$ if and only if $\tilde{L}\approx_{\epsilon}L$.
\end{namedtheorem}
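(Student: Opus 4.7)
The plan is to unfold both definitions into sandwich/kernel formulations and then pass between them by conjugation with $L^{1/2}$ and $L^{\dagger/2}$. When $L$ and $\tilde L$ are symmetric, the matrix $U=(L+L^T)/2$ in Definition \ref{def:dirapprox} equals $L$, so being a directed $\epsilon$-approximation reduces to the conjunction of (i) $\ker(L)\subseteq \ker(\tilde L - L)$ and (ii) $\bigl\|L^{\dagger/2}(\tilde L - L)L^{\dagger/2}\bigr\|_2 \leq \epsilon$. On the other side, $\tilde L \approx_\epsilon L$ is equivalent to the Loewner sandwich $-\epsilon L \preceq \tilde L - L \preceq \epsilon L$. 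The workhorse identity I would use throughout is $L^{\dagger/2}L^{1/2}=L^{1/2}L^{\dagger/2}=\Pi$, the orthogonal projector onto $\mathrm{range}(L)$, which is immediate from the spectral decomposition of the symmetric PSD matrix $L$.

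For the forward direction, suppose $\tilde L\approx_\epsilon L$. For the kernel condition (i), take $v\in\ker(L)$; the sandwich forces $v^T(\tilde L-L)v=0$, and then I would invoke the standard fact that $x^TAx=0$ together with $A\succeq 0$ implies $Ax=0$, applied to the PSD matrix $\tilde L - L +\epsilon L$, to conclude $(\tilde L-L)v=0$. For (ii), conjugate the sandwich by the symmetric PSD matrix $L^{\dagger/2}$ on both sides; since $L^{\dagger/2}LL^{\dagger/2}=\Pi\preceq I$, this yields $-\epsilon I\preceq L^{\dagger/2}(\tilde L-L)L^{\dagger/2}\preceq \epsilon I$, giving the spectral-norm bound.

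For the reverse direction, assume (i) and (ii). For an arbitrary $v\in\mathbb{R}^n$, I would set $x=L^{1/2}v$, so that $\|x\|^2=v^TLv$ and $L^{\dagger/2}x=\Pi v$. The kernel containment guarantees that $\tilde L-L$ annihilates $v-\Pi v\in\ker(L)$, and using symmetry of $\tilde L-L$ this gives
\[
v^T(\tilde L - L)v \;=\; (\Pi v)^T(\tilde L-L)(\Pi v) \;=\; x^T L^{\dagger/2}(\tilde L-L)L^{\dagger/2} x.
\]
Applying (ii) then yields $|v^T(\tilde L-L)v|\leq \epsilon\|x\|^2=\epsilon\, v^TLv$, which is exactly the sandwich defining $\tilde L\approx_\epsilon L$.

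The main obstacle, to the extent there is one, is purely bookkeeping on $\ker(L)$: everything is clean on $\mathrm{range}(L)$, and the kernel-containment clause of Definition \ref{def:dirapprox} is precisely calibrated so that the $L^{\dagger/2}$-conjugation is harmless on the null space. I do not anticipate any genuine analytic difficulty, only the need to handle the interplay between the pseudoinverse and quadratic forms carefully. As a sanity check, one could alternatively run the whole argument through Lemma \ref{lem:dirapprox} by setting $x=y=v$ (and $x=-v$, $y=v$) to recover the two-sided Loewner inequalities, but the direct route above avoids invoking that lemma twice.
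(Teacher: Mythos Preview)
Your proposal is correct. The ``if'' direction (undirected $\Rightarrow$ directed) is essentially identical to the paper's: both conjugate the Loewner sandwich by $L^{\dagger/2}$ to obtain the spectral-norm bound, and both verify the kernel condition separately (the paper argues via $\ker(L)=\ker(\tilde L)$, you via the PSD matrix $\tilde L - L + \epsilon L$; both work).

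The only substantive difference is in the ``only if'' direction (directed $\Rightarrow$ undirected). The paper dispatches this in one line by invoking Lemma~\ref{lem:dir_implies_undir} from~\cite{CKPPRSV16}, noting that when $L,\tilde L$ are symmetric the symmetrizations $U,\tilde U$ equal $L,\tilde L$ themselves. You instead give a direct, self-contained argument: decompose $v=\Pi v + (v-\Pi v)$, use the kernel condition to discard the component in $\ker(L)$, and then apply the norm bound to $x=L^{1/2}v$. Your route avoids the external citation and makes the role of the kernel clause transparent; the paper's route is shorter on the page but outsources the work. Mathematically they amount to the same computation.
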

\begin{proof}
The ``only if'' direction follows from Lemma \ref{lem:dir_implies_undir} by observing that $U\coloneqq(L+L^T)/2=L$ and $\tilde{U}\coloneqq (\tilde{L}+\tilde{L}^T)/2=\tilde{L}$.

For the ``if'' direction, suppose $\tilde{L}\approx_{\epsilon}L$. Equivalently,
\[
-\epsilon\cdot L \preceq \tilde{L}-L\preceq \epsilon\cdot L.
\]
Multiplying on the left and right by $L^{\dagger/2}$ preserves the ordering and yields
\[
-\epsilon\cdot L^{\dagger/2}LL^{\dagger/2} \preceq L^{\dagger/2}(\tilde{L}-L)L^{\dagger/2}\preceq \epsilon\cdot L^{\dagger/2}LL^{\dagger/2}.
\]
For each nonzero eigenvalue $\lambda$ of $L$, $L^{\dagger/2}$ has corresponding eigenvalue $1/\sqrt{\lambda}$ and hence $\left\|L^{\dagger/2}LL^{\dagger/2}\right\|\leq 1$. Since $L$ is symmetric we have $U\coloneqq(L+L^T)/2 = L$. Combining this with the above gives 
\begin{align*}
 \left\|U^{\dagger/2}(\tilde{L}-L)U^{\dagger/2}\right\|&= \left\|L^{\dagger/2}(\tilde{L}-L)L^{\dagger/2}\right\|\\
&\leq \epsilon\cdot\left\|L^{\dagger/2}LL^{\dagger/2}\right\|\\
&\leq \epsilon.
\end{align*}
Since $\tilde{L}\approx_{\epsilon}L$ by assumption, we must have $\mathrm{ker}(L)=\mathrm{ker}(\tilde{L})$. It follows that $\mathrm{ker}(\tilde{L}-L)=\mathrm{ker}(L)$ and $\mathrm{ker}((\tilde{L}-L)^T)=\mathrm{ker}(L)$ since $L$ and $\tilde{L}$ are symmetric. Since $U=L$, we have 
\begin{align*}
\mathrm{ker}(U)&=\mathrm{ker}(L)\\
&=\mathrm{ker}(\tilde{L}-L)\\
&=\mathrm{ker}(\tilde{L}-L)\cap\mathrm{ker}((L-\tilde{L})^T)
\end{align*}
\end{proof}

\section{Proof of Lemma \ref{lem:sq}}
\label{app:sq_preserves_approx}
The proof of Lemma \ref{lem:sq} is adapted from Cheng, Cheng, Liu, Peng, and Teng \cite{cheng2015}, which uses ideas from the work of Miller and Peng \cite{miller2013approximate}. We present these arguments here for the sake of completeness. 
\begin{claim}
\label{claim1}
Let $N$ and $\tilde{N}$ be symmetric matrices. If $I-\tilde{N}\approx_{\epsilon}I-N$ and $N$ is PSD then $I+\tilde{N}\approx_{\epsilon}I+N$
\end{claim}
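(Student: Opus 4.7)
The plan is to unpack the spectral approximation hypothesis in Loewner-order form, read off two-sided bounds on $\tilde N - N$, and then use positive semidefiniteness of $N$ to convert those bounds into bounds involving $I + N$ rather than $I - N$.

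First I would rewrite the hypothesis $I - \tilde N \approx_{\epsilon} I - N$ as the Loewner sandwich
\[
(1-\epsilon)(I-N) \;\preceq\; I - \tilde N \;\preceq\; (1+\epsilon)(I-N),
\]
and subtract $I$ and negate to get
\[
-\epsilon I + (1+\epsilon) N \;\preceq\; \tilde N \;\preceq\; \epsilon I + (1-\epsilon) N.
\]
Equivalently, letting $E = \tilde N - N$,
\[
-\epsilon I + \epsilon N \;\preceq\; E \;\preceq\; \epsilon I - \epsilon N,
\]
so $-\epsilon(I - N) \preceq E \preceq \epsilon(I - N)$.

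The goal $I + \tilde N \approx_{\epsilon} I + N$ is, after similar rearrangement, exactly
\[
-\epsilon(I + N) \;\preceq\; E \;\preceq\; \epsilon(I + N).
\]
So the remaining task is purely a monotonicity statement: from $-\epsilon(I-N) \preceq E \preceq \epsilon(I-N)$ together with $N \succeq 0$, deduce the same bounds with $I-N$ replaced by $I+N$. For the upper bound, note $\epsilon(I+N) - \epsilon(I-N) = 2\epsilon N \succeq 0$, hence $E \preceq \epsilon(I-N) \preceq \epsilon(I+N)$. For the lower bound, $-\epsilon(I-N) - (-\epsilon(I+N)) = 2\epsilon N \succeq 0$, hence $E \succeq -\epsilon(I-N) \succeq -\epsilon(I+N)$. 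This yields the desired approximation.

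There is essentially no obstacle here: the only ingredient beyond rearranging the definition of $\approx_\epsilon$ is the observation that $N \succeq 0$ implies $\epsilon N \succeq 0$, which is exactly where the PSD hypothesis on $N$ gets used. (Without $N \succeq 0$ the claim would fail, since, e.g., negative eigenvalues of $N$ would make the bound $\epsilon(I+N)$ weaker than $\epsilon(I-N)$ and the inclusion would reverse.)
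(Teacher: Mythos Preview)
Your proof is correct and essentially identical to the paper's: both rewrite the hypothesis as the two-sided Loewner bound $-\epsilon(I-N)\preceq \tilde N - N \preceq \epsilon(I-N)$ and then use $N\succeq 0$ (equivalently $I-N\preceq I+N$) to enlarge the bounds to $\pm\epsilon(I+N)$. The paper phrases the key monotonicity step as ``since $N$ is PSD, $I-N\preceq I+N$,'' whereas you spell out $2\epsilon N\succeq 0$; these are the same observation.
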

\begin{proof}
By definition, the hypothesis is equivalent to 
\[
-\epsilon\cdot (I-N)\preceq(I-\tilde{N})-(I-N)\preceq \epsilon\cdot(I-N),
\]
which is equivalent to 
\[
-\epsilon\cdot (I-N)\preceq(I+\tilde{N})-(I+N)\preceq \epsilon\cdot(I-N).
\]
Since $N$ is PSD, we have $I-N\preceq I+N$. Thus we have
\[
-\epsilon\cdot(I+N)\preceq (I+N)-(I+\tilde{N})\preceq\epsilon\cdot (I+N),
\]
which is equivalent to $I+\tilde{N}\approx_{\epsilon}I+N$.
\end{proof}
\begin{claim}
\label{claim2}
If $N$ and $\tilde{N}$ are symmetric matrices such that $I-\tilde{N}\approx_{\epsilon}I-N$ and $I+\tilde{N}\approx_{\epsilon}I+N$ then
\[
\left[
\begin{array}{c c}
I & -\tilde{N} \\
-\tilde{N} & I
\end{array}
\right] \approx_{\epsilon}\left[
\begin{array}{c c}
I & -N \\
-N & I
\end{array}
\right]
\]
\end{claim}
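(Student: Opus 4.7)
The plan is to block-diagonalize the $2\times 2$ block matrices on either side of the claim. Define
\[
P = \frac{1}{\sqrt{2}}\begin{pmatrix} I & I \\ I & -I \end{pmatrix},
\]
which is orthogonal ($P^T P = I_{2n}$, and in fact $P^T = P$). A direct calculation shows that for any symmetric $M\in\mathbb{R}^{n\times n}$,
\[
P^T \begin{pmatrix} I-M & 0 \\ 0 & I+M \end{pmatrix} P = \begin{pmatrix} I & -M \\ -M & I \end{pmatrix}.
\]

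First I would apply this identity with $M=N$ and with $M=\tilde N$. Since $P$ is invertible, Proposition~\ref{prop:psdfacts} Part~3 (applied forward, and in reverse using $P^{-1}=P^T$) reduces the desired conclusion to the block-diagonal statement
\[
\begin{pmatrix} I-\tilde{N} & 0 \\ 0 & I+\tilde{N} \end{pmatrix} \approx_\epsilon \begin{pmatrix} I-N & 0 \\ 0 & I+N \end{pmatrix}.
\]

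This block-diagonal approximation follows immediately from the two hypotheses. For any $(u^T,v^T)^T\in\mathbb{R}^{2n}$, the quadratic form on the left equals $u^T(I-\tilde{N})u + v^T(I+\tilde{N})v$, and similarly on the right with $N$ in place of $\tilde{N}$. The hypotheses $I-\tilde{N}\approx_\epsilon I-N$ and $I+\tilde{N}\approx_\epsilon I+N$ bound each summand between $1-\epsilon$ and $1+\epsilon$ times its counterpart, and summing the two (nonnegative) inequalities preserves the sandwich. Equivalently, one can invoke Proposition~\ref{prop:psdfacts} Part~5 applied to the two block-diagonal embeddings of the individual approximations.

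The argument is essentially a one-line calculation once the block-diagonalization identity above is spotted, so I do not anticipate any real obstacle. The only mildly subtle points are that symmetry of $M$ is what makes the orthogonal conjugation produce the symmetric off-diagonal block $-M$, and that both $I-N$ and $I+N$ are PSD under the hypotheses (so the component approximations are meaningful in the sense of Definition~\ref{def:spectralapprox}); the latter holds because Claim~\ref{claim1} ensures $I+\tilde N\approx_\epsilon I+N$ from the given data, and because $I-N$ being PSD was already built into the statement $I-\tilde N\approx_\epsilon I-N$.
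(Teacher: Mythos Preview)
Your proposal is correct and is essentially the same argument as the paper's: the paper writes the quadratic form identity
\[
\begin{pmatrix} x \\ y \end{pmatrix}^{T}\begin{pmatrix} I & -U \\ -U & I \end{pmatrix}\begin{pmatrix} x \\ y \end{pmatrix} = \tfrac{1}{2}\bigl((x+y)^{T}(I-U)(x+y) + (x-y)^{T}(I+U)(x-y)\bigr),
\]
which is exactly your conjugation identity $P^T\operatorname{diag}(I-U,\,I+U)P = \begin{pmatrix} I & -U \\ -U & I \end{pmatrix}$ read off at the level of quadratic forms. The remaining step---combining the two hypotheses---is handled identically in both arguments.
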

\begin{proof}
We follow the proof of Lemma 4.4 in \cite{cheng2015}. For all vectors $x,y$ and symmetric matrices $U$, 
\[
\left[
\begin{array}{c}
x \\
y
\end{array}\right]^{T}\left[
\begin{array}{c c}
I & -U \\
-U & I
\end{array}
\right]\left[
\begin{array}{c}
x \\
y
\end{array}\right] = \frac{1}{2}\cdot\left((x+y)^{T}(I-U)(x+y) + (x-y)^T(I+U)(x-y) \right).
\]
The claim follows by taking $U=N$ and $U=\tilde{N}$ and our assumptions that $I-\tilde{N}\approx_{\epsilon}I-N$ and $I+\tilde{N}\approx_{\epsilon}I+N$.
\end{proof}
\begin{claim}
\label{claim3}
Let $N$ be a symmetric matrix. Then
\[
x^T(I-N^{2})x = \min_{y}\left[
\begin{array}{c}
x \\
y
\end{array}\right]^{T}\left[
\begin{array}{c c}
I & -N \\
-N & I
\end{array}
\right]\left[
\begin{array}{c}
x \\
y
\end{array}\right] 
\]
\end{claim}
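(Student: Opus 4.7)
The plan is to expand the block quadratic form, then minimize over $y$ by completing the square. Writing out the product on the right-hand side gives
\[
\left[\begin{array}{c} x \\ y \end{array}\right]^{T}\left[\begin{array}{cc} I & -N \\ -N & I \end{array}\right]\left[\begin{array}{c} x \\ y \end{array}\right] = x^{T}x - 2\,x^{T}Ny + y^{T}y,
\]
using that $N$ is symmetric so the two off-diagonal contributions combine into $-2x^{T}Ny$. The goal is then to minimize the scalar function $f(y) = y^{T}y - 2x^{T}Ny + x^{T}x$ over $y$.

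The key observation is the algebraic identity
\[
y^{T}y - 2\,x^{T}Ny + x^{T}N^{2}x = (y - Nx)^{T}(y - Nx),
\]
which lets me rewrite $f(y) = \|y - Nx\|^{2} + x^{T}(I - N^{2})x$. Since $\|y - Nx\|^{2} \geq 0$ with equality exactly at $y = Nx$, the minimum of $f(y)$ is $x^{T}(I - N^{2})x$, achieved at $y^{*} = Nx$. This is precisely the Schur complement identity specialized to the block matrix in question.

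There is no real obstacle here — the proof is essentially a two-line calculation (expand the quadratic form, then complete the square in $y$). The only subtlety worth flagging is that $N$ does not need to be PSD for this identity to hold; symmetry of $N$ is used to combine the cross terms, and the minimization is well-defined regardless of the sign of $I - N^{2}$ because the dependence on $y$ is purely through the PSD quadratic $\|y - Nx\|^{2}$. The output of the claim is later used in combination with Claim~\ref{claim2} and Proposition~\ref{prop:psdfacts} to deduce Lemma~\ref{lem:sq}, since minimizing the quadratic form on both sides of the approximation preserves the spectral approximation relation.
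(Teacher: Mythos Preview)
Your proof is correct and essentially the same as the paper's: both expand the block quadratic form and minimize over $y$ to find the minimizer $y^{*}=Nx$. The only cosmetic difference is that you complete the square directly to get $\|y-Nx\|^{2}+x^{T}(I-N^{2})x$, whereas the paper bounds the cross term via Cauchy--Schwarz and then minimizes a scalar quadratic in $\|y\|$; your route is slightly more direct but otherwise identical in content.
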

\begin{proof}
This claim follows from Lemma B.2 in Miller and Peng \cite{miller2013approximate}, where it is stated in terms of the Schur complement. We modify the argument here without appealing to the Schur complement. 
\begin{align*}
\left[
\begin{array}{c}
x \\
y
\end{array}\right]^{T}\left[
\begin{array}{c c}
I & -N \\
-N & I
\end{array}
\right]\left[
\begin{array}{c}
x \\
y
\end{array}\right]  &= \|x\|^2 + \|y\|^2 - 2\cdot  \langle {y,Nx}\rangle\\
&\geq \|x\|^2 + \|y\|^2 - 2\cdot \|y\|\cdot \|Nx\| \\
&\geq \|x\|^2 - \|Nx\|^2 \\
&=x^{T}(I-N^2)x.
\end{align*}
The first inequality follows from Cauchy-Schwarz and is tight if and only if $y=c\cdot Nx$ for some scalar $c$. The second inequality follows from the fact that $z^2-2\cdot\|Nx\|\cdot z$ is minimized at $z=\|Nx\|$. So equality is achieved when $y=Nx$.
\end{proof}
Now we can prove Lemma \ref{lem:sq}.
\begin{namedtheorem}[Lemma \ref{lem:sq} (restated)]
Let $N$ and $\tilde{N}$ be symmetric matrices such that $I-\tilde{N}\approx_{\epsilon} I-N$ and $N$ is PSD, then $I-\tilde{N}^2 \approx_{\epsilon} I-N^2$.
\end{namedtheorem}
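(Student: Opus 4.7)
The plan is to assemble the lemma directly from the three preparatory claims, which do most of the heavy lifting. Given the hypotheses $I-\tilde N \approx_\epsilon I-N$ and $N$ PSD, my first step would be to invoke Claim \ref{claim1} to conclude $I+\tilde N \approx_\epsilon I+N$. Feeding both approximations into Claim \ref{claim2} then produces the lifted $2\times 2$ block approximation
\[
\begin{pmatrix} I & -\tilde N \\ -\tilde N & I \end{pmatrix} \approx_\epsilon \begin{pmatrix} I & -N \\ -N & I \end{pmatrix}.
\]

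Next, I would descend from this $2n$-dimensional approximation back to an $n$-dimensional statement about $I-\tilde N^2$ and $I-N^2$ using the variational formula of Claim \ref{claim3}. Fix an arbitrary $x \in \mathbb{R}^n$ and write $Q_U(x,y) := [x;y]^T \bigl(\begin{smallmatrix} I & -U \\ -U & I \end{smallmatrix}\bigr) [x;y]$. For the upper bound, I would plug in the minimizer $y^\star = Nx$ supplied by Claim \ref{claim3} for the non-tilde problem: the block approximation gives $Q_{\tilde N}(x,Nx) \le (1+\epsilon)\, Q_N(x,Nx) = (1+\epsilon)\, x^T(I-N^2)x$, and since $x^T(I-\tilde N^2)x = \min_y Q_{\tilde N}(x,y) \le Q_{\tilde N}(x,Nx)$, the upper inequality follows. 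For the lower bound I would argue pointwise instead: for every $y$, the block approximation yields $Q_{\tilde N}(x,y) \ge (1-\epsilon)\, Q_N(x,y) \ge (1-\epsilon)\, x^T(I-N^2)x$ by Claim \ref{claim3}, and minimizing the left-hand side over $y$ gives $x^T(I-\tilde N^2)x \ge (1-\epsilon)\, x^T(I-N^2)x$.

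The place where some care is needed, and where the PSD hypothesis on $N$ is genuinely used, is Claim \ref{claim1}: without $N \succeq 0$ one cannot pass from the $I-N$ approximation to the $I+N$ approximation, and then the symmetric block lift in Claim \ref{claim2} is not available. Once the block approximation is in hand, however, the main idea of the final descent is the asymmetric use of Claim \ref{claim3}: the optimal $y$ for the non-tilde problem serves as a feasible certificate for the tilde problem (giving the upper bound), while the pointwise $(1-\epsilon)$ inequality together with the non-tilde variational identity handles the lower bound. I expect no further obstacles beyond bookkeeping, since all three preparatory claims are already established in the excerpt.
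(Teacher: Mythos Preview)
Your proposal is correct and follows essentially the same route as the paper: invoke Claim~\ref{claim1} to get $I+\tilde N\approx_\epsilon I+N$, lift via Claim~\ref{claim2} to the $2\times 2$ block approximation, and then descend using the variational identity of Claim~\ref{claim3}. The only cosmetic difference is in the lower bound: the paper plugs in the specific minimizer $y=\tilde N x$ for the tilde problem and then uses the block approximation together with the min property for $N$, whereas you argue pointwise over all $y$ before minimizing; both are equivalent and valid.
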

\begin{proof}
Let 
$
\mathbf{N}\coloneqq \left[
\begin{array}{c c}
I & -N \\
-N & I
\end{array}
\right] 
$
and
$
\tilde{\mathbf{N}}\coloneqq \left[
\begin{array}{c c}
I & -\tilde{N} \\
-\tilde{N} & I
\end{array}
\right] .
$ From our assumptions, it follows from Claim \ref{claim1} that $I+\tilde{N}\approx_{\epsilon}I+N$ and hence that $\tilde{\mathbf{N}}\approx_{\epsilon}\mathbf{N}$ by Claim \ref{claim2}. Combining this with Claim \ref{claim3} gives that for all vectors $x$,
\begin{align*}
(1-\epsilon)\cdot x^T(I-N)x&\leq (1-\epsilon)\cdot \left[
\begin{array}{c}
x \\
\tilde{N}x
\end{array}\right]^{T}\mathbf{N}\left[
\begin{array}{c}
x \\
\tilde{N}x
\end{array}\right]\\
&\leq \left[
\begin{array}{c}
x \\
\tilde{N}x
\end{array}\right]^{T}\tilde{\mathbf{N}}\left[
\begin{array}{c}
x \\
\tilde{N}x
\end{array}\right]\\
&=x^{T}(I-\tilde{N}^2)x.
\end{align*}
Similarly:
\begin{align*}
(1+\epsilon)\cdot x^T(I-N^2)x&= (1+\epsilon)\cdot \left[
\begin{array}{c}
x \\
Nx
\end{array}\right]^{T}\mathbf{N}\left[
\begin{array}{c}
x \\
Nx
\end{array}\right]\\
&\geq \left[
\begin{array}{c}
x \\
Nx
\end{array}\right]^{T}\tilde{\mathbf{N}}\left[
\begin{array}{c}
x \\
Nx
\end{array}\right]\\
&\geq x^{T}(I-\tilde{N}^2)x.
\end{align*}
So $I-\tilde{N}^2\approx_{\epsilon}I-N^{2}$.
\end{proof}
\end{document}